\newcommand{\myparagraph}[1]{\paragraph*{#1.}}
\title{Constant-Delay Enumeration for Nondeterministic Document Spanners}
\author{Antoine Amarilli}
\affiliation{\institution{LTCI}}
\affiliation{\institution{Télécom Paris}}
\affiliation{{\institution{Institut polytechnique de Paris}}}
\author{Pierre Bourhis}
\affiliation{\institution{CRIStAL}}
\affiliation{\institution{CNRS UMR 9189}}
\affiliation{\institution{Inria Lille}}
\author{Stefan Mengel}
\affiliation{\institution{CRIL, CNRS \& Univ Artois}}
\author{Matthias Niewerth}
\affiliation{\institution{University of Bayreuth}}
\newcommand*{\defeq}{\mathrel{\rlap{%
  \raisebox{0.3ex}{$\m@th\cdot$}}%
  \raisebox{-0.3ex}{$\m@th\cdot$}}%
  =}
\newcommand{\card}[1]{\left|{#1}\right|}
\newcommand{\calK}{\mathcal{K}}
\newcommand{\calM}{\mathcal{M}}
\newcommand{\NN}{\mathbb{N}}
\newcommand{\NP}{\textsc{NP}}
\renewcommand{\phi}{\varphi}
\newcommand{\calA}{\mathcal{A}}
\newcommand{\calV}{\mathcal{V}}
\newcommand{\f}{\mathrm{f}}
\newcommand{\var}{\mathsf{Mapping}}
\newcommand{\lvar}{\mathsf{LocMark}}
\newcommand{\Reach}{\mathsf{Reach}}
\newcommand{\node}{\mathsf{node}}
\newcommand{\Absize}{B}
\newcommand{\Width}{W}
\newcommand{\Widthfull}{W_{\mathrm{c}}}
\newcommand{\Depth}{D}
\newcommand{\lvl}{\mathsf{level}}
\newcommand{\JL}{\mathsf{JL}}
\newcommand{\rlvl}{\mathsf{Rlevel}}
\newcommand{\JS}{\normalfont{\textsc{Jump}}}
\newcommand{\spa}[2]{[#1,#2\rangle}
\newcommand{\open}[1]{#1\,{\vdash}}
\newcommand{\close}[1]{{\dashv}\,#1}
\newcommand{\nextlevel}{\textsc{NextLevel}\xspace}
\newcommand{\kUCQ}{k\mathsf{\text{-}UCQ}}
\begin{abstract}
  We consider the information extraction framework known as \emph{document spanners}, 
  and study the problem of efficiently computing the
  results of the extraction from an input document, where the extraction
  task is described as a sequential \emph{variable-set automaton} (VA).
  We pose this problem in the
  setting of enumeration algorithms, where we can first run a preprocessing phase
  and must then produce the results with a small delay between any two consecutive
  results. Our goal is to have an algorithm which is tractable in combined
  complexity, i.e., in the sizes of the input document and the VA;
  while ensuring the best possible
  data complexity bounds in the input document size, i.e., constant delay in
  the document size. Several recent works at PODS'18 proposed such algorithms but with linear
  delay in the document size or with an exponential dependency in size of the (generally
  nondeterministic) input VA. 
  In particular, Florenzano et al.\ suggest that our desired runtime guarantees cannot be met
  for general sequential VAs.
  We refute this and show that,
  given a nondeterministic sequential VA and an input
  document, we can enumerate the mappings of the VA on the document with the
  following bounds: the 
  preprocessing is linear in the document size and polynomial in the size of the
  VA, and the delay is independent of the document and polynomial in the size of
  the VA. The resulting algorithm thus achieves tractability in combined complexity and the best
  possible data complexity bounds. Moreover, it is rather easy to describe, in particular for 
  the restricted case of so-called extended VAs. Finally, we evaluate our
  algorithm empirically using a prototype implementation.
\end{abstract}
\begin{document}
\newtheorem{claim}[theorem]{Claim}

\maketitle

\section{Introduction}
Information extraction from text documents is an important problem in data management.
One approach to this task has recently attracted a lot of attention: it uses \emph{document spanners},
a declarative logic-based approach first implemented by IBM in their 
tool SystemT~\cite{systemT} and whose core semantics has then been formalized in~\cite{FaginKRV15}.
The spanner approach 
uses variants of regular expressions (e.g.~\emph{regex-formulas} with
variables), compiles them
to variants of finite automata (e.g., \emph{variable-set
automata}, for short \emph{VAs}),
and evaluates them on the input document to extract the data of interest.
After this extraction phase, algebraic operations like joins, unions and projections 
can be performed.
The formalization of the spanner framework in~\cite{FaginKRV15} has led to a
thorough investigation of its properties by the
theoretical database community \cite{Freydenberger17,FreydenbergerKP18,MaturanaRV18,FreydenbergerH18,FlorenzanoRUVV18}.

We here consider the basic task in the spanner framework of efficiently computing the results
of the extraction, i.e., computing without duplicates all tuples of ranges of the input
document (called \emph{mappings}) that satisfy the conditions described by a VA.
As many algebraic operations 
can also be compiled into VAs~\cite{FreydenbergerKP18},
this task actually solves the whole data extraction problem for so-called \emph{regular spanners}~\cite{FaginKRV15}.
While the extraction task is intractable for general VAs~\cite{Freydenberger17}, 
it is known to be tractable if we impose that the VA is \emph{sequential}~\cite{FreydenbergerKP18,FlorenzanoRUVV18},
which requires that all accepting runs describe a 
well-formed mapping; we will make this assumption throughout our work.
Even then, however, it may still be unreasonable in practice to materialize all mappings:
if there are $k$ variables to extract, then mappings are $k$-tuples and there
may be up to $n^k$ mappings on an input document of size~$n$, which is
unrealistic if~$n$ is large.
For this reason, recent works~\cite{MaturanaRV18,FlorenzanoRUVV18,FreydenbergerKP18} 
have studied the extraction task
in the setting of \emph{enumeration algorithms}: instead of
materializing all mappings, we
enumerate them one by one 
while ensuring that the \emph{delay} between two results is always small.
Specifically, \cite[Theorem~3.3]{FreydenbergerKP18} has shown how to enumerate
the mappings with delay linear in the input document and quadratic in the
VA, i.e., given a document~$d$ and a 
functional VA~$A$ (a subclass of sequential VAs),
the delay is
$O(\card{A}^2 \times \card{d})$.

Although this result ensures tractability in both the size of the input document and the automaton, 
the delay may still be long as~$\card{d}$ is generally very large. By contrast, enumeration
algorithms for database tasks
often enforce stronger
tractability guarantees in data complexity~\cite{Segoufin14,Wasa16}, in
particular \emph{linear preprocessing} and \emph{constant delay}
(when measuring complexity
in the RAM model with uniform cost measure~\cite{AhoHU74}).
Such algorithms consist of two phases: 
a \emph{preprocessing phase} which precomputes an index data structure in linear
data complexity, %
and an \emph{enumeration phase} which produces all results so that the delay
between any two consecutive results is always \emph{constant}, i.e., independent
from the input data.
It was recently shown in~\cite{FlorenzanoRUVV18} that this strong guarantee could be achieved when
enumerating the mappings of VAs if we only focus on data complexity, i.e., 
for any \emph{fixed} 
VA, we can enumerate its
mappings with linear preprocessing and constant delay in the input document.
However, the preprocessing and delay
in~\cite{FlorenzanoRUVV18}
are exponential in the VA
because they first determinize it~\cite[Propositions~4.1~and~4.3]{FlorenzanoRUVV18}.
This is problematic because the VAs constructed from regex-formulas~\cite{FaginKRV15} are generally nondeterministic.

Thus, to efficiently enumerate the results of the extraction, we would ideally
want to have the best of both worlds:
ensure that the \emph{combined complexity} (in the sequential VA and in the document) remains polynomial,
while ensuring that the \emph{data complexity} (in the document) is as small as
possible, i.e., linear time for the preprocessing phase and constant time for the delay of the enumeration phase. 
However, up to now, there was no known algorithm to satisfy these requirements while working on nondeterministic sequential VAs.
Further, it was conjectured that such an algorithm is 
unlikely to exist~\cite{FlorenzanoRUVV18} because the 
related task of \emph{counting} the number of mappings is \textsc{SpanL}-hard
for such VAs.

The question of nondeterminism is also unsolved for the related problem of enumerating the results of monadic second-order (MSO) queries on words and trees:
there are several approaches for this task where the query is given as an automaton,
but they require the automaton to be deterministic~\cite{bagan2006mso,amarilli2017circuit_extended}
or their delay is not constant in the input document~\cite{losemann2014mso}.
Hence, also in the context of MSO enumeration, it is not known whether we can achieve linear preprocessing and constant delay in data complexity
while remaining tractable in the (generally non-deterministic) automaton. The
result that we show in the present paper implies that we can achieve
this for MSO queries on words when all free variables are first-order, with the
query being represented as a generally non-deterministic sequential VA, or as a
sequential regex-formula with capture variables: note that an extension to trees
is investigated in our follow-up work~\cite{amarilli2019enumeration}.

\myparagraph{Contributions}
In this work, 
we show that nondeterminism is in fact not an obstacle 
to enumerating the results of document spanners: we present an algorithm that enumerates the mappings of a nondeterministic sequential VA in polynomial combined complexity 
while ensuring linear preprocessing and constant delay in the input document. This answers the open question of~\cite{FlorenzanoRUVV18}, and improves on the bounds of~\cite{FreydenbergerKP18}. More precisely, we show:

\begin{theorem}
  \label{thm:main}
  Let $2 \leq \omega \leq 3$ be an exponent for Boolean matrix multiplication.
  Let $\calA$ be a sequential VA with variable set $\calV$ and with
  state set~$Q$, and let $d$ be an input document. We can
  enumerate the mappings
  of~$\calA$ on~$d$ with preprocessing time in
  $O((\card{Q}^{\omega+1} + \card{\calA}) \times \card{d})$ and with delay
  $O(\card{\calV} \times (\card{Q}^2 + \card{\calA} \times \card{\calV}^2))$, i.e.,
  linear preprocessing and constant delay in the input
  document, and polynomial preprocessing and delay in the input VA.
\end{theorem}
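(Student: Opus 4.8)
The plan is to reduce the enumeration problem to a series of reachability computations on a product structure, and to handle nondeterminism not by determinization (which is what causes the exponential blowup in \cite{FlorenzanoRUVV18}) but by working directly with sets of states. Concretely, I would first build the product of the sequential VA $\calA$ with the input document $d$: this is a DAG whose nodes are pairs $(q,i)$ of a state $q\in Q$ and a position $i$ in $d$, and whose edges correspond to letter transitions (reading $d[i]$) and to variable-marker transitions ($x\vdash$ and $\dashv x$). A mapping of $\calA$ on $d$ then corresponds to an accepting path in this DAG, and the sequentiality assumption guarantees that every accepting path is well-formed, i.e., actually describes a valid mapping. The key object to precompute is, for each position $i$, the set of states from which some accepting run exists that has already seen a given set of markers; but since there are exponentially many marker sets, I would instead quotient the DAG so that all the paths between two consecutive ``marker levels'' are collapsed, keeping only the information of which state-to-state jumps are possible and which set of variables gets opened/closed along the way.

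The main technical device, I expect, is a \emph{jump function} on the product DAG: after the preprocessing we want, from any node $(q,i)$, to be able to skip in constant (document-independent) time to the next position where ``something interesting happens'' for the current partial mapping, i.e., the next position at which a variable marker can be placed on some accepting run compatible with the choices made so far. Computing this jump function is where the Boolean matrix multiplication exponent $\omega$ enters: the transitive closure of the letter-transition relation restricted between two marker levels is essentially a Boolean matrix power over $Q$, and doing this for all $\card{d}$ positions costs $O(\card{Q}^{\omega}\times\card{d})$, with an extra factor of $\card{Q}$ (hence the $\card{Q}^{\omega+1}$) to record, for each reachable target state, a witness or the induced marker behaviour. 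The $O(\card{\calA})$ additive term in the preprocessing just accounts for reading the automaton and the $O(\card{d})$ for reading the document and building the product.

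For the enumeration phase, I would organise the mappings in the natural lexicographic order on the tuple of marker positions and enumerate them by a depth-first traversal of a tree of partial mappings: at each node of this tree we have fixed the positions of a prefix of the markers, and the children correspond to the possible positions of the next marker, which we reach via the jump function. The delay between two consecutive full mappings is then bounded by the depth of this tree times the cost of one step. The depth is $O(\card{\calV})$ since there are $2\card{\calV}$ markers; one step requires updating the set of ``alive'' states (those still lying on an accepting run given the current prefix) and testing nonemptiness, which costs $O(\card{Q}^2 + \card{\calA}\times\card{\calV}^2)$ using the precomputed reachability data; this gives the claimed delay $O(\card{\calV}\times(\card{Q}^2 + \card{\calA}\times\card{\calV}^2))$. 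The subtle points, and where I expect the real work to be, are: (i) ensuring that the jump function skips over genuinely uninteresting positions so that the delay truly does not depend on $\card{d}$, which requires the precomputed structure to encode, between consecutive marker levels, exactly which state pairs are connected and with what marker side-effects; (ii) maintaining during the DFS a compact representation of the set of states consistent with the current partial mapping so that nonemptiness checks and updates stay document-independent; and (iii) guaranteeing that no mapping is produced twice, which follows from enumerating in a fixed lexicographic order but needs care at the marker levels where several variables are opened or closed simultaneously. I would first prove the result for the simpler class of extended VAs, where markers are less constrained, and then lift it to general sequential VAs by a normalisation argument.
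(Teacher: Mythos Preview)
Your overall architecture is essentially the paper's: build the product DAG, trim it, precompute a jump function via Boolean reachability matrices, and enumerate by a DFS that maintains a level set of ``alive'' states. Two points deserve correction, one minor and one a genuine gap.

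\textbf{Minor.} Your explanation of the extra $\card{Q}$ factor in $\card{Q}^{\omega+1}$ is off. In the paper, the jump preprocessing computes, for each level $i$, the matrices $\Reach(i,j)$ only for the $j$ in a set $\rlvl(i)$ of reachable ``interesting'' levels; crucially $\card{\rlvl(i)}\leq\card{Q}$ because it is determined by the distinct jump levels of the at most $\card{Q}$ vertices at level~$i$. So the $\omega{+}1$ is $\card{Q}^\omega$ per matrix times at most $\card{Q}$ matrices per level---not an extra cost to ``record a witness''.

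\textbf{The real gap.} Your last sentence proposes to ``lift [the result] to general sequential VAs by a normalisation argument'' after handling extended VAs. This is exactly what does \emph{not} work: converting a sequential VA to an extended sequential VA necessarily incurs an exponential blowup (this is \cite[Proposition~4.2]{FlorenzanoRUVV18}, recalled in the paper), so any normalisation-based lifting destroys the polynomial combined complexity you are after. The paper instead works with non-extended VAs directly and replaces the simple $\nextlevel$ enumeration by a \emph{flashlight search}: at a given position, one must enumerate without duplicates all sets $S^+$ of markers that label some sequence of variable transitions from the current level set to a state with an outgoing letter transition. Many different transition sequences may yield the same $S^+$, so naive enumeration produces duplicates. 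The fix is to walk a binary decision tree over the markers, pruning branches by a PTIME test (Lemma~6.3 in the paper) that decides whether a partial $(S^+,S^-)$ constraint is still extendable to an actual path; this test is what contributes the $\card{\calA}\times\card{\calV}^2$ term in the delay. Your item~(iii) flags the duplicate issue but offers no mechanism, and your ``normalisation'' plan would not yield one within the stated bounds.
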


The existence of such an algorithm is surprising but in hindsight not entirely
unexpected: remember that, in formal language theory, when we are given a word
and a nondeterministic finite automaton, then we can evaluate the automaton on
the word with tractable combined complexity by determinizing the automaton ``on
the fly'', i.e., computing at each position of the word the set of states where
the automaton can be.
Our algorithm generalizes this intuition, and extends it to the task of
enumerating mappings without duplicates: we first present it for so-called
\emph{extended sequential VAs}\footnote{Note that, contrary to what the
terminology suggests, VAs are not special cases of extended VAs.
Further, while extended VAs can be converted in PTIME to VAs, the
converse is not true as there are extended VAs for which the smallest
equivalent VA has exponential size~\cite{FlorenzanoRUVV18}.}, a variant of sequential VAs introduced
in~\cite{FlorenzanoRUVV18}, before generalizing it to sequential VAs.
Our overall approach is to construct a kind of product of the input document
with the extended VA, similarly to~\cite{FlorenzanoRUVV18}. We then use several tricks to ensure the constant delay bound despite nondeterminism; in particular we precompute a \emph{jump function} that allows us to skip quickly the parts of the document where no variable can be assigned. The resulting algorithm is rather simple and has no large hidden
constants.
Note that our enumeration algorithm does not contradict the counting
hardness results of~\cite[Theorem~5.2]{FlorenzanoRUVV18}: while our
algorithm \emph{enumerates} mappings with constant delay and without
duplicates, we do not see a way to adapt it to \emph{count} the
mappings efficiently.  This is similar to the enumeration and counting problems
for maximal cliques: one can enumerate maximal cliques with
polynomial delay~\cite{tsukiyama1977new}, but counting them is
\#P-hard~\cite{valiant1979complexity}.

To extend our result to sequential VAs that are not extended, one possibility
would be to convert them to extended VAs, but this necessarily entails an exponential
blowup \cite[Proposition~4.2]{FlorenzanoRUVV18}. We avoid this by adapting our
algorithm to work with non-extended sequential VAs directly.
Our idea for this is to efficiently enumerate 
at each position
the possible sets of markers that can be assigned by the VA: we do so by
enumerating paths in the VA, relying on the fact that the VA is sequential so
these paths are acyclic. The challenge is that the same set of markers can be
captured by many different paths, but we explain how we can explore efficiently
the set of distinct paths with a technique known as \emph{flashlight
search}~\cite{mary2016efficient,read1975bounds}: the key idea is that we can
efficiently determine which partial sets of markers can be extended to the label
of a path (Lemma~\ref{lem:extendToPath}). 

Of course, our main theorem (Theorem~\ref{thm:main}) implies analogous results
for all spanner formalisms that can be translated to sequential VAs.
In particular, spanners are not usually written as automata by users, but
instead given in a form of regular expressions called \emph{regex-formulas},
see~\cite{FaginKRV15} for exact definitions. 
As we can translate sequential regex-formulas to sequential VAs in linear time~\cite{FaginKRV15,FreydenbergerKP18,MaturanaRV18}, our results imply that we
can also evaluate them:

\begin{corollary}
  Let $2 \leq \omega \leq 3$ be an exponent for Boolean matrix multiplication.
  Let $\phi$ be a sequential regex-formula with variable set $\calV$, and let $d$ be an input document. We can
  enumerate the mappings
  of~$\phi$ on~$d$ with preprocessing time in
  $O(\card{\phi}^{\omega+1} \times \card{d})$ and with delay
  $O(\card{\calV} \times (\card{\phi}^2 + \card{\phi} \times \card{\calV}^2))$, i.e.,
  linear preprocessing and constant delay in the input
  document, and polynomial preprocessing and delay in the input regex-formula.
\end{corollary}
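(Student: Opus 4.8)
The plan is to derive the corollary directly from Theorem~\ref{thm:main} by composing the known linear-time translation from sequential regex-formulas to sequential VAs with the enumeration algorithm of the theorem. First I would invoke the standard compilation result (e.g.\ \cite{FaginKRV15,FreydenbergerKP18,MaturanaRV18}), which states that from a sequential regex-formula $\phi$ with variable set $\calV$ one can build, in time $O(\card{\phi})$, a sequential VA $\calA_\phi$ that is equivalent to $\phi$, meaning it has exactly the same mappings on every document $d$. The key quantitative point is that this translation produces a VA whose total size $\card{\calA_\phi}$ is $O(\card{\phi})$ and, in particular, whose state set $Q$ satisfies $\card{Q} = O(\card{\phi})$, while the variable set is unchanged (still $\calV$). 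I would state these size bounds explicitly, since the whole content of the corollary is just tracking them through the bounds of Theorem~\ref{thm:main}.

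Next I would feed $\calA_\phi$ and $d$ into the algorithm of Theorem~\ref{thm:main}. Its preprocessing time is $O((\card{Q}^{\omega+1} + \card{\calA_\phi}) \times \card{d})$; substituting $\card{Q} = O(\card{\phi})$ and $\card{\calA_\phi} = O(\card{\phi})$, the term $\card{Q}^{\omega+1}$ dominates $\card{\calA_\phi}$ since $\omega \geq 2$, so this simplifies to $O(\card{\phi}^{\omega+1} \times \card{d})$. The cost of the translation itself, $O(\card{\phi})$, is absorbed into this bound. For the delay, Theorem~\ref{thm:main} gives $O(\card{\calV} \times (\card{Q}^2 + \card{\calA_\phi} \times \card{\calV}^2))$, which under the same substitutions becomes $O(\card{\calV} \times (\card{\phi}^2 + \card{\phi} \times \card{\calV}^2))$, exactly as claimed. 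Since the mappings of $\calA_\phi$ on $d$ coincide with those of $\phi$ on $d$, the enumeration is correct and without duplicates, with $d$-independent delay.

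There is essentially no obstacle here: the only thing to be careful about is that the cited translation is indeed \emph{linear} in $\card{\phi}$ (both in running time and in the size of the output VA) and that it preserves sequentiality, so that Theorem~\ref{thm:main} applies to $\calA_\phi$. One should also double-check that the translation does not introduce auxiliary variables, so that the parameter $\card{\calV}$ in the delay bound is genuinely the number of variables of $\phi$. Both facts are standard and are exactly what the references \cite{FaginKRV15,FreydenbergerKP18,MaturanaRV18} establish, so the proof is a short and routine composition argument.
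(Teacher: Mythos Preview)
Your proposal is correct and matches the paper's approach exactly: the paper does not give a separate proof of this corollary but simply states, immediately before it, that sequential regex-formulas can be translated to sequential VAs in linear time (citing \cite{FaginKRV15,FreydenbergerKP18,MaturanaRV18}), so the corollary follows by plugging the resulting size bounds into Theorem~\ref{thm:main}. Your write-up in fact spells out more detail than the paper does.
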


Another direct application of our result is for so-called \emph{regular spanners} which are unions of conjunctive queries (UCQs) posed on regex-formulas, i.e., the closure of regex-formulas under union, projection and joins. We again point the reader to~\cite{FaginKRV15,FreydenbergerKP18} for the full definitions. 
As such UCQs can in fact be evaluated by VAs, our result also implies tractability for such representations, as long as we only perform a bounded number of joins:

\begin{corollary}
 For every fixed $k \in \NN$, let $\kUCQ$ denote the class of document spanners
  represented by UCQs over functional regex-formulas with at most $k$
  applications of the join operator. Then the mappings of a spanner in $\kUCQ$
  can be enumerated with linear preprocessing and constant delay in the document
  size, and with polynomial preprocessing and delay in the size of the spanner representation.
\end{corollary}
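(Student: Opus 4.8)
The plan is to reduce the statement to Theorem~\ref{thm:main} by compiling any representation of a spanner in $\kUCQ$ into an equivalent sequential VA $\calA$ whose number of variables is at most the size of the representation and whose size is polynomial in the representation (for $k$ fixed, the degree of the polynomial depending on $k$). Since this compilation does not look at the document $d$, running the enumeration algorithm of Theorem~\ref{thm:main} on $\calA$ and $d$ then immediately gives the claimed bounds: preprocessing linear in $\card{d}$ and polynomial in $\card{\calA}$ (hence in the representation), and delay independent of $\card{d}$ and polynomial in $\card{\calA}$ and $\card{\calV}$.

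To build $\calA$, I would use the standard closure constructions for VAs under the three operators that a $\kUCQ$ is built from, in each case checking that sequentiality is preserved and tracking the size blowup. First, each functional regex-formula compiles in linear time to an equivalent sequential VA~\cite{FaginKRV15,FreydenbergerKP18,MaturanaRV18}. Projection onto a variable subset is realized by deleting the markers of the projected-away variables: every accepting run still describes a well-formed mapping over the remaining variables, so the result is still sequential, and its size is unchanged. The natural join of two sequential VAs $A_1,A_2$ with variable sets $\calV_1,\calV_2$ is realized by a product construction on $A_1 \times A_2$ that reads the input word synchronously in both components, emits a marker for a shared variable $x \in \calV_1 \cap \calV_2$ only when both components do so simultaneously, and emits a marker for a non-shared variable in the single component that owns it; its accepting runs are exactly the pairs of accepting runs of $A_1$ and $A_2$ on the same word that agree on the spans of the shared variables, it is again sequential (project each component run to its own variables), and it has size $O(\card{A_1} \times \card{A_2})$. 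The point needing care here is that, within a single position, $A_1$ and $A_2$ may emit the shared-variable markers in incompatible relative orders; this is handled by first normalizing each VA so that at every position all opening markers precede all closing markers, which changes neither the represented spanner nor (up to a polynomial factor) its size and which the product can maintain, after which the two orders can always be interleaved. Finally, the union of sequential VAs over the same variable set is the disjoint union with a fresh initial state, which is sequential and whose size is the sum of the sizes; crucially, this step need not take any care about the same mapping being produced by several disjuncts, since Theorem~\ref{thm:main} already guarantees that each mapping of $\calA$ is enumerated exactly once.

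Putting this together: given a representation $\bigcup_i \pi_{\calV}\big(\phi_{i,1} \bowtie \cdots \bowtie \phi_{i,m_i}\big)$ with each $m_i \leq k+1$, I compile each $\phi_{i,j}$ to a sequential VA $A_{i,j}$ in linear time, form the iterated product $A_i$ of the $A_{i,j}$ (of size at most $(\max_j \card{A_{i,j}})^{k+1}$, polynomial since $k$ is fixed), project $A_i$ onto $\calV$, and take the disjoint union of the results to obtain $\calA$, which has size polynomial in the representation; then I invoke Theorem~\ref{thm:main} on $\calA$ and $d$. I expect the main obstacle to be the join construction --- in particular, the careful verification that the product automaton computes exactly the natural join and remains sequential, including the normalization of the order of variable operations within a position; the remaining steps are size bookkeeping together with the observation that duplicate elimination across the union comes for free from Theorem~\ref{thm:main}.
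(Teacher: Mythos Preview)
Your proposal is correct and follows essentially the same approach as the paper: compile the $\kUCQ$ representation into a sequential VA of size polynomial in the representation (the degree depending on~$k$), then invoke Theorem~\ref{thm:main}. The paper does not spell out a proof for this corollary; it simply observes that such UCQs ``can in fact be evaluated by VAs'' and defers to the known closure constructions in~\cite{FaginKRV15,FreydenbergerKP18}, so your explicit account of projection, product for join, and disjoint union for the outer union---together with the size bookkeeping and the remark that duplicate elimination across disjuncts comes for free from Theorem~\ref{thm:main}---is a faithful (and more detailed) elaboration of the intended argument.
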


One last contribution of this work is to present a prototype implementation of
the enumeration algorithm presented here which is available online
as open-source software\footnote{\url{https://github.com/PoDMR/enum-spanner-rs}}.
We evaluate this software experimentally for different types of queries. 
The results show that our approach can be implemented in
practice and run efficiently.

\myparagraph{Paper structure}
In Section~\ref{sec:prelim}, we formally define spanners, VAs, 
and the enumeration problem that we want to solve on them.
In Sections~\ref{sec:extended}--\ref{sec:jump},
we prove our main result (Theorem~\ref{thm:main}) for \emph{extended}
VAs, where the sets of variables that can be assigned at each position are
specified explicitly. We first describe in
Section~\ref{sec:extended} the main part of our preprocessing phase, which
converts the extended VA and input document to a \emph{mapping DAG} whose paths
describe the mappings that we wish to enumerate. We then
describe in Section~\ref{sec:enum} how to enumerate these paths, up to having
precomputed a so-called \emph{jump function} whose computation is explained in 
Section~\ref{sec:jump}.
Last, we adapt our scheme in Section~\ref{sec:flashlight} for
sequential VAs that are not extended.
We present our experimental results in
Section~\ref{sec:experiments}, and
conclude in Section~\ref{sec:conc}.

This article is an extended version of our earlier
work~\cite{amarilli2019constant}. Compared to~\cite{amarilli2019constant}, in
this work we provide complete proofs of the results, and present the new
experimental analysis of Section~\ref{sec:experiments}.


\section{Preliminaries}
\label{sec:prelim}
\myparagraph{Document spanners}
We fix a finite alphabet $\Sigma$.
A \emph{document} $d = d_0 \cdots d_{n-1}$ is just a word over~$\Sigma$.
A \emph{span} of $d$ is a pair $\spa{i}{j}$ with $0 \le i \le j \le |d|$ which
represents a substring (contiguous subsequence) of $d$ starting at position $i$ and ending at position $j-1$.
To describe the possible results of an information extraction task, we will use
a finite set $\calV$ of variables, and define a result as a \emph{mapping} from
these variables to spans of the input document. Following~\cite{FlorenzanoRUVV18,MaturanaRV18} but in contrast to~\cite{FaginKRV15}, we will not require mappings to assign all variables: formally, a
\emph{mapping} of~$\mathcal V$ on~$d$ is a function $\mu$ from some domain $\calV'
\subseteq \calV$ to spans of~$d$.
We define a \emph{document spanner} to be a function 
assigning to every input document $d$ a set of
mappings, which denotes the set of results of the extraction task on the document~$d$.

\myparagraph{Variable-set automata}
We will represent document spanners using \emph{variable-set automata} (or
\emph{VAs}).
The transitions of a VA can carry 
letters of~$\Sigma$ or
\emph{variable markers}, which are either of the form $\open x$ for a variable $x\in \mathcal V$ (denoting the start of the span assigned to~$x$) or $\close x$ (denoting its end).
Formally, a \emph{variable-set automaton} $\mathcal A$ (or VA) is then defined to be an
automaton $\mathcal A= (Q,q_0, F, \delta)$ where the transition relation
$\delta$ consists of \emph{letter transitions} of the form $(q, a, q')$ for $q,
q'\in Q$ and $a\in \Sigma$, and of \emph{variable transitions} of the form $(q,
\open x, q')$ or $(q, \close x, q')$ for $q, q'\in Q$ and $x\in \mathcal V$. A
\emph{configuration} of a VA is a pair $(q,i)$ where $q\in Q$ and $i$ is a
position of the input document~$d$. A \emph{run} $\sigma$ of $\mathcal A$ on~$d$
is then a sequence  of configurations
\[(q_0,i_0) \xrightarrow{\sigma_1} (q_1, i_1) \xrightarrow{\sigma_2} \cdots
\xrightarrow{\sigma_m} (q_m, i_m)\] where $i_0=0$, $i_m= |d|$, and where for
every $1 \leq j \leq m$, one of the following holds:
\begin{itemize}
  \item The label $\sigma_j$ is a letter of~$\Sigma$, we have $i_j = i_{j-1}+1$, we have $d_{i_{j-1}} = \sigma_j$, and $(q_{j-1}, \sigma_j, q_j)$ is a letter transition of~$\calA$;
  \item The label $\sigma_j$ is a variable marker, we have $i_j = i_{j-1}$, and $(q_{j-1}, \sigma_j, q_j)$ is a variable transition of~$\calA$. In this case we say that the variable marker $\sigma_j$ is \emph{read} at position~$i_j$.
\end{itemize}
As usual, we say that a run is \emph{accepting} if $q_m\in F$.
A run is \emph{valid} if it is accepting, every
variable marker is read at most once, if an open marker $\open x$ is read at a
position $i$ then the corresponding close marker $\close x$ is read at a
position $i'$ with $i \le i'$, and if $\open x$ is not read then $\close x$ is not
read either.
Each valid run defines a mapping on the domain $\calV'$ of the variables for
which the run has read some markers: specifically, each variable $x \in \calV'$
is mapped to the span $\spa{i}{i'}$ such that $\open x$ is read at position~$i$
and $\close x$ is read at position~$i'$.
The \emph{document spanner} of the VA $\calA$ is then 
the function that assigns to every document~$d$ the set of mappings defined by
the valid runs of $\calA$ on $d$: note that the same mapping can be defined by
multiple different runs, and note that the different runs may have different
domains.
The task studied in this paper is the following:
given a VA $\calA$ and a document $d$, enumerate \emph{without duplicates} the mappings that are assigned
to $d$ by the document spanner of~$\calA$. The enumeration must write each
mapping as a set of pairs $(m, i)$ where $m$ is a variable marker and $i$ is a
position of~$d$, each set being written as a sequence in some arbitrary order.
We will say that a set of pairs of markers and positions is
\emph{valid} when every marker occurs at most once in the set, if an open marker
$\open x$ occurs in the set as $(\open x, i)$ then the set also contains
$(\close x, i')$ with $i < i'$, and if $\open x$ does not occur in the set then
neither does $\close x$. Thus, the results of the enumeration are always valid
in this sense.
Note that we will often abuse notation and identify the
function representation of mappings defined above with this representation as a
set of pairs which is valid.

\myparagraph{Sequential VAs}
We cannot hope to efficiently enumerate the mappings of arbitrary VAs because it is already 
\NP-complete to decide if,
given a VA $\mathcal A$ and a document $d$, there are any valid runs of~$\mathcal A$
on $d$~\cite{Freydenberger17}. 
For this reason,
we will restrict ourselves to so-called \emph{sequential} VAs~\cite{MaturanaRV18}. A VA $\calA$ is
\emph{sequential} if for every document $d$, every accepting run of~$\calA$ of~$d$ is also valid: this implies that the document spanner of~$\calA$ can simply be defined following the accepting runs of~$\calA$.
If we are given a VA, then we can test in NL whether it is sequential
\cite[Proposition~5.5]{MaturanaRV18}, and otherwise we can convert it to an equivalent sequential VA
(i.e., that defines the same document spanner) with
an unavoidable exponential blowup in the number of variables (not in the number
of states), using existing results:

\begin{proposition}
  \label{prp:makesequential}
  Given a VA $\calA$ on variable set $\calV$, letting $k \colonequals \card{\calV}$ and $r$ be the number of states of~$\calA$, we can compute an equivalent sequential VA $\calA'$ with $3^k r$ states. Conversely, for any $k \in \NN$, there exists a VA $\calA_k$ with 1 state on a variable set with $k$ variables such that any sequential VA equivalent to~$\calA_k$ has at least $3^k$ states.
\end{proposition}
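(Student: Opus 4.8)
The plan is to prove the two directions separately. For the positive direction I would use the standard product construction that records in the state the \emph{status} of every variable. Take $\calA' \colonequals (Q', q_0', F', \delta')$ with $Q' \colonequals Q \times \{N,O,C\}^{\calV}$, where the second component assigns to each variable $x$ one of the statuses $N$ (neither $\open x$ nor $\close x$ read so far), $O$ ($\open x$ read but not $\close x$), or $C$ (both read). The initial state pairs $q_0$ with the all-$N$ vector; a letter transition of $\calA$ is lifted to every status vector and leaves it unchanged; a transition $(q, \open x, q')$ of $\calA$ is lifted, for every $s$ with $s(x) = N$, to a transition to $(q', s')$ where $s'$ sets $x$ to $O$; a transition $(q, \close x, q')$ of $\calA$ is lifted, for every $s$ with $s(x) = O$, to a transition to $(q',s')$ where $s'$ sets $x$ to $C$; and $(q,s)$ is accepting iff $q \in F$ and no variable has status $O$ in $s$. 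I would then check that $\calA'$ is sequential, i.e., every accepting run is valid: the status discipline forces each marker to be read at most once, forces the $\close x$ transition to occur after the $\open x$ transition (hence at a not-smaller position, positions being non-decreasing along a run), the acceptance condition rules out a dangling $\open x$, and a $\close x$ with no matching $\open x$ is impossible. I would also check that $\calA'$ has the same document spanner as $\calA$: a valid run of $\calA$ determines its status vectors uniquely and thereby lifts to an accepting run of $\calA'$ with the same mapping, and projecting away the status component turns an accepting run of $\calA'$ into a valid run of $\calA$ with the same mapping. Since $\card{Q'} = 3^{\card{\calV}} \card{Q}$, this gives the bound; the one point to keep in mind is that the spanner of $\calA'$ is taken over its accepting runs, which is legitimate exactly because $\calA'$ is sequential, whereas that of $\calA$ is taken over its valid runs.

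For the converse direction I would take $\Sigma \colonequals \{a\}$ and let $\calA_k$ be the one-state VA $(\{q_0\}, q_0, \{q_0\}, \delta)$ whose transitions are the self-loop on $a$ together with self-loops on $\open x$ and $\close x$ for every $x \in \calV$ (with $\card{\calV} = k$). A short check shows that the document spanner of $\calA_k$ maps every document to the set of \emph{all} mappings of $\calV$ on it, and that $\calA_k$ is itself not sequential. Let $\calB$ be any sequential VA with the same document spanner, and fix the one-letter document $d \colonequals a$, with positions $0$ and $1$. For each status vector $s \in \{N,O,C\}^{\calV}$, let $\mu_s$ be the mapping on $d$ sending every $x$ with $s(x) = O$ to $\spa{0}{1}$, every $x$ with $s(x) = C$ to $\spa{0}{0}$, and every $x$ with $s(x) = N$ to $\spa{1}{1}$; the map $s \mapsto \mu_s$ is injective, so there are $3^k$ such mappings, and each is produced by some accepting (equivalently valid) run $\rho_s$ of $\calB$ on $d$. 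The positions of the markers in $\rho_s$ are forced by $\mu_s$: the markers $\open x$ for $s(x) \in \{O,C\}$ and $\close x$ for $s(x) = C$ are read at position $0$, before the letter, and all remaining markers are read at position $1$. Let $p_s$ be the state of $\calB$ visited by $\rho_s$ just before the letter $a$ is read, so the status of a variable $x$ at $p_s$ is exactly $s(x)$. The heart of the argument is that this status is recoverable from $p_s$ and $x$ alone, by asking whether there is an accepting continuation of $\calB$ from the configuration $(p_s, 0)$ that reads $\open x$, and whether there is one that reads $\close x$: if a prefix reaching $p_s$ has already read $\open x$ (statuses $O$, $C$) then sequentiality forbids any accepting continuation from reading it again, and likewise for $\close x$ once it has been read (status $C$); in the remaining cases the continuation of $\rho_s$ itself is a witness. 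So the two answers are ``yes, yes'' for $s(x) = N$, ``no, yes'' for $s(x) = O$, and ``no, no'' for $s(x) = C$; hence $s \mapsto p_s$ is injective and $\calB$ has at least $3^k$ states.

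The step I expect to be delicate is the separation of statuses $N$ and $C$ in the lower bound. Choosing $\mu_s$ so that the $N$-variables are simply unassigned fails: once the letter has been read there is nothing in $\rho_s$ witnessing that such a variable can still be opened, and the nondeterminism of $\calB$ blocks us from just picking a convenient witnessing run. Assigning the $N$-variables to $\spa{1}{1}$, entirely to the right of where $p_s$ sits, supplies the witness; the cost is that the $N$- and $O$-cases then agree on the $\close x$ question, which is why both questions are needed to separate all three statuses. Besides this, the remaining work is routine: checking that ``equivalent'' means equal document spanner, that $\calA_k$ indeed realizes all mappings, and, for the upper bound, that a valid run may be assumed to read each opening marker before its matching closing marker so that three statuses per variable suffice.
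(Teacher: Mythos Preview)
Your proposal is correct and follows essentially the same approach as the paper: the paper's own proof is just a brief sketch pointing to \cite{Freydenberger17} and \cite{freydenberger2019logic}, noting that the upper bound tracks per-variable status (unseen/opened/closed) in the state and that the lower bound exhibits a VA forcing any sequential equivalent to remember all statuses. Your detailed construction for the upper bound and your one-state self-loop automaton with the state-distinguishing argument for the lower bound instantiate exactly this idea; your discussion of why the $N$-variables must be assigned to~$\spa{1}{1}$ rather than left undefined is a genuine subtlety that the paper's sketch does not surface.
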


\begin{proof}
  This can be shown exactly like
\cite[Proposition~12]{Freydenberger17} and
  \cite[Proposition~3.9]{freydenberger2019logic}. In short,
  the upper bound is shown by modifying~$\calA$ to remember in the automaton state which
  variables have been opened or closed, and by re-wiring the
  transitions to ensure that the run is valid: this creates $3^k$ copies of
  every state because each variable can be either unseen, opened, or closed. For the lower bound, 
  \cite[Proposition~3.9]{freydenberger2019logic} gives a VA for which any equivalent
  sequential VA must remember the status of all variables in this way.
\end{proof}

All VAs studied in this work will be sequential, and we will further assume
that they are 
\emph{trimmed} in the sense that for every state $q$ there is a document $d$ and
an accepting run of the VA where the state~$q$ appears. This condition can
be enforced in linear time on any sequential VA: we do a graph traversal to identify the
accessible states (the ones that are reachable from the initial
state), we do another graph traversal to identify the co-accessible states (the ones
from which we can reach a final state), and we remove all states that are not accessible or not co-accessible.
We will implicitly assume that all sequential VAs have been trimmed, which 
implies that they cannot contain any cycle of variable
transitions (as such a cycle would otherwise appear in a run, which would not be
valid).

\myparagraph{Extended VAs}
We will first prove our results for a variant of sequential VAs
introduced by~\cite{FlorenzanoRUVV18}, called sequential \emph{extended VAs}. An
extended VA on alphabet $\Sigma$ and variable set $\calV$ is an 
automaton $\mathcal A= (Q,q_0, F, \delta)$ where the transition relation
$\delta$ consists of \emph{letter transitions} as before, and of
\emph{extended variable transitions} (or \emph{ev-transitions})
of the form $(q, M, q')$ where $M$ is a possibly empty set of variable markers. Intuitively,
on ev-transitions, the automaton reads multiple markers at once.
Formally, a \emph{run} $\sigma$ of $\mathcal A$ on~$d = d_0 \cdots d_{n-1}$
is a sequence of configurations (defined like before) where letter
transitions and ev-transitions alternate:
\[\hfill
  (q_0,0) \xrightarrow{\smash{M_0}} (q_0', 0) \xrightarrow{\smash{d_0}}
  (q_1,1) \xrightarrow{\smash{M_1}} (q_1', 1) \xrightarrow{\smash{d_1}}
  \cdots 
  \xrightarrow{\smash{d_{n-1}}} (q_n, n) \xrightarrow{\smash{M_n}} (q_n', n)\quad\hfill
\] 
where $(q_i', d_i, q_{i+1})$ is a letter transition of~$\calA$ for all $0
\leq i < n$, and $(q_i, M_i, q'_i)$ is an \mbox{ev-transition} of~$\calA$ for all $0
\leq i \leq n$ where $M_i$ is the set of variable markers \emph{read} at
position~$i$.
Accepting and valid runs are defined like before, and the extended
VA is sequential if all accepting runs are valid, in which case its document spanner is
defined like before.

Our definition of extended VAs is slightly different from~\cite{FlorenzanoRUVV18} because we
allow ev-transitions that read the empty set to change the automaton state. This allows us to make a small additional assumption to simplify our proofs:
we require that the states of extended VAs are partitioned between
\emph{ev-states}, from which only ev-transitions originate
(i.e., the $q_i$ above), and
\emph{letter-states}, from which only letter transitions originate (i.e.,
the $q_i'$ above);
and we impose that the initial state is an ev-state
and the final states are all letter-states. Note that transitions reading the
empty set move from an ev-state to a letter-state, like all other ev-transitions.
Our requirement can be imposed in linear time on any
extended VA, by rewriting each state to one letter-state
and one ev-state, and re-wiring the transitions and changing the
initial/final status of states appropriately. This rewriting preserves
sequentiality and guarantees that any path in the rewritten extended VA must
alternate between letter transitions and ev-transitions.
Hence, we implicitly make this assumption
on all extended VAs from now on.

\begin{example}
  \label{exa:va}
  The top of Figure~\ref{fig:figure} represents a sequential extended VA
  $\calA_0$ to
  extract email addresses. To keep the example readable, we simply define them as words (delimited by a space
  or by the beginning or end of document)
  which contain one at-sign ``\texttt{\textsf{@}}'' preceded and followed by a non-empty sequence
  of non-``\texttt{\textsf{@}}'' characters. 
  In the drawing of~$\calA_0$, the initial state $q_0$ is at the left, and the states
  $q_{10}$ and $q_{12}$ are final. The transitions labeled by $\Sigma$ represent a
  set of transitions for each letter of $\Sigma$, and the same holds for
  $\Sigma'$ which we define as $\Sigma' \colonequals \Sigma \setminus
  \{\text{\texttt{\textsf{@}}}, \text{\textvisiblespace}\}$.
  
  It is easy to see that, on any input document $d$, there is one mapping
  of~$\calA_0$
  on~$d$ per email address contained in~$d$, which assigns the markers $\open x$
  and $\close x$ to the beginning and end of the email address, respectively. In
  particular, $\calA_0$ is sequential, because any accepting run is valid. Note that
  $\calA_0$ happens to have the property that each mapping is produced by exactly one
  accepting run, but our results in this paper do not rely on this property.
\end{example}

\myparagraph{Matrix multiplication}
The complexity bottleneck for some of our results will be the complexity of
multiplying two Boolean matrices, which is a long-standing open problem, see
e.g.~\cite{Gall12} for a recent discussion. When stating our results, we will
often denote by $2 \leq \omega \leq 3$ an exponent for Boolean matrix
multiplication: this is a constant such that the product of two $r$-by-$r$
Boolean matrices can be computed in time $O(r^\omega)$. For instance, we can
take $\omega \colonequals 3$ if we use the naive algorithm for Boolean matrix multiplication, and it is obvious that we must have $\omega \geq 2$. The best known upper bound is currently~$\omega < 2.3728639$, see~\cite{Gall14a}.


\section{Computing Mapping DAGs for Extended VAs}
\label{sec:extended}
\begin{figure}[t]
  \begin{tikzpicture}[->,>=latex,node distance=10.85mm,auto,
    every state/.style={minimum size=15pt,inner sep=1.5pt},
    map/.style={inner sep=0.5pt},
    trimmed/.style={dash pattern=on 1.5pt off 1.5pt}]
    \scriptsize

    
    \node[state,initial above] (q0) {$q_0$};
    \node[state, right of=q0] (q1) {$q_1$};
    \node[state, right of=q1] (q2) {$q_2$};
    \node[state, right of=q2] (q3) {$q_3$};
    \node[state, right of=q3] (q4) {$q_4$};
    \node[state, right of=q4] (q5) {$q_5$};
    \node[state, right of=q5] (q6) {$q_6$};
    \node[state, right of=q6] (q7) {$q_7$};
    \node[state, right of=q7] (q8) {$q_8$};
    \node[state, right of=q8] (q9) {$q_{9}$};
    \node[state, accepting, right of=q9] (q10) {$q_{10}$};
    \node[state, right of=q10] (q11) {$q_{11}$};
    \node[state, accepting, right of=q11] (q12) {$q_{12}$};

    \path
        (q0) edge node {$\emptyset$} (q1)
        (q1) edge node {$\Sigma$} (q2)
        (q2) edge[out=210,in=-30] node {$\emptyset$} (q1)
        (q1) edge[out=40,in=140] node {\textvisiblespace} (q3)
        (q0) edge[out=45,in=135] node {$\{\open\! x\}$} (q4)
        (q3) edge node[above=-1pt] {$\!\{\!\open\! x\!\}$} (q4)
        (q4) edge node {$\Sigma'$} (q5)
        (q5) edge[out=30,in=150] node {$\emptyset$} (q6)
        (q6) edge[out=210,in=-30] node {$\Sigma'$} (q5)
        (q6) edge node {\texttt{\textsf{@}}} (q7)
        (q7) edge node {$\emptyset$} (q8)
        (q8) edge[out=30,in=150] node {$\Sigma'$} (q9)
        (q9) edge[out=210,in=-30] node {$\emptyset$} (q8)
        (q9) edge node {$\{\!\close x\!\}$} (q10)
        (q10) edge node {\textvisiblespace} (q11)
        (q11) edge[out=30,in=150] node {$\emptyset$} (q12)
        (q12) edge[out=210,in=-30] node {$\Sigma$} (q11)        
        ;

    \node (d1) at (-.4,-1.8) {\texttt{a}};
    \node[below of=d1] (d2)  {\texttt{\textvisiblespace}};
    \node[below of=d2] (d3)  {\texttt{a}};
    \node[below of=d3] (d4)  {\texttt{\textsf{@}}};
    \node[below of=d4] (d5)  {\texttt{b}};
    \node[below of=d5] (d6)  {\texttt{\textvisiblespace}};
    \node[below of=d6] (d7)  {\texttt{b}};
    \node[below of=d7] (d8)  {\texttt{\textsf{@}}};
    \node[below of=d8] (d9)  {\texttt{c}};
    \node[below of=d9] (d10) {};

    \draw[loosely dotted,-] ($ (d1) + (0.3,0) $) -- ($ (d1) + (13.7,0) $);
    \draw[loosely dotted,-] ($ (d2) + (0.3,0) $) -- ($ (d2) + (13.7,0) $);
    \draw[loosely dotted,-] ($ (d3) + (0.3,0) $) -- ($ (d3) + (13.7,0) $);
    \draw[loosely dotted,-] ($ (d4) + (0.3,0) $) -- ($ (d4) + (13.7,0) $);
    \draw[loosely dotted,-] ($ (d5) + (0.3,0) $) -- ($ (d5) + (13.7,0) $);
    \draw[loosely dotted,-] ($ (d6) + (0.3,0) $) -- ($ (d6) + (13.7,0) $);
    \draw[loosely dotted,-] ($ (d7) + (0.3,0) $) -- ($ (d7) + (13.7,0) $);
    \draw[loosely dotted,-] ($ (d8) + (0.3,0) $) -- ($ (d8) + (13.7,0) $);
    \draw[loosely dotted,-] ($ (d9) + (0.3,0) $) -- ($ (d9) + (13.7,0) $);
    \draw[loosely dotted,-] ($ (d10) + (0.3,0) $) -- ($ (d10) + (13.7,0) $);
    
    \newcommand{\map}[2]{$(q_{#1}$,${#2})$};

    \node[map] (m00) at (0,-1.25) {\map 0 0};
    \node[map,right of=m00] (m10) {\map 1 0};
    \node[map,right of=m10] (m20) {};
    \node[map,below of=m20] (m21) {\map 2 1};
    \node[map,below of=m21] (m22) {\map 2 2};
    \node[map,below of=m22] (m23) {\map 2 3};
    \node[map,below of=m23] (m24) {\map 2 4};
    \node[map,below of=m24] (m25) {\map 2 5};
    \node[map,below of=m25] (m26) {\map 2 6};
    \node[map,below of=m26] (m27) {\map 2 7};
    \node[map,below of=m27] (m28) {\map 2 8};
    \node[map,below of=m28] (m29) {\map 2 9};

    \node[map,right of=m20] (m30) {};
    \node[map,left of=m21] (m11) {\map 1 1};
    \node[map,left of=m22] (m12) {\map 1 2};
    \node[map,left of=m23] (m13) {\map 1 3};
    \node[map,left of=m24] (m14) {\map 1 4};
    \node[map,left of=m25] (m15) {\map 1 5};
    \node[map,left of=m26] (m16) {\map 1 6};
    \node[map,left of=m27] (m17) {\map 1 7};
    \node[map,left of=m28] (m18) {\map 1 8};
    \node[map,left of=m29] (m19) {\map 1 9};
    
    \node[map,right of=m30] (m40) {\map 4 0};
    \node[below of=m40] (m41) {};
    \node[map,right of=m41] (m51) {\map 5 1};
    \node[map,right of=m51] (m61) {\map 6 1};

    \node[map,right of=m22] (m32) {\map 3 2};
    \node[map,right of=m32] (m42) {\map 4 2};
    \node[below of=m42] (m43) {};
    \node[map,right of=m43] (m53) {\map 5 3};
    \node[map,right of=m53] (m63) {\map 6 3};
    \node[below of=m63] (m64) {};
    \node[map,right of=m64] (m74) {\map 7 4};
    \node[map,right of=m74] (m84) {\map 8 4};
    \node[map,below of=m84] (m85) {\map 8 5};
    \node[map,right of=m85] (m95) {\map 9 5};
    \node[map,right of=m95] (m105) {\map {1\!0} 5};
    \node[below of=m105] (m106) {};
    \node[map,right of=m106,node distance=9.5mm] (m116) {\map {1\!1} 6};
    \node[map,below of=m116] (m117) {\map {1\!1} 7};
    \node[map,below of=m117] (m118) {\map {1\!1} 8};
    \node[map,below of=m118] (m119) {\map {1\!1} 9};

    \node[map,right of=m116, node distance=11.35mm] (m126) {\map {1\!2} 6};
    \node[map,below of=m126] (m127) {\map {1\!2} 7};
    \node[map,below of=m127] (m128) {\map {1\!2} 8};
    \node[map,below of=m128] (m129) {\map {1\!2} 9};

    \node[map,right of=m26] (m36) {\map 3 6};
    \node[map,right of=m36] (m46) {\map 4 6};
    \node[below of=m46] (m47) {};
    \node[map,right of=m47] (m57) {\map 5 7};
    \node[map,right of=m57] (m67) {\map 6 7};
    \node[below of=m67] (m68) {};
    \node[map,right of=m68] (m78) {\map 7 8};
    \node[map,right of=m78] (m88) {\map 8 8};
    \node[map,below of=m88] (m89) {\map 8 9};
    \node[map,right of=m89] (m99) {\map 9 9};
    \node[map,right of=m99] (m109) {\map {1\!0} 9};

    \node[map,below of=m119] (mf) {$v_{\f}$};

      
    \path (m00) edge node[below,near start] {$\emptyset$} (m10);
    \path (m10) edge node {$\epsilon$} (m21);
    \path (m21) edge node[above,near start] {$\emptyset$} (m11);
    \path (m22) edge node[above,near start] {$\emptyset$} (m12);
    \path (m23) edge node[above,near start] {$\emptyset$} (m13);
    \path (m24) edge node[above,near start] {$\emptyset$} (m14);
    \path (m25) edge node[above,near start] {$\emptyset$} (m15);
    \path (m26) edge[trimmed] node[above,near start] {$\emptyset$} (m16);
    \path (m27) edge[trimmed] node[above,near start] {$\emptyset$} (m17);
    \path (m28) edge[trimmed] node[above,near start] {$\emptyset$} (m18);
    \path (m29) edge[trimmed] node[above,near start] {$\emptyset$} (m19);

    \path (m00) edge[trimmed,out=15,in=165] node[pos=0.58] {$\{(\open x\!,\!0)\}$} (m40);
    \path (m40) edge[trimmed] node {$\epsilon$} (m51);
    \path (m51) edge[trimmed,near start] node {$\emptyset$} (m61);

    \path (m11) edge node[below left,pos=0.3] {$\epsilon$} (m22);
    \path (m12) edge node[above right] {$\epsilon$} (m23);
    \path (m13) edge node[above right] {$\epsilon$} (m24);
    \path (m14) edge node[above right] {$\epsilon$} (m25);
    \path (m15) edge[trimmed] node[below left,pos=0.3] {$\epsilon$} (m26);
    \path (m16) edge[trimmed] node[above right] {$\epsilon$} (m27);
    \path (m17) edge[trimmed] node[above right] {$\epsilon$} (m28);
    \path (m18) edge[trimmed] node[above right] {$\epsilon$} (m29);

    \path (m11) edge node {$\epsilon$} (m32);
    \path (m32) edge node[above=1mm] {$\{(\open x$,$2)\}$} (m42);
    \path (m42) edge node {$\epsilon$} (m53);
    \path (m53) edge node[near start] {$\emptyset$} (m63);
    \path (m63) edge node {$\epsilon$} (m74);
    \path (m74) edge node[near start] {$\emptyset$} (m84);
    \path (m84) edge node {$\epsilon$} (m95);
    \path (m95) edge[trimmed] node[above,near start] {$\emptyset$} (m85);
    \path (m95) edge node[above=1mm] {$\{(\close x$,$5)\}$} (m105);
    \path (m105) edge node {$\epsilon$} (m116);
    \path (m116) edge node[near start] {$\emptyset$} (m126);
    \path (m126) edge node[above left] {$\epsilon$} (m117);
    \path (m117) edge node[near start] {$\emptyset$} (m127);
    \path (m127) edge node[above left] {$\epsilon$} (m118);
    \path (m118) edge node[near start] {$\emptyset$} (m128);
    \path (m128) edge node[above left] {$\epsilon$} (m119);
    \path (m119) edge node[near start] {$\emptyset$} (m129);
    \path (m129) edge node[above left] {$\epsilon$} (mf);
    
    \path (m15) edge node {$\epsilon$} (m36);
    \path (m36) edge node[above=1mm] {$\{(\open x\!,\!6)\}$} (m46);
    \path (m46) edge node {$\epsilon$} (m57);
    \path (m57) edge node[near start] {$\emptyset$} (m67);
    \path (m67) edge node {$\epsilon$} (m78);
    \path (m78) edge node[near start] {$\emptyset$} (m88);
    \path (m88) edge node {$\epsilon$} (m99);
    \path (m99) edge[trimmed] node[above,near start] {$\emptyset$} (m89);
    \path (m99) edge node[above=1mm] {$\{(\close x$,$9)\}$} (m109);
    \path (m109) edge node {$\epsilon$} (mf);
  \end{tikzpicture}
  \caption{Example sequential extended VA $\calA_0$ to extract e-mail addresses (see
  Example~\ref{exa:va}) and example mapping DAG on an example document (see
  Examples~\ref{exa:mapping}, \ref{exa:capture}, \ref{exa:trim}, and \ref{exa:level}).}
  \label{fig:figure}
\end{figure}
We start our paper by studying extended VAs, which are easier to work with
because the set of markers that can be assigned at every position is explicitly
written as the label of a single transition. We accordingly show Theorem~\ref{thm:main} for the case of extended VAs
in Sections~\ref{sec:extended}--\ref{sec:jump}. We will then cover the case of
non-extended VAs in Section~\ref{sec:flashlight}.

\myparagraph{Mapping DAGs}
To show Theorem~\ref{thm:main} for extended VAs, we will reduce the problem of enumerating the mappings captured
by $\calA$ to that of enumerating path labels in a special kind of directed
acyclic graph (DAG), called a \emph{mapping DAG}. This DAG is intuitively a variant of the product of~$\calA$ and of the document~$d$,
where we represent simultaneously the position in the document and the
corresponding state of~$\calA$. We will no longer care in the mapping DAG about
the labels of letter transitions, so we will erase these labels and call these
transitions \emph{$\epsilon$-transitions}. As for the ev-transitions, we will
extend their labels to indicate the position in the document in addition to the variable markers.
We first give the general definition of a mapping DAG:

\begin{definition}
  \label{def:mapping}
  A \emph{mapping DAG} consists of a set $V$ of \emph{vertices}, an
  \emph{initial vertex} $v_0 \in V$, a \emph{final vertex} $v_{\f} \in V$, and a set of \emph{edges} $E$ where each edge
  $(s, x, t)$ has a \emph{source vertex} $s \in V$, a \emph{target vertex} $t
  \in V$, and a \emph{label} $x$ that may be $\epsilon$ (in which case we call the edge an
  \emph{$\epsilon$-edge}) or a finite (possibly empty) set of pairs $(m,i)$,
  where $m$ is a variable marker and $i$ is a position.
  These edges are called \emph{marker edges}.
  We require that the graph $(V, E)$ is acyclic.
  We say that a mapping DAG is
  \emph{normalized} if every path in the mapping DAG 
  alternates between marker edges and $\epsilon$-edges, every path
  starting at the initial vertex starts with a marker edge, and every path ending
  at the final vertex ends with an~$\epsilon$-edge.
  
  The \emph{pre-mapping} $\mu(\pi)$ of a path $\pi$ in the mapping DAG is the union of 
  labels of the marker edges of~$\pi$: we require of any mapping DAG that, for every path
  $\pi$, this union is
  disjoint, and that for every path $\pi$ from~$v_0$ to~$v_{\f}$, the pre-mapping
  $\mu(\pi)$ is valid, i.e., it corresponds to a mapping. Given a set $U$ of vertices of~$G$, we write
  $\calM(U)$ for the set of pre-mappings of paths from a vertex of~$U$ to the final
  vertex; note that the same pre-mapping may be captured by multiple different
  paths.
  The set of pre-mappings \emph{captured} by~$G$ is then
  $\calM(G) \colonequals \calM(\{v_0\})$; all of these are mappings, i.e., they
  are valid.
\end{definition}

Intuitively, the $\epsilon$-edges will correspond to letter transitions
of~$\calA$ (with the letter being erased, i.e., replaced by~$\epsilon$),
and marker edges will correspond to ev-transitions: their labels
are a possibly empty finite set of pairs of a variable marker and position,
describing which variables have been assigned during the
transition. We now explain how we construct a DAG from~$\calA$ and from
a document~$d$, which we call the \emph{product DAG} of~$\calA$ and~$d$, and
which we will show to be a mapping DAG:

\begin{definition}
  Let $\calA = (Q, q_0, F, \delta)$ be a sequential extended VA and
  let $d = d_0 \cdots d_{n-1}$ be an input document.
  The \emph{product DAG} of $\calA$ and $d$ is the DAG
  whose vertex set is $Q \times \{0, \ldots, n\} \cup \{v_{\f}\}$ with
  $v_{\f} \colonequals (\bullet, n+1)$ for some fresh value~$\bullet$.
  Its edges are:
  \begin{itemize}
    \item For every letter-transition $(q, a, q')$ in~$\delta$, for every $0
      \leq i < \card{d}$ such that $d_i = a$, there is an $\epsilon$-edge from
      $(q, i)$ to $(q', i+1)$;
    \item For every ev-transition $(q, M, q')$ in~$\delta$, for
      every $0 \leq i \leq \card{d}$, there is a marker edge from $(q, i)$ to~$(q', i)$ labeled
      with the (possibly empty)
      set $\{(m,i) \mid m \in M\}$.
    \item For every final state $q \in F$, an $\epsilon$-edge from $(q, n)$
      to~$v_{\f}$.
  \end{itemize}
  The initial vertex of the product DAG is $(q_0, 0)$ and the final
  vertex is $v_{\f}$.
\end{definition}

Note that, contrary to~\cite{FlorenzanoRUVV18}, we
do not contract the $\epsilon$-edges but keep them throughout our algorithm.

\begin{example}
  \label{exa:mapping}
  The product DAG of our example sequential extended VA $\calA_0$ and of the
  example document
  $\text{\texttt{a\textvisiblespace{}a\textsf{@}b\textvisiblespace{}b\textsf{@}c}}$ 
  is shown on Figure~\ref{fig:figure}, with the document being written at the left from
  top to bottom. The initial vertex of the DAG is
  $(q_0, 0)$ at the top left and its final vertex is $v_{\f}$ at the bottom. We draw marker edges horizontally, and
  $\epsilon$-edges diagonally. To simplify the example, we only draw the parts
  of the DAG that are reachable from the initial vertex. Edges are
  dashed when they cannot be used to reach the final vertex.
\end{example}

It is easy to see that this construction satisfies the definition:

\begin{claim}
  \label{clm:dagwellformed}
  The product DAG of~$\calA$ and~$d$ is a normalized mapping DAG.
\end{claim}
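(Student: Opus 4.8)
The plan is to verify each clause of Definition~\ref{def:mapping} in turn for the product DAG, which is mostly a matter of unwinding definitions.

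First I would check acyclicity. Every $\epsilon$-edge goes from a vertex $(q,i)$ to a vertex $(q',i+1)$ or to $v_{\f}=(\bullet,n+1)$, so it strictly increases the position component; every marker edge goes from $(q,i)$ to $(q',i)$, keeping the position fixed. Hence on any cycle all edges would be marker edges at a single position~$i$, i.e.\ a cycle among ev-transitions of~$\calA$ at a fixed position; but our standing assumption that $\calA$ is trimmed rules out cycles of variable transitions (such a cycle would appear in some accepting run, which could then not be valid). So the product DAG is acyclic.

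Next I would check that it is normalized. By the assumption on extended VAs, ev-states and letter-states are partitioned, so from an ev-state only marker edges emanate and from a letter-state only $\epsilon$-edges (or the final $\epsilon$-edge to~$v_{\f}$). Thus every path alternates marker edges and $\epsilon$-edges. The initial vertex $(q_0,0)$ has $q_0$ an ev-state, so any path from it begins with a marker edge. A path can reach $v_{\f}$ only via the final $\epsilon$-edge from some $(q,n)$ with $q\in F$, and $F$ consists of letter-states, so the preceding edge is a marker edge and the last edge is an $\epsilon$-edge; hence every path ending at~$v_{\f}$ ends with an $\epsilon$-edge. Then I would check the pre-mapping conditions: along a path, each marker edge at position~$i$ contributes pairs $(m,i)$ all with the same second coordinate~$i$, and successive marker edges on a path sit at strictly increasing positions (since they are separated by $\epsilon$-edges), so the union of their labels is automatically disjoint. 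For a path $\pi$ from $(q_0,0)$ to $v_{\f}$, its edges trace out an accepting run of~$\calA$ on~$d$ (the $\epsilon$-edges read the letters of~$d$ in order since $d_i=a$ is required for the edge from $(q,i)$ to $(q',i+1)$ via a transition on~$a$, and it visits positions $0,1,\dots,n$), and since $\calA$ is sequential this run is valid, so $\mu(\pi)$ is a valid mapping. This establishes the claim.

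The only genuinely delicate point — and what I expect to be the main obstacle — is matching the alternation structure of runs of extended VAs (which by definition start with an ev-transition and end with an ev-transition after the last letter) with the normalization condition (paths start with a marker edge and end with an $\epsilon$-edge): the discrepancy is resolved exactly by the extra final $\epsilon$-edge to $v_{\f}$, which ``absorbs'' the trailing ev-transition's successor state into the final vertex. I would make sure the bookkeeping here is right, in particular that every accepting run of~$\calA$ corresponds to a path to~$v_{\f}$ and vice versa, so that $\calM(G)$ is precisely the document spanner of~$\calA$ on~$d$ — though strictly speaking the claim only asserts that the product DAG is a normalized mapping DAG, and the correspondence with the spanner will presumably be stated and used separately.
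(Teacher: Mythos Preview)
Your proof is correct and follows the same approach as the paper: acyclicity via the position component, disjointness of labels via distinct positions, and normalization via the ev-state/letter-state partition. One small miswiring: your acyclicity argument appeals to the ``trimmed VAs have no cycles of variable transitions'' fact from the preliminaries, but that statement is about non-extended VAs; for extended VAs the reason there is no cycle of marker edges is precisely the ev-state/letter-state partition you invoke a paragraph later (an ev-transition lands in a letter-state, so two marker edges cannot be consecutive), and this is also how the paper argues it. Your explicit check that $\mu(\pi)$ is valid for full paths, via sequentiality of~$\calA$, is in fact more careful than the paper's own proof, which tacitly defers this point to Claim~\ref{clm:dagcorrect}.
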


\begin{proof}
It is immediate that the product DAG is indeed acyclic, because the second
component is
always nondecreasing, and an edge where the second component does not
increase (corresponding to an ev-transition of the VA) must be followed by an edge where
it does (corresponding to a letter-transition of the VA). What is more, we claim that
no path in the product DAG can include two edges whose labels contain the same
pair $(m, i)$, so that the unions used to define the mappings of the mapping DAG
are indeed disjoint. To see this, consider a path from an edge $((q_1, i_1), M_1, (q'_1, i_1))$ to an edge
$((q_2, i_2), M_2, (q'_2, i_2))$ where $M_1 \neq \epsilon$ and $M_2 \neq
\epsilon$, we have 
$i_1 < i_2$ and~$M_1$
and~$M_2$ are disjoint because all elements of~$M_1$ have~$i_1$ as their first
component, and all elements
of~$M_2$ have~$i_2$ as their first component.
Further, the product DAG is also normalized because
$\calA$ is an extended VA that we have preprocessed to distinguish
letter-states and ev-states.
\end{proof}

Further, the product DAG clearly captures what
we want to enumerate. Formally:

\begin{claim}
  \label{clm:dagcorrect}
  The set of mappings of $\calA$ on $d$ is exactly the set of
  mappings $\calM(G)$ captured by the product DAG~$G$.
\end{claim}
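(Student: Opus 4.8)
The plan is to establish a bijection-like correspondence between accepting runs of $\calA$ on $d$ and paths from $v_0$ to $v_{\f}$ in the product DAG $G$, such that corresponding runs and paths define the same mapping (respectively pre-mapping). Since $\calA$ is sequential, every accepting run is valid, so its defined mapping is well-defined; and by Claim~\ref{clm:dagwellformed} every $v_0$-to-$v_{\f}$ path has a valid pre-mapping, so both sides of the claimed equality consist of genuine mappings. It then suffices to show that the two sets of mappings coincide, which follows once we show the run/path correspondence preserves the associated mapping.

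First I would spell out the correspondence. Given an accepting run of $\calA$ on $d$, written in the alternating form $(q_0,0) \xrightarrow{M_0} (q_0',0) \xrightarrow{d_0} (q_1,1) \xrightarrow{M_1} \cdots \xrightarrow{d_{n-1}} (q_n,n) \xrightarrow{M_n} (q_n',n)$ with $q_n' \in F$, I map it to the path in $G$ that uses, for each $i$, the marker edge from $(q_i,i)$ to $(q_i',i)$ labeled $\{(m,i) \mid m \in M_i\}$ (which exists because $(q_i,M_i,q_i')$ is an ev-transition), then the $\epsilon$-edge from $(q_i',i)$ to $(q_{i+1},i+1)$ (which exists because $(q_i',d_i,q_{i+1})$ is a letter transition and $d_i = d_i$), and finally the $\epsilon$-edge from $(q_n',n)$ to $v_{\f}$ (which exists because $q_n' \in F$). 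This is a well-defined path from $(q_0,0) = v_0$ to $v_{\f}$. Conversely, by normalization (Claim~\ref{clm:dagwellformed}) and the shape of the edges of $G$, any $v_0$-to-$v_{\f}$ path necessarily has exactly this alternating shape: it alternates marker edges (which keep the position fixed and come from ev-transitions) and $\epsilon$-edges (which increment the position and come from letter transitions), starts with a marker edge, and ends with the special $\epsilon$-edge into $v_{\f}$ from some $(q,n)$ with $q \in F$; reading off the states and the marker sets recovers an accepting run of $\calA$ on $d$. These two maps are mutually inverse.

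Next I would check that the mapping defined by a valid run equals the pre-mapping of the corresponding path. The run reads marker $m$ at position $i$ exactly when $m \in M_i$, i.e., exactly when $(m,i)$ appears in the label of the $i$-th marker edge of the path; hence the set of pairs $(m,i)$ read by the run is precisely $\mu(\pi)$, the union of the marker-edge labels along $\pi$. Since a valid run's mapping and a valid pre-mapping are both just this set of pairs reinterpreted as a partial function from variables to spans (using that $\open x$ and $\close x$ are read at the matching positions), the run and the path define the same mapping. Combining this with the bijection of the previous paragraph gives $\calM(G) = \{\text{mappings defined by valid runs of }\calA\text{ on }d\}$, which by sequentiality of $\calA$ equals the document spanner of $\calA$ on $d$, as required.

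I do not expect a serious obstacle here; this is essentially a bookkeeping argument. The only point that needs slight care is the use of normalization to argue that \emph{every} $v_0$-to-$v_{\f}$ path has the clean alternating form — this is where the preprocessing assumption separating ev-states from letter-states (and making $q_0$ an ev-state, final states letter-states) does the work, so I would invoke Claim~\ref{clm:dagwellformed} explicitly rather than re-deriving it. A second minor subtlety is that the same mapping may arise from several runs and, correspondingly, several paths; this is harmless for the set equality we are proving (it only matters later for the duplicate-free enumeration), but I would remark on it to match the phrasing in Definition~\ref{def:mapping}.
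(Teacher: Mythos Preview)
Your proposal is correct and follows essentially the same approach as the paper: you exhibit the natural bijection between accepting runs of $\calA$ on $d$ and $v_0$-to-$v_{\f}$ paths in $G$, and check that it preserves the associated mapping. The paper's own proof is a one-sentence version of exactly this argument, so your write-up simply unpacks the details (the alternating shape via normalization, and the remark about multiple runs yielding the same mapping) that the paper leaves implicit.
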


\begin{proof}
  This is immediate as there is a clear bijection between accepting runs
  of~$\calA$ on~$d$ and paths from the initial vertex of~$G$ to its final
  vertex, and this bijection ensures that the label of the path in~$G$ is the
  mapping corresponding to that accepting run.
\end{proof}

\begin{example}
  \label{exa:capture}
  The set of mappings captured by the example product DAG on
  Figure~\ref{fig:figure} is \[\{\;\{(\open x, 2), (\close x, 5)\}, \{(\open x, 6),
  (\close x, 9)\}\;\}\;\] and this is indeed the 
  set of mappings of the example extended VA~$\calA_0$ on the example document.
\end{example}

\myparagraph{Connection to circuits}
We remark that our mapping DAG can be seen as a kind of Boolean circuit, and
our enumeration algorithm on mapping DAGs can be connected to earlier work by some of the
present authors on enumeration for Boolean
circuits~\cite{amarilli2017circuit_extended,amarilli2019enumeration}. Specifically, a
mapping DAG can be understood as describing a kind of binary decision diagram
(BDD): these are special kind of Boolean circuits where each conjunction always
involves a literal. This class is more restricted than the circuits obtained for
tree automata 
in~\cite{amarilli2017circuit_extended,amarilli2019enumeration}, intuitively
because trees feature branching which require the conjunction of multiple
sub-runs. Our enumeration algorithm on mapping DAGs in the present work could then be phrased
as a generic algorithm on a class of bounded-width, nondeterministic BDDs.
However, in this work, we chose to eschew the circuit terminology, as we believe
that our definitions and algorithms are simpler to present on an ad-hoc mapping
DAG data structure.

\myparagraph{Trimming, levels, and level sets}
Our task is to enumerate $\calM(G)$ \emph{without
duplicates}, and this is still non-obvious:
because of nondeterminism, the same mapping in the
product DAG may be witnessed by exponentially many paths, corresponding to
exponentially many runs of the nondeterministic extended VA~$\calA$. 
We will present in the next section our algorithm to perform this task on the
product DAG~$G$. To do this, we
will need to preprocess $G$ by \emph{trimming} it, and 
introduce the notion of \emph{levels} to reason about its
structure.

First, we present how to \emph{trim}~$G$. We say that $G$ is \emph{trimmed}
if every vertex~$v$ is both \emph{accessible} (there
is a path from the initial vertex to~$v$) and \emph{co-accessible} (there is a
path from~$v$ to the final vertex). Given a mapping DAG, we can clearly
trim in linear time by two linear-time graph traversals.
Hence, we will always implicitly assume
that the mapping DAG is trimmed. 
If the mapping DAG may be empty once
trimmed, then there are no mappings to enumerate, so our task is trivial.
Hence, we assume in the sequel that the mapping DAG is non-empty after
trimming. Further, if $\calV = \emptyset$ then the only possible mapping is the
empty mapping and we can produce it at that stage, so in the sequel we assume
that~$\calV$ is non-empty.

\begin{example}
  \label{exa:trim}
  For the mapping DAG of Figure~\ref{fig:figure}, trimming eliminates the
  non-accessible vertices (which are not
  depicted) and the non-co-accessible vertices (i.e., those with incoming dashed edges).
  Note that trimming the mapping DAG has an effect even though the example
  sequential extended VA $\calA_0$ was already trimmed.
\end{example}

Second, we present an invariant on the structure
of~$G$ by introducing
the notion of \emph{levels}:

\begin{definition}
  \label{def:leveled}
  A mapping DAG $G$ is \emph{leveled} if its vertices $v = (q,
  i)$ are pairs whose second component~$i$ is a nonnegative integer called the
  \emph{level} of the vertex and written $\lvl(v)$,
  and where the following conditions hold:
  \begin{itemize}
    \item For the initial vertex $v_0$ (which has no incoming edges),
      the level is~$0$;
    \item For every $\epsilon$-edge from~$u$ to~$v$, we have $\lvl(v) = \lvl(u)+1$;
    \item For every  marker edge from~$u$ to~$v$, we have $\lvl(v) = \lvl(u)$. Furthermore, all pairs $(m,i)$ in the label of the edge have $i=\lvl(v)$.
  \end{itemize}
  The \emph{depth} $\Depth$ of~$G$ is the maximal level.
  The \emph{width} $\Width$ of~$G$ is the maximal number of vertices that have
  the same level.
\end{definition}

The following is then immediate by construction:

\begin{claim}
  \label{clm:leveled}
  The product DAG of~$\calA$ and~$d$ is leveled, and we have $\Width \leq
  \card{Q}$ and $\Depth = \card{d} + 1$.
\end{claim}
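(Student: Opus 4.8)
\textbf{Proof plan for Claim~\ref{clm:leveled}.}
The plan is to verify the three bullet points of Definition~\ref{def:leveled} directly against the construction in the definition of the product DAG, and then read off the width and depth bounds. First I would set the level of a vertex $(q,i)$ of the product DAG to be its second component~$i$, and the level of the extra final vertex $v_{\f} = (\bullet, n+1)$ to be $n+1$; this matches the format required by Definition~\ref{def:leveled}, since every vertex is a pair whose second component is a nonnegative integer. The initial vertex is $(q_0, 0)$, which has level~$0$ as required, and it has no incoming edges because every edge of the product DAG either increases the second component (letter-transitions, and the edge into $v_{\f}$) or keeps it the same (ev-transitions), and in the latter case the source has a state in $Q$ while $(q_0,0)$ is only ever the source, never the target, of such edges (here one uses trimming or simply inspects the edge list: marker edges go from $(q,i)$ to $(q',i)$, and $q_0$ being initial, after preprocessing it is an ev-state, so no marker edge targets it).

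Next I would check the edge conditions case by case following the three types of edges listed in the product DAG definition. For an $\epsilon$-edge arising from a letter-transition $(q,a,q')$ with $d_i = a$, the edge goes from $(q,i)$ to $(q',i+1)$, so the target level is the source level plus one, as required for $\epsilon$-edges. For the $\epsilon$-edge from $(q,n)$ to $v_{\f} = (\bullet, n+1)$ coming from a final state $q \in F$, again the target level $n+1$ is the source level $n$ plus one. For a marker edge arising from an ev-transition $(q,M,q')$ at position~$i$, the edge goes from $(q,i)$ to $(q',i)$, so source and target have the same level~$i$; moreover its label is $\{(m,i) \mid m \in M\}$, so every pair $(m,i)$ in the label indeed has first-coordinate-of-position equal to $i = \lvl((q',i))$, matching the last condition. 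This exhausts all edges, so the product DAG is leveled.

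Finally I would read off the two quantitative bounds. The vertices of level~$i$ for $0 \le i \le n$ are exactly the pairs $(q,i)$ with $q \in Q$, of which there are at most $\card{Q}$; the only other level is $n+1$, containing just $v_{\f}$. Hence the width $\Width$, the maximum number of vertices sharing a level, is at most $\card{Q}$. For the depth, the levels occurring are $0, 1, \dots, n, n+1$ where $n = \card{d}$, so the maximal level is $n+1 = \card{d}+1$, giving $\Depth = \card{d}+1$. I do not anticipate any real obstacle here: the claim is purely a bookkeeping consequence of how the product DAG was defined, and the only mild subtlety is making sure the fresh final vertex $v_{\f}$ fits the ``pair with integer second component'' format, which it does by construction since $v_{\f} = (\bullet, n+1)$.
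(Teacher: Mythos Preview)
Your proposal is correct and follows essentially the same approach as the paper's proof: verify the conditions of Definition~\ref{def:leveled} by direct inspection of the product DAG construction, then read off the width and depth bounds. Your argument is more explicit than the paper's (which dispatches the first three bullet points with ``clear by construction''), and you spend some effort showing that $(q_0,0)$ has no incoming marker edges---this is fine, though the parenthetical in Definition~\ref{def:leveled} is arguably just a remark rather than a condition to verify.
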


\begin{proof}
  It is clear by construction that the product DAG satisfies the first three
  points in the definition of a leveled mapping DAG. To see why the last
  point holds, observe that for every edge of the product DAG, for every pair
  $(m,i)$ that occurs in the label of that edge, the second component~$i$ of the
  pair indicates how many letters of~$d$ have been read so far, so the source
  vertex must have level~$i$.

  To see why the width and depth bounds hold, observe that each
  level of the product DAG corresponds to a copy of~$\calA$, so it has at
  most~$\card{Q}$ vertices; and that the number of levels corresponds to the number
  of letters of the document, plus one level for the final vertex.
\end{proof}

\begin{example}
  \label{exa:level}
  The example mapping DAG on Figure~\ref{fig:figure} is leveled, and
  the levels are represented as horizontal layers separated by dotted
  lines: the topmost level is level~0 and the bottommost level is level~10.
\end{example}

In addition to levels, we will need the notion of a \emph{level set}:
  
\begin{definition}
  \label{def:levelset}
  A \emph{level set} $\Lambda$ is a non-empty set of vertices in a leveled
  normalized
  mapping DAG that all have the same level (written $\lvl(\Lambda)$) and which
  are all the source of some marker edge. The singleton $\{v_{\f}\}$ of
  the final vertex is also considered as a level set.
\end{definition}

In particular, letting $v_0$ be the initial vertex, the singleton $\{v_0\}$ is a
level set. Further, if we consider a level set $\Lambda$ which is not the final
vertex, then we can follow
marker edges from all vertices of~$\Lambda$ (and only such edges) to get
to other vertices, and follow $\epsilon$-edges from these vertices (and
only such edges) to get to a new level set~$\Lambda'$ with $\lvl(\Lambda') =
\lvl(\Lambda)+1$.


\section{Enumeration for Mapping DAGs}
\label{sec:enum}
In the previous section, we have reduced our enumeration problem for extended
VAs on documents to an enumeration problem on normalized leveled mapping
DAGs. In this
section, we describe our main enumeration algorithm on such
DAGs and show the following:

\begin{theorem}
  \label{thm:dag}
  Let $2 \leq \omega \leq 3$ be an exponent for Boolean matrix multiplication.
  Given a normalized leveled mapping DAG $G$ of depth $\Depth$ and width
  $\Width$,
  we can enumerate $\calM(G)$ (without duplicates) 
  with preprocessing $O(\card{G} + \Depth \times \Width^{\omega+1})$ and delay
  $O(\Width^{2} \times (r+1))$
  where $r$ is the size of each produced mapping.
\end{theorem}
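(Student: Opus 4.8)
The plan is to build an index during preprocessing that, for every level set~$\Lambda$ reachable in the enumeration, lets us decide in time polynomial in~$\Width$ (and independent of~$\Depth$) whether $\calM(\Lambda)$ contains any nonempty mapping, and if so lets us advance to the "next" marker-bearing level. The core data structure is, for each level~$\ell$, a Boolean reachability matrix recording, for each pair of vertices $(u,v)$ with $u$ at level~$\ell$ and $v$ at some later level, whether $v$ is reachable from $u$ by a path whose marker edges all carry the empty set — i.e. a path that contributes nothing to the pre-mapping. Composing these matrices level by level is a Boolean matrix product of $\Width\times\Width$ matrices, so precomputing, for every level, the matrix that jumps to the next level is $O(\Depth\times\Width^{\omega})$; iterating it to get, for each level, a pointer to the next level at which some vertex in the current reachable set is the source of a \emph{nonempty} marker edge ("the next interesting level") costs another $O(\Depth\times\Width^{\omega})$ or so, well within the stated $O(\card G+\Depth\times\Width^{\omega+1})$ budget. (The extra factor of $\Width$ in the bound gives slack to also store, per level, auxiliary tables indexed by a vertex, which we will need below.)

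The enumeration itself is a recursive traversal of the "interesting-level tree". Maintain a current level set $\Lambda$ (initially $\{v_0\}$ after trimming) together with the partial mapping read so far. Using the jump index, move from $\Lambda$ to the set $\Lambda'$ of vertices at the next interesting level that are reachable from $\Lambda$ by empty-marker paths; at $\Lambda'$, enumerate the distinct nonempty marker-edge labels $M$ that are available from some vertex of~$\Lambda'$ (there are at most $\Width$ such edges, and for each we can check in $O(\Width^2)$ whether the resulting successor set is co-accessible, using trimming to guarantee it is), recurse on each with $M$ appended to the partial mapping, and finally also handle the possibility that the current branch terminates (the final vertex is reachable by an empty-marker path). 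Because $G$ is trimmed, every vertex of every $\Lambda'$ we reach extends to the final vertex, so every recursive call produces at least one further output; the depth of recursion is at most $r$, the size of the produced mapping, and at each of the $r$ steps we spend $O(\Width^2)$ work plus one $O(\Width^2)$ jump lookup, giving delay $O(\Width^2\times(r+1))$ as claimed. The partial mapping is carried on a stack so that outputting it at a leaf, and unwinding, are both done within the delay bound.

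The crucial correctness point — and the main obstacle — is \textbf{avoiding duplicates} despite nondeterminism: the same mapping is witnessed by exponentially many paths in~$G$, so we must not enumerate by paths. The fix is that at each interesting level we branch on \emph{distinct marker labels}~$M$, not on edges or on vertices; and we must argue that (i) two different sequences of choices of nonempty labels yield two different mappings (immediate, since by the leveled structure each label records its level, so the sequence of labels is recoverable from the mapping), and (ii) every mapping in $\calM(\Lambda)$ arises from exactly one such sequence of choices. Point (ii) requires showing that "jump to the next level where the reachable set has an outgoing nonempty marker edge" does not skip any mapping: if a path from $\Lambda$ to $v_{\f}$ has its first nonempty marker edge at level $\ell'$, then every earlier marker edge on it is empty, so $\ell'$ is $\geq$ the next interesting level, and a short exchange argument (re-routing through the empty-marker reachability just computed) shows $\ell'$ is in fact exactly that level and the label $M$ used there is one of the candidates we branch on. I would also need the routine check that the level-set successor operation (follow nonempty marker edges, then the forced $\epsilon$-edges) indeed lands on a level set in the sense of Definition~\ref{def:levelset}, and that maintaining co-accessibility under trimming is free. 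Finally, one records that when $\calV\neq\emptyset$ the empty mapping need not be special-cased here (it was handled before trimming), and when the trimmed DAG is empty there is nothing to enumerate, matching the reductions already set up in Section~\ref{sec:extended}.
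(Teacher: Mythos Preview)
Your high-level plan matches the paper's, but there is a correctness gap in the enumeration. After jumping from $\Lambda$ to $\Lambda'$ at level $j = \JL(\Lambda)$, you branch only on the \emph{nonempty} marker labels available at $\Lambda'$ (plus the termination case). This misses mappings: $j$ is merely the first level at which \emph{some} vertex reachable from $\Lambda$ has a nonempty outgoing marker edge, but other vertices of $\Lambda'$ may have only $\emptyset$-edges there, and paths through those vertices may place their first nonempty marker at some $\ell' > j$. Your ``exchange argument'' that $\ell' = j$ fails because marker labels record their level: a mapping whose minimum level is $\ell'$ cannot arise from any branch that consumes a nonempty marker edge at level $j < \ell'$, and re-routing through empty-marker reachability does not change the mapping. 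The paper's fix is to also branch on the label $\emptyset$ at $\Lambda'$, enumerated last (Proposition~\ref{prp:extendedVA}); that branch recurses and jumps again. Since the recursion depth can then exceed $r$, your delay argument would break; the paper recovers the $O(r)$ stack bound by observing that the $\emptyset$-branch, being last, is a tail call and can be eliminated (Claim~\ref{claim:delay}).

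A secondary gap is in the preprocessing: you budget one reachability matrix per level and attribute the spare $\Width$ factor to ``auxiliary tables indexed by a vertex''. But the jump target depends on $\Lambda$, not just on $\lvl(\Lambda)$: at enumeration time one computes $\JL(\Lambda) = \min_{v\in\Lambda}\JL(v)$ and needs the precomputed matrix $\Reach(i,\JL(\Lambda))$. The paper therefore stores $\Reach(i,j)$ for every $j \in \rlvl(i) \colonequals \{\JL(v) : \lvl(v)=i\}$, up to $\Width$ matrices per level, and shows these can be filled in decreasing $i$ because $j \in \rlvl(i)$ with $j > i$ implies $j \in \rlvl(i+1)$ (Claim~\ref{clm:efficientmatrix}). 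This is where the $\Width^{\omega+1}$ factor actually comes from.
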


Remember that, as part of our preprocessing, we have ensured that the
leveled normalized mapping DAG~$G$ has been trimmed. We will
also preprocess~$G$ to ensure that, given any vertex, we can access its adjacency
list (i.e., the list of
its outgoing edges) in some sorted order on the labels, where we assume that
$\emptyset$-edges come last. 
This sorting can be
done in linear time on the RAM model \cite[Theorem~3.1]{grandjean1996sorting},
so the preprocessing is in~$O(\card{G})$.

Our general enumeration algorithm is then presented as
Algorithm~\ref{alg:main}. We explain the missing pieces next. The function
\textsc{Enum} is initially called with $\Lambda = \{v_0\}$, the level set containing
only the initial vertex, and with $\var$ being the empty set.

\begin{algorithm}
  \caption{Main enumeration algorithm}\label{alg:main}
  \begin{algorithmic}[1]
    \Procedure{enum}{$\Lambda, \var$}
      \State $\Lambda' \colonequals \,$\Call{Jump}{$\Lambda$}
      \If{$\Lambda'$ is the singleton $\{v_{\f}\}$ of the final vertex}\label{line:final}
        \State \Call{Output}{$\var$}
      \Else
        \For{$(\lvar, \Lambda'')$ in \Call{\nextlevel}{$\Lambda'$}}
          \State \Call{enum}{$\Lambda'', \lvar \cup \var$}
        \EndFor
      \EndIf
    \EndProcedure
  \end{algorithmic}
\end{algorithm}

For simplicity, let us assume for now that the \textsc{Jump} function just
computes the identity, i.e., $\Lambda'\colonequals\Lambda$. As for the call
$\nextlevel(\Lambda')$, it returns the pairs $(\lvar, \Lambda'')$ where:
\begin{itemize}
  \item The label set $\lvar$ is an edge label
    such that there is a marker edge $e$ labeled with $\lvar$ that starts at some vertex
    of~$\Lambda'$
  \item The level set~$\Lambda''$ is formed
    of all the vertices $w$ at level $\lvl(\Lambda')+1$ that can be
reached by first following a marker edge $e$ like in the bullet point above, and then following
    some $\epsilon$-edge. Formally, a vertex~$w$
    is in~$\Lambda''$ if and only if 
    there is an edge labeled $\lvar$ from some
    vertex $v \in \Lambda'$ to some vertex~$v'$, and there is an $\epsilon$-edge
    from~$v'$ to~$w$.
  \end{itemize}
Remember that, as the mapping DAG is normalized, we know that all edges starting at
vertices of the level set~$\Lambda'$ are marker edges (several of which
may have the same label); and for any target~$v'$ of these edges, all edges that
leave~$v'$ are $\epsilon$-edges whose targets~$w$ are at the level
$\lvl(\Lambda')+1$.

\begin{algorithm}
  \caption{Enumeration algorithm for Proposition~\ref{prp:extendedVA}}
  \label{alg:extendedVA}
  \begin{algorithmic}[1]
    \State \textbf{input:} Level set~$\Lambda' = \{v_1, \ldots, v_n\}$
    \For{$j \in \{1, \ldots, n\}$}
      \State $E_j \leftarrow $ outgoing edges of~$v_j$
      \State $p_j \leftarrow 0$
    \EndFor
    \While{there is $1 \leq j \leq n$ such that $p_j < \card{E_j}$}
      \State $\lvar \leftarrow \min_{(j: p_j < \card{E_j})} E_j[p_j]\mathrm{.label}$
      \State $\Lambda'_2 \leftarrow \emptyset$
      \For{$j \in \{1, \ldots, n\}$}
        \While{$p_j < \card{E_j}$ and $E_j[p_j]\mathrm{.label} = \lvar$}
          \State $\Lambda'_2 \leftarrow \Lambda'_2 \cup
          \{E_j[p_j]\mathrm{.target}\}$
          \State $p_j \leftarrow p_j + 1$
        \EndWhile
      \EndFor
      \State $\Lambda'' \leftarrow \emptyset$
      \For{$v' \in \Lambda'_2$}
        \For{$e$ outgoing edge of~$v'$}
          \State $\Lambda'' \leftarrow \Lambda'' \cup \{e\mathrm{.target}\}$
        \EndFor
      \EndFor
      \State \Call{Output}{$\lvar, \Lambda''$}
    \EndWhile
  \end{algorithmic}
\end{algorithm}

It is easy to see that the $\nextlevel$ function can be computed efficiently:

\begin{proposition}\label{prp:extendedVA}
  Given a leveled trimmed normalized mapping DAG $G$ with width $\Width$, and a 
  level set~$\Lambda'$,
  we can enumerate without duplicates
  all the pairs $(\lvar, \Lambda'') \in \nextlevel(\Lambda')$  with delay
  $O(\Width^2 \times \card{\lvar})$ in an order such that $\lvar = \emptyset$ comes
  last if it is returned.
\end{proposition}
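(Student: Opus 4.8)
The plan is to use Algorithm~\ref{alg:extendedVA}, which is essentially a $k$-way merge over the sorted adjacency lists of the vertices of~$\Lambda'$. First I would recall that the preprocessing has sorted every adjacency list by label with $\emptyset$ coming last, so for each of the at most $\Width$ vertices $v_j \in \Lambda'$ we have a pointer $p_j$ into its list $E_j$ of outgoing (marker) edges. At each iteration we compute $\lvar$ as the minimum label currently pointed to among all the lists; since the lists are sorted, this is the next label to output in the claimed order, and the fact that $\emptyset$ is placed last in every list guarantees that $\lvar = \emptyset$ is produced last if at all. Then, for this value of~$\lvar$, we sweep every list forward over the (consecutive, by sortedness) block of edges labeled $\lvar$, collecting their targets into an intermediate set $\Lambda'_2$; because the mapping DAG is normalized, every such target is then the source of only $\epsilon$-edges, so we follow those $\epsilon$-edges to collect $\Lambda''$ at level $\lvl(\Lambda')+1$. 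This is exactly the definition of the pair $(\lvar, \Lambda'')$ in $\nextlevel(\Lambda')$.

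For \emph{correctness} (enumerating $\nextlevel(\Lambda')$ exactly, without duplicates), the key observations are: (i) each distinct label $\lvar$ appearing on some marker edge out of $\Lambda'$ is selected exactly once as the minimum, since once all pointers have moved past the $\lvar$-block no list will ever point to $\lvar$ again (sortedness), and a label strictly smaller than all remaining ones cannot reappear; (ii) for the selected $\lvar$, the set $\Lambda'_2$ collected by the inner \textbf{while} loops is precisely $\{v' : \exists v \in \Lambda',\ (v,\lvar,v') \in E\}$, because the $\lvar$-labeled edges in each $E_j$ form a contiguous run that the loop consumes entirely; and (iii) $\Lambda''$ is then exactly $\{w : \exists v' \in \Lambda'_2,\ (v',\epsilon,w)\in E\}$, matching the definition. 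So the outputs are in bijection with the pairs in $\nextlevel(\Lambda')$, with no duplicates. One should also note $\Lambda''$ is nonempty (hence a valid level set) because $G$ is trimmed: every target $v'$ is co-accessible and, being a letter-state vertex, must have an outgoing $\epsilon$-edge.

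For the \emph{delay} bound, consider one iteration producing $(\lvar,\Lambda'')$. Computing the minimum label costs $O(\Width)$ comparisons (one per list), where each label comparison is $O(\card{\lvar})$ in the worst case, so $O(\Width \times \card{\lvar})$. The sweep over the $\lvar$-blocks touches some number of edges; over the whole run this is amortized, but for a per-delay bound we charge each iteration only the edges it consumes, and the number of $\lvar$-labeled edges out of $\Lambda'$ is at most $\Width$ (at most one per source vertex that actually has such an edge —- in fact there could be several, but bounding crudely each list contributes, and the total work to build $\Lambda'_2$ is $O(\Width)$ set insertions plus the label checks, i.e. $O(\Width \times \card{\lvar})$). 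Finally, expanding $\Lambda'_2$ through $\epsilon$-edges to form $\Lambda''$: $\card{\Lambda'_2} \leq \Width$ and each such vertex has at most $\Width$ outgoing $\epsilon$-edges (all targets at the next level, of which there are at most $\Width$), so this is $O(\Width^2)$. Summing, the delay is $O(\Width^2 + \Width \times \card{\lvar}) = O(\Width^2 \times \card{\lvar})$, using $\card{\lvar} \geq 1$ when $\lvar \neq \emptyset$ and handling the single $\emptyset$ case within the $O(\Width^2)$ term. The main subtlety to get right is the amortization of the list sweeps so that the cost genuinely lands on the iteration that outputs the corresponding label rather than being smeared across iterations; I expect that to be the only delicate point, and it is handled by the contiguity-of-equal-labels property from the sorted preprocessing.
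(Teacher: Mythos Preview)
Your proposal follows essentially the same approach as the paper: use Algorithm~\ref{alg:extendedVA} as a $k$-way merge over the sorted adjacency lists, extract the minimum label, sweep the contiguous blocks, then follow the $\epsilon$-edges. The correctness argument and the ordering argument ($\emptyset$ last) match the paper's reasoning.

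One small point in your delay analysis deserves tightening. When you bound the work to build $\Lambda'_2$, you write that the number of $\lvar$-labeled edges out of~$\Lambda'$ is ``at most~$\Width$'' (with a parenthetical hedge), and then conclude $O(\Width \times \card{\lvar})$ for that step. In fact a single vertex of~$\Lambda'$ may have up to~$\Width$ outgoing marker edges with the \emph{same} label (one per possible target at that level), so the sweep can touch up to~$\Width^2$ edges in one iteration, each requiring an $O(\card{\lvar})$ label comparison; the paper bounds this step directly by $O(\Width^2 \times \card{\lvar})$. Your final bound $O(\Width^2 \times \card{\lvar})$ is still correct, but the intermediate $O(\Width \times \card{\lvar})$ term is off by a factor of~$\Width$. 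No amortization across iterations is needed: each iteration's sweep is bounded outright by~$\Width^2$ edges.
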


\begin{proof}
  The algorithm is outlined as Algorithm~\ref{alg:extendedVA}.
  Intuitively, we simultaneously go over the sorted lists of the outgoing edges of each
  vertex of~$\Lambda'$, of which there are at most~$\Width$, and we merge them.
  Specifically, as long as we are not done traversing all lists,
  we consider the smallest value of $\lvar$ (according to the order) that occurs at the
  current position of one of the lists. Then, we move forward in each list until
  the list is empty or the edge label at the current position is no longer equal 
  to~$\lvar$, and
  we consider the set~$\Lambda'_2$ of all vertices~$v'$ that are the targets of
  the edges that we have seen. This considers
  at most~$\Width^2$ edges and reaches at most $\Width$ vertices
  (which are at the same level as~$\Lambda'$), and the total time spent reading
  edge labels is in $O(\card{\lvar})$, so the process is
  in~$O(\Width^2 \times \card{\lvar})$ so far. Now, we consider the outgoing
  edges of all vertices~$v' \in \Lambda'_2$ (all are $\epsilon$-edges) and return the
  set~$\Lambda''$ of the vertices~$w$ to which they lead: this only adds 
  $O(\Width^2)$ to the running time because we consider at most $\Width$
  vertices~$v'$ with 
  at most~$\Width$ outgoing edges each. Last, $\lvar = \emptyset$ comes
  last because of our assumption on the order of adjacency lists.
\end{proof}

The design of Algorithm~\ref{alg:main} is justified by the fact that, for
any level set~$\Lambda'$, the set $\calM(\Lambda')$ can be partitioned based on the value
of~$\lvar$. Formally:

\begin{claim}
  \label{clm:decompose}
  For any level set $\Lambda$ of~$G$ which is not the final vertex, we have:
  \begin{equation}
    \calM(\Lambda) \quad=\bigcup_{(\lvar, \Lambda'') \in \nextlevel(\Lambda)}
    \{\lvar \cup \alpha \mid \alpha \in \calM( \Lambda'')\}\;. \label{eq:partitioncalL}
    \end{equation}
    Furthermore, this union is disjoint, non-empty, and none of its terms is
    empty.
\end{claim}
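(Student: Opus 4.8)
The plan is to unfold the definition of $\calM(\Lambda)$ one "step" at a time and match it against the right-hand side. First I would recall that since $\Lambda$ is a level set which is not the final vertex, and the mapping DAG is normalized, every path $\pi$ from a vertex of $\Lambda$ to $v_{\f}$ must begin with a marker edge $e = (v, \lvar, v')$ with $v \in \Lambda$, followed immediately by an $\epsilon$-edge $(v', \epsilon, w)$, after which the remaining suffix $\pi'$ is a path from $w$ to $v_{\f}$ starting (if nonempty) with a marker edge; and $w$ lies at level $\lvl(\Lambda)+1$, so $w$ belongs to the level set $\Lambda''$ associated with $\lvar$ in $\nextlevel(\Lambda)$. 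Conversely, any $\lvar$-labelled marker edge out of $\Lambda$ can be extended by an $\epsilon$-edge (since $G$ is trimmed, such an $\epsilon$-edge exists), and any path from a vertex of the resulting $\Lambda''$ to $v_{\f}$ can be prepended with these two edges to give a path from $\Lambda$ to $v_{\f}$. This correspondence between paths shows the set equality in~\eqref{eq:partitioncalL}: the pre-mapping of $\pi$ is $\lvar \cup \mu(\pi')$, and $\mu(\pi')$ ranges exactly over $\calM(\Lambda'')$, using that the union of labels along $\pi$ is disjoint (a property of mapping DAGs) so the marker edge's contribution $\lvar$ is disjoint from everything in the suffix.

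Next I would handle the three additional assertions. Non-emptiness of $\calM(\Lambda)$ and of each term: since $G$ is trimmed, every vertex of $\Lambda$ is co-accessible, so there is at least one path from $\Lambda$ to $v_{\f}$, hence $\nextlevel(\Lambda)$ is non-empty and $\calM(\Lambda) \neq \emptyset$; and for each $(\lvar, \Lambda'') \in \nextlevel(\Lambda)$, the vertices of $\Lambda''$ are reached from co-accessible vertices, so they too are co-accessible (or one argues directly that $\Lambda''$ is non-empty by construction of $\nextlevel$ and co-accessibility gives a path from it), whence $\calM(\Lambda'') \neq \emptyset$ and the term $\{\lvar \cup \alpha \mid \alpha \in \calM(\Lambda'')\}$ is non-empty.

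The one point needing real care — and the main obstacle — is \emph{disjointness} of the union, i.e.\ that two distinct pairs $(\lvar_1, \Lambda_1'')$ and $(\lvar_2, \Lambda_2'')$ in $\nextlevel(\Lambda)$ contribute disjoint sets of pre-mappings. This is clear when $\lvar_1 \neq \lvar_2$: every pre-mapping in the first term contains all pairs of $\lvar_1$ and every pre-mapping in the second contains all pairs of $\lvar_2$, and these two sets of pairs differ; moreover all pairs of $\lvar_1$ and $\lvar_2$ have first coordinate (position) equal to $\lvl(\Lambda)$, while any pair contributed by a suffix $\alpha \in \calM(\Lambda_i'')$ has position $\geq \lvl(\Lambda)+1$, so no pair of $\lvar_2$ can be "hidden" inside some $\alpha$ in the first term — thus a pre-mapping in the first term cannot equal one in the second. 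The subtlety is that $\nextlevel$ returns each label set $\lvar$ \emph{only once} (Proposition~\ref{prp:extendedVA} enumerates them without duplicates), so we never have two distinct pairs with the same $\lvar$; hence the "$\lvar_1 = \lvar_2$" case does not arise and disjointness follows. I would state this explicitly to close the argument.
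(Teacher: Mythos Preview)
Your proposal is correct and follows essentially the same approach as the paper: decompose each path from~$\Lambda$ to~$v_{\f}$ into a marker edge, an $\epsilon$-edge, and a suffix; use the leveled structure to show that the position component of pairs in~$\lvar$ is~$\lvl(\Lambda)$ while all pairs in any $\alpha\in\calM(\Lambda'')$ have strictly larger position, which forces~$\lvar$ to be uniquely recoverable from the resulting pre-mapping and hence gives disjointness; and use trimming for the non-emptiness claims. One small correction: your appeal to Proposition~\ref{prp:extendedVA} for ``each $\lvar$ occurs only once'' is not the right justification---that proposition only says the \emph{pairs} are enumerated without duplicates, not that the first components are distinct---but the conclusion you need follows directly from the definition of $\nextlevel$, since $\Lambda''$ is \emph{determined} by~$\lvar$ (it is the set of all vertices reachable via a $\lvar$-edge then an $\epsilon$-edge), so distinct pairs in $\nextlevel(\Lambda)$ automatically have distinct~$\lvar$.
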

\begin{proof}
  The definition of a level set and of a normalized mapping DAG ensures that we can decompose any path $\pi$
  from~$\Lambda$
  to~$v_{\f}$ as a marker edge~$e$ from $\Lambda$ to some vertex
  $v'$, an $\epsilon$-edge from~$v'$ to some vertex~$w$, 
  and a path $\pi'$
  from~$w$ to~$v_{\f}$. Further, the set of such $w$ is clearly a level set. Hence,
  the left-hand side of 
  Equation~\eqref{eq:partitioncalL} is included in the right-hand side.
  Conversely, given such $v$, $v'$, $w$, and $\pi'$, we can combine them into a path
  $\pi$, so the right-hand side is included in the left-hand side. This proves
  Equation~\eqref{eq:partitioncalL}.

  We show that the union is disjoint. Recall that the 
  definition of a leveled mapping DAG (Definition~\ref{def:leveled}) implies
  that $\lvar$ is a set of pairs whose second component is $\lvl(\Lambda)$,
  and that each mapping in $\calM(\Lambda'')$ is a set of pairs whose second
  components are values strictly greater than $\lvl(\Lambda)$. Thus, each
  mapping in~$\calM(\Lambda)$ can only be obtained for the value of~$\lvar$
  which is equal to the subset of the pairs of the mapping whose second
  component is~$\lvl(\Lambda)$.

  We show that the union is non-empty. This is because $\Lambda$ is non-empty and its
  vertices must be co-accessible so they must have some outgoing
  marker edge, which implies that $\nextlevel(\Lambda)$ is non-empty.

  We last show that none of the terms of the union is empty. This is because, for each 
  $(\lvar,\Lambda'') \in \nextlevel(\Lambda)$, we know that $\Lambda''$
  is non-empty because the mapping DAG is trimmed so all vertices are
  co-accessible.
\end{proof}

Thanks to this claim, we could easily prove by induction
that Algorithm~\ref{alg:main} correctly enumerates
$\calM(G)$ when \textsc{Jump} is the identity function.  
However, this algorithm would not achieve the desired delay bounds: 
indeed, it may be the case that $\nextlevel(\Lambda')$ only contains
$\lvar = \emptyset$, and then the recursive call to \textsc{Enum} would not make progress
in constructing the mapping,
so the delay would not generally be linear in the size of the mapping.
To avoid this issue, we use the \textsc{Jump} function to directly ``jump'' to a place in the mapping DAG where we can read a
label different from~$\emptyset$. Let us first give the relevant definitions:

\begin{definition}
  \label{def:jump}
  Given a level set $\Lambda$ in a leveled mapping DAG $G$, the
  \emph{jump level} $\JL(\Lambda)$ of~$\Lambda$ is the first level
  $j \geq \lvl(\Lambda)$ containing a vertex $v'$ such that some
  $v \in \Lambda$ has a path to~$v'$ and such that $v'$ is either the
  final vertex or has an outgoing edge with a label which is
  $\neq \epsilon$ and $\neq \emptyset$.
  In particular we have $\JL(\Lambda) = \lvl(\Lambda)$ if some
  vertex in~$\Lambda$ already has an outgoing edge with such a label, or if
  $\Lambda$ is the singleton set containing only the final vertex.
  
  The \emph{jump set} of~$\Lambda$ is then
  $\JS(\Lambda) \colonequals \Lambda$ if
  $\JL(\Lambda) = \lvl(\Lambda)$, and otherwise $\JS(\Lambda)$ is 
  formed of all vertices at level~$\JL(\Lambda)$ to
  which some $v \in \Lambda$ have a directed path whose last edge is
  labeled~$\epsilon$. This ensures that
 $\JS(\Lambda)$ is always a level set.
\end{definition}

\begin{example}
  In the mapping DAG in Figure~\ref{fig:figure}, we have
  $\JL(\{(q_2,3),(q_5,3)\})=5$, as the reachable node $(q_9,5)$ has an outgoing
  edge labeled $\{(\close{x},5)\}$. The set $\JS(\{(q_2,3),q_5,3)\})$ is
  $\{(q_2,5),(q_9,5)\}$, as $(q_2,5)$ is reachable from $(q_2,3)$ and $(q_9,5)$
  is reachable from $(q_5,3)$.
\end{example}

The definition of $\JS$ ensures that we can jump from $\Lambda$
to~$\JS(\Lambda)$ when
enumerating mappings, and it will not change the result because we only
jump over
$\epsilon$-edges and $\emptyset$-edges:

\begin{claim}
  \label{clm:jump}
  For any level set $\Lambda$ of~$G$, we have
  $\calM(\Lambda) = \calM(\JS(\Lambda))$.
\end{claim}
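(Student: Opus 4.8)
The plan is to split on whether $\JL(\Lambda) = \lvl(\Lambda)$. If so, then by Definition~\ref{def:jump} we have $\JS(\Lambda) = \Lambda$ and there is nothing to prove, so from now on I would assume $\JL(\Lambda) > \lvl(\Lambda)$; note this forces $\Lambda \neq \{v_{\f}\}$. The single observation that drives the whole argument is the following consequence of the definition of the jump level: every vertex $u$ that is reachable from $\Lambda$ and whose level lies in the interval $[\lvl(\Lambda), \JL(\Lambda))$ is not the final vertex and has all its outgoing edges labeled $\epsilon$ or $\emptyset$ (otherwise $u$ would witness a smaller jump level). Consequently, if $\rho$ is any path all of whose edges leave such vertices, then every marker edge of $\rho$ is $\emptyset$-labeled, so $\mu(\rho) = \emptyset$. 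I would establish set equality $\calM(\Lambda) = \calM(\JS(\Lambda))$ by two inclusions, both using this observation to show that prepending a suitable $\epsilon/\emptyset$-only path does not change the pre-mapping.

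For $\calM(\JS(\Lambda)) \subseteq \calM(\Lambda)$: take a path $\pi'$ from some $w \in \JS(\Lambda)$ to $v_{\f}$. By definition of $\JS(\Lambda)$ there is $v \in \Lambda$ and a path $\rho$ from $v$ to $w$ whose last edge is labeled $\epsilon$. Since $\rho$ runs from level $\lvl(\Lambda)$ to level $\JL(\Lambda)$ and levels are nondecreasing along edges, and since its last edge is an $\epsilon$-edge (hence strictly increases the level), all vertices of $\rho$ other than $w$ have level in $[\lvl(\Lambda), \JL(\Lambda))$; they are all reachable from $\Lambda$ as prefixes of $\rho$. By the observation, $\mu(\rho) = \emptyset$, so the concatenation $\rho \cdot \pi'$ is a path from $v \in \Lambda$ to $v_{\f}$ with pre-mapping $\mu(\rho) \cup \mu(\pi') = \mu(\pi')$; hence $\mu(\pi') \in \calM(\Lambda)$.

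For $\calM(\Lambda) \subseteq \calM(\JS(\Lambda))$: take a path $\pi$ from some $v \in \Lambda$ to $v_{\f}$. Since $v_{\f}$ is reachable from $\Lambda$ and is "good" (it is the final vertex), we have $\JL(\Lambda) \le \Depth$, so $\pi$, which ascends from level $\lvl(\Lambda)$ to level $\Depth$ in steps of $0$ or $1$, passes through some vertex at level $\JL(\Lambda)$; let $w$ be the first such vertex on $\pi$. As $w$ is not $v$ (different levels) it has an incoming edge on $\pi$ coming from a vertex at level $\JL(\Lambda) - 1$, which must be an $\epsilon$-edge; writing $\pi = \rho \cdot \pi'$ with $\rho$ from $v$ to $w$ and $\pi'$ from $w$ to $v_{\f}$, this exhibits $w \in \JS(\Lambda)$. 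All vertices of $\rho$ before $w$ are reachable from $\Lambda$ and have level in $[\lvl(\Lambda), \JL(\Lambda))$ since $w$ is the first vertex of $\pi$ at level $\JL(\Lambda)$, so by the observation $\mu(\rho) = \emptyset$ and $\mu(\pi) = \mu(\pi') \in \calM(\JS(\Lambda))$.

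The routine parts are the concatenation/splitting of paths and the verification that $\JS(\Lambda)$ is a genuine level set (already noted in Definition~\ref{def:jump}, using normalization of the mapping DAG). The main obstacle is the level bookkeeping in the second inclusion: one must pick $w$ as the \emph{first} vertex of $\pi$ at level $\JL(\Lambda)$ so that (i) the edge of $\pi$ entering $w$ is genuinely an $\epsilon$-edge, witnessing $w \in \JS(\Lambda)$, and (ii) every vertex of $\rho$ strictly before $w$ really does lie below level $\JL(\Lambda)$, so that the definition of the jump level applies to it. Getting both facts simultaneously — together with the symmetric argument that the witnessing path in the $\supseteq$ direction also stays strictly below level $\JL(\Lambda)$ except at its endpoint $w$ — is the only delicate point; everything else follows mechanically from Definitions~\ref{def:leveled} and~\ref{def:jump}.
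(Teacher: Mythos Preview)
Your proposal is correct and follows essentially the same approach as the paper: decompose a path through the jump set into a prefix that only uses $\epsilon$- and $\emptyset$-edges (hence contributes nothing to the pre-mapping) and a suffix from $\JS(\Lambda)$ to $v_{\f}$, and conversely prepend such a prefix. The paper's proof is terser and leaves implicit the level bookkeeping you spell out (in particular, it does not explicitly argue why the prefix path consists only of $\epsilon/\emptyset$-edges, nor does it isolate the ``first vertex at level $\JL(\Lambda)$'' to witness membership in $\JS(\Lambda)$); your more careful treatment of these points is a genuine improvement in rigor but not a different argument.
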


\begin{proof}
  As $\JS(\Lambda)$ contains all vertices from level $\JL(\Lambda)$ that can be reached from
  $\Lambda$,
  any path $\pi$ from a vertex $u \in \Lambda$ to the final vertex
  can be decomposed into a path $\pi_{uw}$ from $u$ to a vertex $w \in
  \JS(\Lambda)$  and a path $\pi_{wv}$ from~$w$ to~$v$. By definition of
  $\JS(\Lambda)$, 
  we know that all edges in~$\pi_{uw}$ are labeled with~$\epsilon$
  or~$\emptyset$, so
  $\mu(\pi) = \mu(\pi_{wv})$. Hence, we have $\calM(\Lambda)
  \subseteq \calM(\JS(\Lambda))$.
  
  Conversely, given a path $\pi_{wv}$ from a vertex $w \in \JS(\Lambda)$ to the
  final vertex, the definition of~$\JS(\Lambda)$ ensures that there is a vertex $u \in
  \Lambda$ and a
  path $\pi_{uw}$ from~$u$ to~$w$, which again consists only of
  $\epsilon$-edges or $\emptyset$-edges.
  Hence, letting $\pi$ be the concatenation of $\pi_{uw}$ and
  $\pi_{wv}$, we have $\mu(\pi_{wv}) = \mu(\pi)$ and $\pi$ is a path
  from $\Lambda$ to the final vertex. Thus, we have $\calM(\JS(\Lambda))
  \subseteq \calM(\Lambda)$, concluding the proof.
\end{proof}

Claims \ref{clm:decompose} and~\ref{clm:jump} imply that
Algorithm~\ref{alg:main} is correct with this implementation of~$\JS$:

\begin{proposition}
  \label{prp:correct}
  \textsc{Enum}$(\{v_0\}, \emptyset)$ correctly enumerates $\calM(G)$ (without
  duplicates).
\end{proposition}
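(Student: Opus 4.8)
The plan is to prove correctness of Algorithm~\ref{alg:main} by induction on the structure of the recursion, combining the decomposition property (Claim~\ref{clm:decompose}) with the jump invariance (Claim~\ref{clm:jump}). First I would establish the key loop invariant: whenever \textsc{Enum} is called with arguments $(\Lambda, \var)$, the set $\Lambda$ is a level set of~$G$, and the mappings we still need to output in this subcall are exactly $\{\var \cup \alpha \mid \alpha \in \calM(\Lambda)\}$, with all pairs in $\var$ having a level strictly smaller than $\lvl(\Lambda)$ so that the union $\var \cup \alpha$ is always disjoint and the final result is valid. The initial call $\textsc{Enum}(\{v_0\}, \emptyset)$ satisfies this, since $\{v_0\}$ is a level set and $\calM(G) = \calM(\{v_0\})$ by definition.

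Next I would verify the invariant is preserved. Inside the call, we set $\Lambda' \colonequals \JS(\Lambda)$; by Claim~\ref{clm:jump} we have $\calM(\Lambda') = \calM(\Lambda)$, and $\JL(\Lambda) \geq \lvl(\Lambda)$ ensures the level of $\Lambda'$ is still at least that of all pairs in $\var$, so the invariant's side conditions are maintained. There are then two cases. If $\Lambda'$ is the singleton $\{v_\f\}$, then $\calM(\Lambda')$ consists exactly of the single empty pre-mapping (the only path from $v_\f$ to itself is the trivial one), so the set of mappings to output is just $\{\var\}$, which is what line~\ref{line:final} does. Otherwise $\Lambda'$ is a genuine level set that is not the final vertex, and Claim~\ref{clm:decompose} gives $\calM(\Lambda') = \bigcup_{(\lvar, \Lambda'') \in \nextlevel(\Lambda')} \{\lvar \cup \alpha \mid \alpha \in \calM(\Lambda'')\}$, a disjoint union of nonempty terms. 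Substituting into the invariant, the mappings to output are $\bigcup_{(\lvar,\Lambda'') \in \nextlevel(\Lambda')} \{\var \cup \lvar \cup \alpha \mid \alpha \in \calM(\Lambda'')\}$, which is precisely what the for-loop produces via the recursive calls $\textsc{Enum}(\Lambda'', \lvar \cup \var)$; each such call has a level set as first argument and satisfies the side condition on $\var \cup \lvar$ because all pairs in $\lvar$ have level $\lvl(\Lambda') = \lvl(\Lambda'')$ (using leveledness, Definition~\ref{def:leveled}), so the recursive invariant holds.

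For termination I would note that the level strictly increases along each recursive call: going from $\Lambda'$ to $\Lambda''$ in $\nextlevel$ crosses exactly one $\epsilon$-edge, so $\lvl(\Lambda'') = \lvl(\Lambda') + 1 > \lvl(\Lambda)$, and levels are bounded by $\Depth$, so the recursion depth is finite. For the ``without duplicates'' claim, I would argue that distinct branches of the recursion produce disjoint sets of outputs: at each level the disjointness in Claim~\ref{clm:decompose} guarantees that two distinct pairs $(\lvar_1, \Lambda_1'')$ and $(\lvar_2, \Lambda_2'')$ in $\nextlevel(\Lambda')$ with $\lvar_1 \neq \lvar_2$ lead to disjoint mapping families, while $\nextlevel$ is enumerated without duplicates so the same $(\lvar, \Lambda'')$ is never revisited; and since the projection of a mapping onto pairs of a given level uniquely determines the value of $\lvar$ chosen at that level, no mapping can be generated along two different root-to-leaf paths of the recursion tree.

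The main obstacle here is largely bookkeeping rather than deep mathematics: one must be careful that $\JS(\Lambda)$ is always a well-defined level set (which is asserted in Definition~\ref{def:jump}, relying on trimmedness so that the jump level exists and the relevant vertices are co-accessible) and that the side condition ``all pairs in $\var$ have level $< \lvl(\Lambda)$'' is genuinely preserved across the jump --- this is where the inequality $\JL(\Lambda) \geq \lvl(\Lambda)$ is essential, since a jump never decreases the level. A subtle point worth spelling out is the base case $\Lambda' = \{v_\f\}$: one should check that this is reached exactly when $\var$ is already a complete valid mapping, which follows because the mapping DAG is normalized and trimmed, so every path from a co-accessible vertex eventually reaches $v_\f$, and the pre-mapping of the trivial path at $v_\f$ is empty.
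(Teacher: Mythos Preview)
Your proof is correct and follows essentially the same approach as the paper: an induction on the recursion establishing that \textsc{Enum}$(\Lambda,\var)$ enumerates $\{\var \cup \alpha \mid \alpha \in \calM(\Lambda)\}$, using Claim~\ref{clm:jump} to pass through the jump and Claim~\ref{clm:decompose} to split along $\nextlevel$. You spell out more bookkeeping than the paper does (termination, duplicate-freeness, the level side condition on $\var$), which is fine; note only the small slip where you write ``all pairs in $\lvar$ have level $\lvl(\Lambda') = \lvl(\Lambda'')$'' --- in fact $\lvl(\Lambda'') = \lvl(\Lambda')+1$, which is exactly what makes your strict-inequality side condition go through.
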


\begin{proof}
 We show the stronger claim that for every level set $\Lambda$, and for
  every set of labels $\var$, we have that
  \textsc{Enum}$(\Lambda, \var)$ enumerates (without duplicates) the set
  $\var \uplus \calM(\Lambda) \colonequals \{\var \cup \alpha \mid \alpha \in
  \calM(\Lambda)\}$.
  The base case is when $\Lambda$ is the final vertex, and 
  then $\calM(\Lambda) = \{\{\}\}$ and the algorithm correctly returns $\{\var\}$.

  For the induction case, let us consider a level set $\Lambda$ which is not the
  final vertex, and some set of labels $\var$.
  We let $\Lambda' \colonequals \JS(\Lambda)$, and by Claim~\ref{clm:jump} we
  have that $\calM(\Lambda')=\calM(\Lambda)$. Now 
  we know by
  Claim~\ref{clm:decompose} that $\calM(\Lambda')$ can be written as in Equation~\eqref{eq:partitioncalL}
  and that the union is disjoint;
  the algorithm evaluates this union. So it suffices to show that, for each 
  $(\lvar, \Lambda'') \in \nextlevel(\Lambda')$, the corresponding iteration of the
  \textbf{for} loop enumerates (without duplicates) the set $(\var \cup
  \lvar) \uplus \calM(\Lambda'')$. 
  By induction hypothesis, the call \textsc{Enum}$(\JS(\Lambda'), \var
  \cup \lvar)$ enumerates (without duplicates) the set  $(\var \cup \lvar)
  \uplus
  \calM(\JS(\Lambda''))$.
  So this establishes that the algorithm is correct.
\end{proof}

What is more, Algorithm~\ref{alg:main} now achieves the desired
delay bounds, as we will show. Of course, this relies on the fact that
the $\JS$ function can be efficiently precomputed and
evaluated. We only state this fact for now, and prove it in the next section:

\begin{proposition}
  \label{prp:jump}
  Given a leveled mapping DAG $G$ with width $\Width$ and depth~$\Depth$, we can
  preprocess $G$ in time $O(\Depth \times \Width^{\omega+1})$ such that, given
  any level set $\Lambda$ of~$G$, we can compute the jump set $\JS(\Lambda)$
  of~$\Lambda$ in time
  $O(\Width^{2})$.
\end{proposition}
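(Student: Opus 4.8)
The plan is to precompute, for every level $\ell$, a "reachability matrix" that records which vertices at level $\ell$ can reach which vertices at the next relevant level, and to chain these matrices together using Boolean matrix multiplication so that jumps of arbitrary length can be resolved in a constant number of matrix operations at enumeration time. Concretely, I would first define for each level $\ell < \Depth$ a $\Width \times \Width$ Boolean matrix $T_\ell$ whose entry $(u,v)$ is $1$ iff there is a path from vertex $u$ at level $\ell$ to vertex $v$ at level $\ell+1$ all of whose edges (other than the single $\epsilon$-edge crossing to level $\ell+1$) are $\emptyset$-marker-edges — i.e., a "boring" step. I would also flag, at each level, which vertices have an outgoing edge with a label that is neither $\epsilon$ nor $\emptyset$ (call these the "interesting" vertices), plus the final vertex. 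The jump level $\JL(\Lambda)$ is then the least $j \geq \lvl(\Lambda)$ such that some product $T_{\lvl(\Lambda)} \cdot T_{\lvl(\Lambda)+1} \cdots T_{j-1}$, applied to the indicator vector of $\Lambda$, hits an interesting vertex (or, for $j = \lvl(\Lambda)$, $\Lambda$ itself already contains an interesting vertex), and $\JS(\Lambda)$ is obtained by restricting that reached set to interesting vertices — wait, more precisely $\JS(\Lambda)$ is the full set of level-$\JL(\Lambda)$ vertices reached via $\epsilon/\emptyset$-paths, which is exactly the image vector at that level.

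The subtlety is getting the jump query down to $O(\Width^2)$ rather than something that scales with the jump distance. To do this I would build a balanced-binary-tree / pointer-jumping style index over the levels: group the levels into blocks and precompute, for block boundaries, the composed reachability matrix spanning the whole block, and recurse, so that any contiguous range of levels decomposes into $O(\log \Depth)$ precomputed blocks — but that gives $O(\Width^2 \log \Depth)$ per query, which is too much. The cleaner approach, and the one I expect the paper uses, is to observe that we never need an arbitrary range: from a given starting level $\ell$, the destination $\JL(\Lambda)$ depends only on $\ell$ and on the set $\Lambda$, and in fact on each level we can precompute a single "jump pointer" together with a single composed matrix $J_\ell$ of size $\Width \times \Width$ that maps any level set at level $\ell$ directly to its image at the common jump level reachable from \emph{any} vertex at level $\ell$ — i.e., we precompute the jump once per level (using the last level at which something interesting happens that is reachable), and then a query just multiplies the indicator vector of $\Lambda$ by the single matrix $J_\ell$, costing $O(\Width^2)$. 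The key observation making this sound is that $\JL$ is monotone in $\Lambda$ with respect to the "first interesting reachable level," so a per-level precomputation suffices; one must be careful that different vertices of $\Lambda$ might individually reach interesting things at different levels, and the definition of $\JL(\Lambda)$ takes the \emph{first} such, so the per-level jump target is simply the minimum over all level-$\ell$ vertices, which we can compute from the precomputed composed matrices by a single sweep.

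For the preprocessing cost: computing each $T_\ell$ takes $O(\Width^2)$ once the adjacency structure within a level is known (and the $\emptyset$-reachability within a single level is itself a transitive-closure-like computation on $\leq \Width$ vertices, doable in $O(\Width^\omega)$ or even $O(\Width^3)$). Computing the per-level composed jump matrices is a single backward pass over the $\Depth$ levels, where at each step we either extend the running composed matrix by one more $T_\ell$ (a Boolean matrix product, $O(\Width^\omega)$) or reset it when an interesting level is crossed; total $O(\Depth \times \Width^\omega)$. The extra factor of $\Width$ in the claimed bound $O(\Depth \times \Width^{\omega+1})$ presumably comes from the fact that we need these matrices not just for one fixed starting configuration but robustly enough to answer a query for \emph{any} level set $\Lambda$, which may force computing, per level, up to $\Width$ separate composed matrices (one anchored at each possible "interesting stopping point" within the next block), or equivalently storing the $\Width$ individual row-reachability vectors and recomposing — I would pin down exactly where this factor enters when writing the full proof. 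The main obstacle I anticipate is precisely this: arguing carefully that a \emph{bounded} amount of precomputed matrix data per level (polynomial in $\Width$, independent of $\Depth$) suffices to answer the jump query for an arbitrary level set in $O(\Width^2)$, given that the jump can skip over $\Theta(\Depth)$ levels; the resolution is that the composed reachability information telescopes, so each level only needs to store its composition with the next "interesting" level, and transitivity of reachability does the rest.
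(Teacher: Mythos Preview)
Your general framework (Boolean reachability matrices, matrix multiplication to compose them) is the right one, and you correctly identify that a \emph{single} matrix $J_\ell$ per level cannot work because different subsets $\Lambda$ at the same level $\ell$ can have different jump levels $\JL(\Lambda)$. But you do not actually resolve this; you hand-wave at ``up to $\Width$ separate composed matrices'' and say you would ``pin down exactly where this factor enters'' later. That is precisely the heart of the argument, and without it the proof is incomplete.

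Here is what the paper does and what you are missing. First, precompute $\JL(v)$ for every \emph{individual} vertex $v$ by a single backward sweep (cost $O(\Depth\times\Width^2)$). This immediately gives, for each level $i$, the set $\rlvl(i) \colonequals \{\JL(v) : \lvl(v)=i\}$ of possible jump targets from level $i$; crucially $\card{\rlvl(i)} \le \Width$, which is exactly where your extra $\Width$ factor comes from. Second, precompute $\Reach(i,j)$ only for pairs with $j \in \rlvl(i)$. The key lemma that makes this tractable --- and which your proposal lacks --- is a nesting property: if $j \in \rlvl(i)$ and $j > i+1$, then $j \in \rlvl(i+1)$ as well (because the vertex $v$ witnessing $\JL(v)=j$ at level $i$ has a successor $w$ at level $i+1$ with $\JL(w)=j$). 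Hence, processing levels in decreasing order, $\Reach(i,j) = \Reach(i,i+1)\cdot\Reach(i+1,j)$ where the second factor is already available. This gives at most $\Width$ matrix products per level, each $O(\Width^\omega)$, for the claimed $O(\Depth\times\Width^{\omega+1})$ total. At query time, compute $j = \min_{v\in\Lambda}\JL(v)$ in $O(\Width)$; this $j$ is in $\rlvl(i)$ by construction, so $\Reach(i,j)$ is on hand and one vector-matrix product gives $\JS(\Lambda)$ in $O(\Width^2)$. Your ``telescoping'' intuition is in the right direction, but without the per-vertex $\JL$ values and the $\rlvl$ nesting property you have no way to bound the number of matrices needed or to guarantee that the one you need at query time was precomputed.
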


We can now conclude the proof of Theorem~\ref{thm:dag} by showing 
that the preprocessing and delay bounds are as claimed.
For the
preprocessing, this is clear: we do the preprocessing in $O(\card{G})$ presented at the
beginning of the section (i.e., trimming, and computing the sorted adjacency lists), followed by that of Proposition~\ref{prp:jump}.
For the delay, we claim:

\begin{claim}\label{claim:delay}
  Algorithm~\ref{alg:main} has delay $O(\Width^{2} \times (r+1))$, where $r$ is
  the size of the mapping of each produced path.
In particular,
  the delay is independent of the size of~$G$.
\end{claim}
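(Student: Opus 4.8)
The plan is to charge the running time of Algorithm~\ref{alg:main} between two consecutive calls to \textsc{Output} to a bounded number of ``expensive'' operations, each costing $O(\Width^2)$, and to show that only $O(r)$ of them occur before a mapping of size~$r$ is emitted. By Proposition~\ref{prp:jump}, one evaluation of \JS{} costs $O(\Width^2)$. By Proposition~\ref{prp:extendedVA}, producing the next pair $(\lvar,\Lambda'')$ of $\nextlevel(\Lambda')$ costs $O(\Width^2 \times (\card{\lvar}+1))$, and the pairs come in an order where $\lvar = \emptyset$ is last. All remaining work (advancing the \textbf{for} loop by one step, making or returning from a recursive call, the equality test on line~\ref{line:final}) is $O(1)$ per step, once $\nextlevel$ is implemented as a lazy iterator.

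The key structural fact is that, after a jump, we can always read a non-empty label unless we are already done: if \textsc{Enum} is called on $(\Lambda,\var)$ and $\Lambda' \colonequals \JS(\Lambda)$ is not the singleton $\{v_\f\}$, then the \emph{first} pair $(\lvar,\Lambda'')$ of $\nextlevel(\Lambda')$ has $\lvar \neq \emptyset$. Indeed, by Definition~\ref{def:jump} the level set $\JS(\Lambda)$, being a jump set distinct from $\{v_\f\}$, contains a vertex $v'$ with an outgoing edge whose label is $\neq \epsilon$ and $\neq \emptyset$: here one uses that in a normalized leveled mapping DAG every vertex that is the source of a marker edge has only $\epsilon$-edges entering it (appending such a marker edge to any path ending there shows the last incoming edge must be an $\epsilon$-edge), so the witness $v'$ of the jump level $\JL(\Lambda)$ is reached from $\Lambda$ by a path whose last edge is an $\epsilon$-edge, or already lies in $\Lambda$ when $\JL(\Lambda) = \lvl(\Lambda)$; in both cases $v' \in \JS(\Lambda)$. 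Hence $\nextlevel(\Lambda')$ returns at least one pair with non-empty label, and since $\emptyset$ comes last this is the first pair. Equivalently: every call to \textsc{Enum} either emits an output immediately or recurses on a partial mapping $\lvar \cup \var$ strictly larger than $\var$.

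Using this, consider a single ``descent'': once \textsc{Enum}$(\Lambda,\var)$ is entered, the very next \textsc{Output} is reached by alternately jumping and recursing into the first pair of $\nextlevel$. By the structural fact each such recursion enlarges the partial mapping by at least one marker, so there are at most $r - \card{\var} \leq r$ of them before \JS{} returns $\{v_\f\}$, where $r$ is the size of the emitted mapping. Each step costs $O(\Width^2)$ for the jump and $O(\Width^2 \times (\card{\lvar}+1))$ for the first pair; since the labels $\lvar$ produced along the descent are disjoint subsets of the emitted mapping, their sizes sum to at most $r$, so the whole descent costs $O(\Width^2 \times (r+1))$. In particular the time from the initial call \textsc{Enum}$(\{v_0\},\emptyset)$ to the first output is $O(\Width^2 \times (r+1))$.

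It remains to bound the delay between two consecutive outputs. After emitting one, the algorithm (i)~returns from the recursive calls that are now finished, up to the closest ancestor call whose \textbf{for} loop still has an unprocessed pair, (ii)~advances that loop by one pair, and (iii)~performs a fresh descent to the next output. Parts (ii)–(iii) cost $O(\Width^2 \times (r'+1))$ by the previous paragraph, $r'$ being the size of the next mapping. For part~(i) each return is $O(1)$, and the number of returns equals the recursion-tree distance from the just-emitted leaf up to that ancestor: the leaf was itself produced, since the previous output, by a descent of length $O(r)$, and the only calls above that descent still needing to be unwound are recursions on $\emptyset$-pairs, which occur in tail position in the \textbf{for} loop and are therefore carried out iteratively by a careful implementation rather than on the call stack, so contribute no returns. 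This is the delicate point, and it is precisely what the jump function is for: it collapses every maximal stretch of $\emptyset$- and $\epsilon$-edges into a single step on the way down (the structural fact) and, via tail-call handling, on the way up. Combining the three parts yields delay $O(\Width^2 \times (r+1))$, which depends only on $\Width$ and on $r \leq 2\card{\calV}$, hence is independent of $\Depth = \card{d}+1$ and of $\card{G}$.
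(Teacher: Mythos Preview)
Your overall architecture matches the paper's proof: bound the cost of a single descent by $O(\Width^2\times(r+1))$ using the ``first label after a jump is non-empty'' fact, and then handle the transition between two consecutive outputs by bounding the unwinding separately. The descent argument and the structural fact are fine (your justification of why the witness vertex $v'$ actually lies in $\JS(\Lambda)$ is more explicit than the paper's, which simply appeals to the definition).

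The gap is in part~(i), your bound on the number of returns. You claim that ``the only calls above that descent still needing to be unwound are recursions on $\emptyset$-pairs,'' and then invoke tail-call elimination to make them disappear. This is false. Consider a stack $F_1,\ldots,F_k$ at the moment of an output: each frame $F_i$ with $i<k$ is paused inside its \textbf{for} loop on the label it recursed on to create $F_{i+1}$. With tail-call elimination that label is \emph{never} $\emptyset$ (since $\emptyset$ comes last and tail-morphs rather than pushes), so the frames above your ``last descent'' are in fact all processing \emph{non-empty} labels, not $\emptyset$-pairs. What saves you is a different (and simpler) observation, the one the paper uses: precisely because every frame on the stack except possibly the top one is sitting on a non-empty label, and these labels are pairwise disjoint pieces of the just-emitted mapping, the stack depth is at most $r+1$. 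Hence the number of returns is at most $r+1$, regardless of how the stack decomposes into ``old'' and ``new'' parts.

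A smaller point: once you have the correct bound on the number of returns, that bound is in terms of the size of the \emph{just-emitted} mapping, not the next one. You should charge the unwinding cost to the previous output (as the paper does explicitly) rather than fold it into the ``combining the three parts'' sentence, where the two different $r$'s get conflated.
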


\begin{proof}
  Let us first bound the delay to produce the first solution.
  When we enter the \textsc{Enum} function, we call the \textsc{Jump} function
  to produce $\Lambda'$ in time $O(\Width^2)$ by Proposition~\ref{prp:jump},
  and either $\Lambda'$ is the final vertex or some vertex
  in~$\Lambda'$ must have an outgoing edge with a label different
  from~$\emptyset$.
  Then we enumerate $\nextlevel(\Lambda')$
  with delay $O(\Width^2 \times \card{\lvar})$ for each~$\lvar$ using Proposition~\ref{prp:extendedVA}.
  Remember that Proposition~\ref{prp:extendedVA}
  ensures that the label $\emptyset$ comes last;
  so by definition of $\JS$ the first value of~$\lvar$
  that we consider is different from~$\emptyset$.
  At each round of the \textbf{for} loop,
  we recurse in constant time: in particular, we
  do not copy $\var$ when writing $\lvar \cup \var$, as we can represent the set
  simply as a
  linked list.
  Eventually, after $r+1$ calls, by definition of a leveled mapping DAG,
  $\Lambda$
  must be the final vertex, and then we output a mapping of size~$r$ in
  time $O(r)$: the
  delay is indeed in $O(\Width^{2} \times (r+1))$ because the sizes of the
  values of~$\lvar$ seen along the path sum up to~$r$, and the unions  of
  $\lvar$ and $\var$ are always disjoint by definition of a mapping DAG.

  Let us now bound the delay to produce the next solution.
  To do so, we will first observe that when enumerating a mapping of
  cardinality~$r$, then the size of the recursion stack is always~$\leq r+1$.
  This is because Proposition~\ref{prp:extendedVA} ensures that the value $\lvar
  = \emptyset$ is always considered last in the \textbf{for} loop on
  $\nextlevel(\Lambda')$. 
  Thanks to this, every call to
  $\textsc{Enum}$ where $\lvar=\emptyset$ is actually a tail
  recursion, and we can avoid putting another call frame on the
  call stack using tail recursion
  elimination. This ensures that each call frame on the stack (except
  possibly the last one) contributes to the
  size of the currently produced mapping, so that indeed when we reach the final
  vertex of~$G$ then the call stack is no greater than the size of the mapping
  that we produce.

  Now, let us use this fact to bound the delay between consecutive solutions.
  When we move from one solution to another, it means that some \textbf{for} loop
  has moved to the next iteration somewhere in the call stack. To identify this,
  we must unwind the stack: when we produce a mapping of size~$r$, we unwind the stack until
  we find the next \textbf{for} loop that can move forward. 
  By our observation on the size of the stack, the unwinding takes time $O(r)$
  with $r$ is the size of the previously
  produced mapping; so we simply account for this unwinding time as part of the
  computation of the previous mapping. Now, 
  to move to the next iteration of the \textbf{for} loop and do the
  computations inside the loop, we spend a delay $O(\Width^{2} \times
  \card{\lvar})$ by
  Proposition~\ref{prp:extendedVA}. Let $r'$ be the current size of 
  $\var$, including the current $\lvar$. The \textbf{for} loop iteration
  finishes with a recursive call to
  \textsc{Enum}, and we can re-apply our argument about the first solution above
  to argue that this call identifies a mapping of some size $r''$ in delay $O(\Width^2
  \times (r'' + 1))$. However, because the argument $\var$ to the recursive call had size~$r'$, the mapping which is enumerated actually has size $r' + r''$
  and it is produced in delay $O(\Width^2 \times(r'' + 1) + r')$. This means
  that the overall delay to produce the next solution is indeed in $O(\Width^2 \times (r + 1))$ where $r$
  is the size of the mapping that is produced, which concludes the proof.
\end{proof}

\myparagraph{Memory usage} We briefly discuss the \emph{memory usage} of the
enumeration phase, i.e., the maximal amount of working memory that we need to
keep throughout the enumeration phase, not counting the
precomputation phase. Indeed, in enumeration algorithms the memory usage
can generally grow to be
very large even if one adds only a constant amount of information at every step.
We will show that this does not happen here, and that the memory
usage throughout the enumeration remains polynomial in~$\calA$ and constant in
the input document size.

All our memory usage during enumeration is in the call stack,
and thanks to tail recursion elimination (see the
proof of Claim~\ref{claim:delay}) we know that the stack depth is at most $r+1$,
where $r$ is the size
of the produced mapping as in the statement of 
Theorem~\ref{thm:dag}.
The local space in each stack frame must store $\Lambda'$ and $\Lambda''$, which
have size $O(\Width)$, and the status of the enumeration of \nextlevel in
Proposition~\ref{prp:extendedVA}, i.e., for every vertex $v \in \Lambda'$, the
current position in its adjacency list: this also has total size $O(\Width)$, so
the total memory usage of these structures over the whole stack is in $O((r+1)
\times \Width)$.
Last, we must also store the variables~$\var$
and~$\lvar$, but their total size of the variables $\lvar$
across the stack is clearly~$r$, and the same holds of $\var$ because each occurrence is
stored as a linked list (with a pointer to the previous stack frame). Hence, the
total memory usage is $O((r+1) \times \Width)$,
i.e., $O((\card{\calV}+1) \times \card{Q})$ in terms of the extended VA.


\section{Jump Function}
\label{sec:jump}
The only missing piece in the enumeration scheme of Section~\ref{sec:enum} is
the proof of Proposition~\ref{prp:jump}. We first explain
the preprocessing for the \textsc{Jump} function, and then the
computation scheme.

\myparagraph{Preprocessing scheme}
Recall the definition
of the jump level $\JL(\Lambda)$ and jump set 
$\JS(\Lambda)$ of
a level set~$\Lambda$
(Definition~\ref{def:jump}).
We assume that we have precomputed in $O(\card{G})$ the mapping $\lvl$ associating each
vertex~$v$ to its level $\lvl(v)$, as well as, for each level~$i$, the list of
the vertices $v$ such that $\lvl(v) = i$.

The first part of the preprocessing is then to compute, for every
individual vertex $v$, the jump level  $\JL(v) \colonequals \JL(\{v\})$, i.e.,
the minimal level containing a vertex $v'$ such that $v'$ is reachable from~$v$
and $v'$ is
either the final vertex or has an outgoing edge which is neither an
$\epsilon$-edge nor
an $\emptyset$-edge. We claim:

\begin{claim}
  \label{clm:efficientJL}
  We can precompute in $O(\Depth \times \Width^2)$ the jump level $\JL(v)$
  of all vertices $v$ of~$G$.
\end{claim}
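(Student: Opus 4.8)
The plan is to compute all the values $\JL(v)$ by a single dynamic-programming pass over the levels of~$G$, from the last level down to the first. First I would record the structural facts about normalized trimmed mapping DAGs that we need. Because every path alternates between marker edges and $\epsilon$-edges, any two outgoing edges of a vertex that has an incoming edge are of the same kind; together with the fact that paths from~$v_0$ start with a marker edge, this means every vertex is either the final vertex~$v_{\f}$ (no outgoing edges), a \emph{marker vertex} whose outgoing edges are all marker edges, or an \emph{$\epsilon$-vertex} whose outgoing edges are all $\epsilon$-edges. Recall also that marker edges keep the level unchanged whereas $\epsilon$-edges increase it by one; in particular the successors of a marker vertex lie at its own level and the successors of an $\epsilon$-vertex lie at the next level, and the last level of~$G$ contains only~$v_{\f}$.

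Next I would establish the recursion that drives the algorithm: for every vertex~$v$, if $v = v_{\f}$ or $v$ is a marker vertex having an outgoing edge whose label is a nonempty set, then $\JL(v) = \lvl(v)$; and otherwise $\JL(v) = \min_w \JL(w)$, the minimum ranging over the distinct successors~$w$ of~$v$. The first case is immediate from Definition~\ref{def:jump}. In the second case~$v$ itself does not have the required kind of outgoing edge and is not~$v_{\f}$, so any vertex~$v'$ witnessing the value $\JL(v)$ is distinct from~$v$ and hence reachable from~$v$ through some successor~$w$, which gives $\JL(v) \geq \min_w \JL(w)$; conversely, any vertex witnessing $\JL(w)$ for a successor~$w$ of~$v$ is reachable from~$v$ as well, giving the reverse inequality. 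It is here that trimmedness matters: every vertex reaches~$v_{\f}$, so each $\JL(w)$ is a well-defined finite number.

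The algorithm then assumes the $O(\card{G})$ preprocessing already available at the start of Section~\ref{sec:enum} and Section~\ref{sec:jump} — giving us $\lvl$, the list of vertices at each level, and the adjacency lists sorted so that $\emptyset$-labels come last — plus, also computable in linear time, the list of the at most~$\Width$ distinct successors of each vertex. We set $\JL(v_{\f}) \colonequals \Depth$ and then, for $i = \Depth, \Depth-1, \ldots, 0$, process level~$i$ as follows: first, for each $\epsilon$-vertex~$w$ at level~$i$, whose successors lie at the already-treated level~$i+1$, set $\JL(w) \colonequals \min_{w'} \JL(w')$; then, for each marker vertex~$v$ at level~$i$, set $\JL(v) \colonequals \lvl(v)$ if the first edge of~$v$'s sorted adjacency list carries a nonempty label (equivalently, some outgoing edge does), and otherwise set $\JL(v) \colonequals \min_w \JL(w)$ over the successors~$w$ of~$v$, which are $\epsilon$-vertices at level~$i$ and were just treated. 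Correctness is an induction following this processing order, using the recursion above; termination is clear as~$G$ is acyclic. For the running time, there are $\Depth+1$ levels, each with at most~$\Width$ vertices, and each vertex costs $O(1)$ for the nonempty-label test and $O(\Width)$ to minimise over its at most~$\Width$ distinct successors, for a total of $O(\Depth \times \Width^2)$ on top of the linear preprocessing. The subtle points — and the only real obstacles — are getting the three-way classification of vertices right and realising that within a single level the $\epsilon$-vertices must be processed before the marker vertices, which is forced because marker edges do not advance the level.
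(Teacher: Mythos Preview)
Your proposal is correct and takes essentially the same approach as the paper: both compute $\JL$ by dynamic programming in reverse topological order using the recursion $\JL(v) = \lvl(v)$ when $v$ has a nonempty-label outgoing edge (or is $v_{\f}$) and $\JL(v) = \min_{v \to w} \JL(w)$ otherwise, and both bound the cost by observing that only $\epsilon$- and $\emptyset$-edges are traversed, of which each vertex has at most~$\Width$. Your level-by-level pass with $\epsilon$-vertices before marker vertices simply makes a particular reverse topological order explicit; the paper just invokes a generic reverse topological sort.
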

\begin{proof}
This construction can be performed iteratively from the final vertex $v_{\f}$
to the initial vertex $v_0$: 
we have $\JL(v_{\f}) \colonequals \lvl(v_{\f})$ for the final vertex~$v_{\f}$,
we have $\JL(v) \colonequals \lvl(v)$ if $v$ has an outgoing edge which is not
an $\epsilon$-edge or an $\emptyset$-edge,
and otherwise we have $\JL(v) \colonequals \min_{v \rightarrow w} \JL(w)$.

This computation can be performed along a reverse
topological order, which by~\cite[Section 22.4]{CormenLRS09} takes linear time
in~$G$. However, note that $G$ has at most $\Depth \times \Width$ vertices, and we only
traverse $\epsilon$-edges and $\emptyset$-edges: we just check the existence of
edges with other labels but we do not traverse them. Now, as each vertex has at
  most $\Width$ outgoing edges labeled~$\emptyset$ and at most~$\Width$ outgoing
  edges labeled $\epsilon$, the number of edges in the DAG that we actually
  traverse is only $O(\Depth \times \Width^2)$, which shows our complexity bound
  and concludes the proof.
\end{proof}

The second part of the preprocessing is to compute, for each level~$i$ of~$G$,
the \emph{reachable levels} $\rlvl(i)\colonequals\{\JL(v) \mid \lvl(v) = i\}$,
which we can clearly do in linear time in the number of vertices of~$G$, i.e.,
in $O(\Depth \times \Width)$. Note that the definition clearly ensures that we have
$\card{\rlvl(i)} \leq \Width$.

\begin{example}
  In Figure~\ref{fig:figure}, the jumping level for nodes
  $(q_1, 3)$ and $(q_2,3)$ is 6 and the jumping level for nodes $(q_5,3)$ and
  $(q_6, 3)$ is 5. Hence, the set of reachable levels
  $\rlvl(3)$ for level~3 is $\{5,6\}$.
\end{example}

Last, the third step of the preprocessing is to compute a reachability matrix from each level to its
reachable levels. Specifically, for any two levels $i < j$ of~$G$, let $\Reach(i,
j)$ be the Boolean matrix of size at most $\Width \times \Width$ which
describes, for each $(u, v)$ with $\lvl(u) = i$ and $\lvl(v) = j$, whether there
is a path from~$u$ to~$v$ whose last edge is labeled~$\epsilon$. We can't afford
to compute all these matrices, but we claim that we can efficiently compute
a subset of them, which will be enough for our purposes:

\begin{claim}
  \label{clm:efficientmatrix}
  We can precompute in time $O(\Depth \times \Width^{\omega+1})$ the matrices 
  $\Reach(i,j)$ for all pairs of levels $i < j$ such that $j \in \rlvl(i)$.
\end{claim}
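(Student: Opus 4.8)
The plan is to compute the matrices by a dynamic program that runs over the levels of~$G$ from the topmost level down to level~$0$, multiplying by a one-step transition matrix at each step. The crucial ingredient is a ``telescoping'' property of the reachable levels which ensures that only $O(\Width)$ Boolean matrix multiplications are performed per level.

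First I would compute, for every $0 \le k < \Depth$, the one-step matrix $R_k \colonequals \Reach(k,k+1)$ directly from the edges of~$G$. Since $G$ is normalized and levels only increase along $\epsilon$-edges (Definition~\ref{def:leveled}), a path from a vertex $u$ at level~$k$ to a vertex at level~$k+1$ whose last edge is an $\epsilon$-edge is either a single $\epsilon$-edge or one marker edge followed by one $\epsilon$-edge; so $R_k$ is the Boolean product of the (at most $\Width\times\Width$) marker-adjacency relation within level~$k$ and the $\epsilon$-adjacency relation from level~$k$ to level~$k+1$, computable in $O(\Width^{\omega})$, for a total of $O(\Depth \times \Width^{\omega})$.

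The crux is the following structural property: for every level $i < \Depth$ we have $\rlvl(i) \setminus \{i\} \subseteq \rlvl(i+1)$. To see this, fix $j \in \rlvl(i)$ with $j > i$ and pick a vertex $v$ at level~$i$ with $\JL(v) = j$. Since $\JL(v) \neq \lvl(v)$, the vertex $v$ is not the final vertex and has no outgoing edge whose label is $\neq \epsilon$ and $\neq \emptyset$; hence, by the recursion for $\JL$ established in the proof of Claim~\ref{clm:efficientJL}, $\JL(v)$ is the minimum of $\JL(w)$ over the successors $w$ of~$v$, all reached by $\epsilon$- or $\emptyset$-edges. Following a successor $w$ that attains this minimum, we have $\JL(w) = j$, and $w$ is at level~$i$ (if reached by an $\emptyset$-edge) or at level~$i+1$ (if reached by an $\epsilon$-edge, since $\epsilon$-edges raise the level by one). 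In the latter case $j \in \rlvl(i+1)$ and we are done. In the former case $\JL(w) = j > i = \lvl(w)$, so $w$ is again not the final vertex and has no ``interesting'' outgoing edge, and we may repeat the argument from~$w$; since $G$ is acyclic, this produces a finite chain of vertices of level~$i$, all with jump level~$j$, and the chain cannot end at such a vertex (it always has a min-attaining successor), so it must eventually take an $\epsilon$-edge to a vertex of level~$i+1$ with jump level~$j$, giving $j \in \rlvl(i+1)$.

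Now the dynamic program: process levels $i$ from $\Depth$ downwards, maintaining the matrices $\Reach(i,j)$ for all $j \in \rlvl(i)$ with $j > i$. For the maximal level there is nothing to store. At level~$i$: for $j \in \rlvl(i)$ with $j = i+1$ put $\Reach(i,i+1) \colonequals R_i$; for $j \in \rlvl(i)$ with $j > i+1$ we have $j \in \rlvl(i+1)$ by the structural property, so $\Reach(i+1,j)$ is already available, and we set $\Reach(i,j) \colonequals R_i \cdot \Reach(i+1,j)$ (Boolean product). This is correct because any path from level~$i$ to level~$j$ whose last edge is $\epsilon$ enters level~$i+1$ for the first time via an $\epsilon$-edge, hence splits as a path counted by $R_i$ followed by a path counted by $\Reach(i+1,j)$, and conversely any such concatenation is a valid path. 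Each level performs at most $\card{\rlvl(i)} \le \Width$ products of matrices of size at most $\Width \times \Width$, i.e.\ $O(\Width^{\omega+1})$ time; summing over the $O(\Depth)$ levels, and adding the $O(\Depth \times \Width^{\omega})$ cost of the one-step matrices, gives the claimed bound $O(\Depth \times \Width^{\omega+1})$. The main obstacle is precisely identifying and proving the inclusion $\rlvl(i)\setminus\{i\}\subseteq\rlvl(i+1)$: without it, obtaining $\Reach(i,j)$ for a far-away target~$j$ would naively require multiplying $\Theta(\Depth)$ one-step matrices, overshooting the time budget by a factor of~$\Depth$.
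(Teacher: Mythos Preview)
Your proposal is correct and follows essentially the same approach as the paper: process levels in decreasing order, compute $\Reach(i,j)$ for $j>i+1$ as the product $\Reach(i,i+1)\cdot\Reach(i+1,j)$, and rely on the inclusion $\rlvl(i)\setminus\{i\}\subseteq\rlvl(i+1)$ to guarantee that the right-hand factor is already available. Your justification of this inclusion is in fact more careful than the paper's, which simply asserts that a min-attaining successor lies at level~$i+1$; you correctly handle the possibility of first following $\emptyset$-edges within level~$i$ before an $\epsilon$-edge crosses to level~$i+1$.
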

\begin{proof}
  We compute the matrices in decreasing order on~$i$, then for each
fixed~$i$ in arbitrary order on~$j$:
\begin{itemize}
\item if $j=i$, then $\Reach(i,j)$ is the identity matrix;
\item if $j=i+1$, then $\Reach(i,j)$ can be computed from the edge
  relation of $G$ in time $O(\Width \times \Width)$, because it suffices to
    consider the edges labeled~$\emptyset$ and~$\epsilon$ between levels~$i$
    and~$j$;
\item if $j>i+1$, then $\Reach(i,j)$ is the product of $\Reach(i,i+1)$
  and $\Reach(i+1,j)$, which can be computed in time
  $O(\Width^\omega)$.
\end{itemize}  
In the last case, the crucial point is that $\Reach(i+1,j)$ has already been
precomputed, because we are computing $\Reach$ in decreasing order on~$i$, and
  because we must have $j \in \rlvl(i+1)$. Indeed,  if $j \in \rlvl(i)$, then
  there is a vertex $v$ with $\lvl(v) = i$ such that $\JL(v) = j$, and 
  the inductive definition of $\JL$ implies that $v$ has an edge to 
a vertex $w$ such that
  $\lvl(w)=i+1$ and $\JL(v)=\JL(w)=j$, which witnesses that $j \in \rlvl(i+1)$.

The total running time of this scheme is in 
$O(\Depth \times \Width^{\omega+1})$: indeed we consider each of the $\Depth$
  levels of~$G$, we compute at most
$\Width$ matrices for each level of $G$ because we have
$\card{\rlvl(i)} \leq W$ for any~$i$, and each matrix is computed in time at
  most~$O(\Width^{\omega})$.
\end{proof}

\myparagraph{Evaluation scheme}
We can now describe our evaluation scheme for the jump function. Given a level
set~$\Lambda$, we wish to compute $\JS(\Lambda)$. Let $i$ be the level
of~$\Lambda$, and let $j$ be $\JL(\Lambda)$ which we compute as $\min_{v \in
\Lambda}\JL(v)$ in~$O(\Width)$ time. 
If $j = i$, then $\JS(\Lambda) = \Lambda$ and there is nothing to do. Otherwise,
by definition there must be $v \in \Lambda$ such that $\JL(v) = j$, so $v$
witnesses that $j \in \rlvl(i)$, and we know that we have precomputed the matrix
$\Reach(i, j)$. Now $\JS(\Lambda)$ are the vertices at level~$j$ to
which the vertices of~$\Lambda$ (at level~$i$) have a directed path whose last
edge is labeled~$\epsilon$,
which we can simply compute in time~$O(\Width^2)$ by unioning the
lines that correspond to the vertices of~$\Lambda$ in the matrix $\Reach(i, j)$.

This concludes the proof of Proposition~\ref{prp:jump} and completes the
presentation of our scheme to enumerate the set captured by
mapping DAGs (Theorem~\ref{thm:dag}). Together with
Section~\ref{sec:extended}, this proves Theorem~\ref{thm:main} in the case of
extended sequential VAs.


\section{From Extended Sequential VAs to General Sequential VAs}
\label{sec:flashlight}
In this section, we adapt our main result (Theorem~\ref{thm:main}) to work with
sequential non-extended VAs rather than sequential extended VAs.
Remember that we cannot tractably
convert non-extended VAs into extended VAs
\cite[Proposition~4.2]{FlorenzanoRUVV18}, so we must modify 
our construction in Sections~\ref{sec:extended}--\ref{sec:jump} to work with
sequential non-extended VAs directly.
Our general approach will be the same:
compute the mapping DAG and trim it like in
Section~\ref{sec:extended}, then precompute the jump level and jump set
information as in Section~\ref{sec:jump}, and apply the enumeration scheme of
Section~\ref{sec:enum}. 
The difficulty is that non-extended VAs 
may assign multiple markers at the same word position by taking multiple
variable transitions instead of one single ev-transition.
Hence, when enumerating all possible values for $\lvar$ in
Algorithm~\ref{alg:main},
we need to consider all possible sequences of variable transitions. The
challenge is that there
may be many different transition sequences that assign the same set of markers,
which could lead to duplicates in the enumeration.
Thus, our goal will be to design a replacement to
Proposition~\ref{prp:extendedVA} for non-extended VAs, i.e., enumerate possible values for~$\lvar$ at
each level without duplicates. 

We start as in Section~\ref{sec:extended} by computing the product DAG $G$ of
$\calA$ and of the input document $d = d_0 \cdots d_{n-1}$ with 
vertex set $Q\times \{0,\dots,n\} \cup \{v_{\f}\}$ with $v_{\f} \colonequals
(\bullet, n+1)$ for some fresh value~$\bullet$, and with the following edge set:
  \begin{itemize}
    \item For every letter-transition $(q, a, q')$ of~$\calA$, for every $0 \leq
      i < \card{d}$ such that $d_i = a$, there is an $\epsilon$-edge from $(q, i)$ to $(q', i+1)$;
    \item For every variable-transition $(q, m, q')$ of~$\calA$ (where $m$ is a
      marker), for
      every $0 \leq i \leq \card{d}$, there is an edge from $(q, i)$ to~$(q',
      i)$ labeled with~$\{(m, i)\}$.
    \item For every final state $q \in F$, an $\epsilon$-edge from $(q, n)$
      to~$v_{\f}$.
  \end{itemize}
The initial vertex of~$G$ is $(q_0, 0)$ and the final vertex is
$v_{\f}$. Note that the edge labels are
now always singleton sets or~$\epsilon$; in
particular there are no longer any $\emptyset$-edges. 

We can then adapt most of Claim~\ref{clm:dagwellformed}: the product DAG is
acyclic because all letter-transitions make the second component increase, and
because we know that there cannot be a cycle of variable-transitions in the
input sequential VA $\calA$ (remember that we assume VAs to be trimmed).
We can also trim the mapping DAG in linear time as before, and
Claim~\ref{clm:dagcorrect} also adapts to show that the resulting mapping DAG correctly captures the mappings that we wish to enumerate.
Last, as in Claim~\ref{clm:leveled}, the resulting mapping DAG is still leveled,
the depth $\Depth$ (number of levels) is still $\card{d}+1$, and
the width $\Width$ (maximal size of
a level) is still~$\leq \card{Q}$; we will also define the \emph{complete width}
$\Widthfull$ of~$G$ in this section as the maximum, over all levels $i$, of 
the sum of the number of vertices in level~$i$, and of the number of
\emph{edges} with a source vertex in level~$i$. Formally, writing $G = (V, v_0, v_\f, E)$, 
and writing $\Depth$ the depth of~$G$, we have $\Widthfull \colonequals \max_{1 \leq i \leq \Depth} 
\card{\{v \in V \mid \lvl(v) = i\}} + \card{\{(s, x, t) \in E \mid \lvl(s) =
i\}}$.
Notice that we have $\Widthfull \leq
\card{\calA}$.
The main change in Section~\ref{sec:extended} is that the mapping DAG is no
longer normalized, i.e., we may follow several marker edges in succession (staying at the same level) or follow several
$\epsilon$-edges in succession (moving to the next level each time).
Because of this, we change
Definition~\ref{def:levelset} and 
redefine \emph{level sets} to mean any non-empty set of vertices that are at the same
level.

We then reuse the enumeration approach of Section~\ref{sec:enum}
and~\ref{sec:jump}. Even though the mapping DAG is no longer normalized, 
it is not hard to see that with our new definition of level sets we can 
reuse the jump function from Section~\ref{sec:jump} as-is, and we can also reuse
the general approach of Algorithm~\ref{alg:main}.
However, to accommodate for the
different structure of the mapping DAG, we will need a new
definition for \nextlevel: instead of following
exactly one marker edge before an $\epsilon$-edge, we want to be
able to follow any (possibly empty) path of marker edges before an
$\epsilon$-edge. We formalize
this notion as an \emph{$S^+$-path}:

\begin{definition}
  \label{def:spath}
  For $S^+$ a set of labels, an \emph{$S^+$-path} in the mapping DAG $G$
  is a path of $\card{S^+}$ edges that includes no $\epsilon$-edges and where the
  labels of the path are exactly the elements of~$S^+$ in some arbitrary order.
  Recall that the definition of a mapping DAG (Definition~\ref{def:mapping}) ensures that there can be no
  duplicate labels on the path, and that the start and end vertices of an
  $S^+$-path must have the same level because no $\epsilon$-edge is traversed in
  the path.

  For $\Lambda$ a level set,
  $\nextlevel(\Lambda)$ is the set of all pairs $(S^+, \Lambda'')$ where:
  \begin{itemize}
    \item $S^+$ is a set of labels such that there is an $S^+$-path that
  goes from some vertex~$v$ of~$\Lambda$ to some
  vertex~$v'$ which has an outgoing $\epsilon$-edge;
\item $\Lambda''$ is the level set containing exactly
  the vertices $w$ that are targets of these $\epsilon$-edges, i.e., 
      there is an $S^+$-path from some vertex $v \in
  \Lambda$ to some vertex $v'$, and there is an $\epsilon$-edge from~$v'$ to~$w$.
  \end{itemize}
\end{definition}

Note that these definitions are exactly equivalent to what we would obtain if we
converted $\calA$ to
an extended VA and then used our original construction. This directly
implies that the modified enumeration algorithm is correct (i.e., 
Proposition~\ref{prp:correct} extends). In particular, the
modified algorithm still uses the jump pointers as computed in Section~\ref{sec:jump} to
jump over positions where the only possibility is $S^+ = \emptyset$, i.e.,
positions where the sequential VA make no variable-transitions.
The only thing that remains is to establish the delay bounds, for which we need to
enumerate \nextlevel efficiently without duplicates
(and replace Proposition~\ref{prp:extendedVA}).
To present our method for this, we will introduce
the
\emph{alphabet size} $\Absize$ as the maximal number, over all levels $j$ of the
mapping DAG $G$, of the different labels that can
occur in marker edges between vertices at level~$j$; in our
construction this value is bounded by the number of different markers, i.e., 
$\Absize \leq 2 \card{\calV}$. We can now state the claim that we will prove
later in the section:

\begin{theorem}\label{thm:generalVA}
  Given a leveled trimmed mapping DAG $G$ with complete width $\Widthfull$ and alphabet
  size $\Absize$, and a 
  level set
  $\Lambda'$,
  we can enumerate  without duplicates
  all the pairs $(S^+, \Lambda'') \in \nextlevel(\Lambda')$ 
  with delay $O(\Widthfull \times \Absize^2)$
  in an order such that
  $S^+ = \emptyset$ comes last if it is returned.
\end{theorem}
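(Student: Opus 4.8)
The plan is to prove Theorem~\ref{thm:generalVA} by a \emph{flashlight search} over the markers that can be read at the level of~$\Lambda'$, reducing the enumeration of $\nextlevel(\Lambda')$ to a bounded number of extendability tests. First I would note that, once a valid set~$S^+$ is fixed, the second component~$\Lambda''$ is determined: it is the set of targets of $\epsilon$-edges leaving the vertices reachable from~$\Lambda'$ by an $S^+$-path, which can be recomputed by one sweep through the $\le\Widthfull$ marker edges of that level in time $O(\Widthfull)$. Hence it suffices to enumerate, without duplicates, the distinct valid label sets~$S^+$. Let $m_1 < \dots < m_k$ be the markers that occur on marker edges at the level of~$\Lambda'$, so $k \le \Absize$; we fix this order once and for all and will branch on the membership of each~$m_j$ in~$S^+$.

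The core ingredient is an \emph{extension oracle}, namely the statement referenced as Lemma~\ref{lem:extendToPath} in the introduction: given a prefix $\{m_1,\dots,m_t\}$ of the markers partitioned into a set of markers declared ``in'' and a set declared ``out'', decide whether there is a valid $S^+$ containing all the ``in'' markers, disjoint from the ``out'' markers (and hence contained in the ``in'' markers together with $\{m_{t+1},\dots,m_k\}$), such that some vertex of~$\Lambda'$ has an $S^+$-path to a vertex with an outgoing $\epsilon$-edge; and, when $t=k$, also return the induced level set~$\Lambda''$. I would prove this oracle in time $O(\Widthfull \times \Absize)$ by a reachability propagation through the marker subgraph at that level, exploiting that this subgraph is acyclic (no path can repeat a marker, by the disjointness requirement in the definition of a mapping DAG): the ``in'' markers that lie on one common marker path are linearly ordered by reachability, so one processes them greedily in a topological order, alternating the closure of the current reachable set under edges whose label is still undecided with the crossing of a single edge labelled by the next still-uncrossed ``in'' marker, each of the at most~$\Absize$ phases touching $O(\Widthfull)$ edges. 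I expect this lemma to be the main obstacle: the delicate point is to show that processing the required markers in the chosen order never discards a valid completion (an exchange argument in the acyclic subgraph), and that forbidden markers can be handled simply by deleting the corresponding edges.

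Granting the oracle, the enumeration is a binary depth-first traversal of the decision tree of depth~$k$ whose leaves assign each~$m_j$ to ``in'' or ``out'': at each internal node we call the oracle on the current prefix before descending into a child, pruning any child the oracle rejects, and we always visit the ``in'' child before the ``out'' child. Because the oracle is complete, every node we enter has a valid leaf below it, so the explored tree is thin; every surviving leaf yields exactly one valid~$S^+$, distinct leaves yield distinct sets, and every valid~$S^+$ is reached, so the enumeration is correct and duplicate-free. The ``in''-before-``out'' rule makes the all-``out'' leaf, which corresponds to $S^+=\emptyset$ (the empty $S^+$-path, valid exactly when some vertex of~$\Lambda'$ has an outgoing $\epsilon$-edge), the last one produced, as required. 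For the delay: between two consecutive outputs we move up and then down in the tree by $O(\Absize)$ nodes, spending $O(1)$ oracle calls per node at cost $O(\Widthfull \times \Absize)$ each, plus $O(\Widthfull)$ to compute~$\Lambda''$ and $O(\Absize)$ to write~$S^+$, for a total of $O(\Widthfull \times \Absize^2)$. Finally, exactly as in the proof of Claim~\ref{claim:delay}, the ``in''-before-``out'' ordering turns the deepest branch into a tail call, so the recursion stack has depth $O(\Absize)$, bounding both the working memory and the time to the first output.
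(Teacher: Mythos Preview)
Your overall plan---flashlight search on a depth-$\Absize$ binary decision tree over the markers, pruned by the extendability oracle of Lemma~\ref{lem:extendToPath}, exploring the ``in'' child before the ``out'' child so that $S^+=\emptyset$ comes last, with $O(\Absize)$ oracle calls between consecutive leaves at cost $O(\Widthfull\times\Absize)$ each---is exactly the paper's, and that part is correct. The gap is in your sketch of the oracle itself. You propose to process the ``in'' markers one by one in a fixed linear order, alternating closure under undecided-label edges with crossing the edges carrying ``the next'' required marker, justified by the observation that the ``in'' markers on a common path are linearly ordered by reachability. That is true of any single path, but different paths through the level can visit the same marker set in different orders (one reads $\open x$ then $\open y$, another $\open y$ then $\open x$), so there is no global order to commit to. Already on the two-edge path $s\xrightarrow{b}u\xrightarrow{a}w$ with $S^+=\{a,b\}$, processing $a$ before $b$ empties the reachable set at the first phase even though an $S^+$-path to~$w$ exists; the exchange argument you gesture at would have to manufacture one order that works from every vertex of~$\Lambda'$ simultaneously, and in general none does.

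The paper's oracle avoids ordering the markers altogether. After deleting the $S^-$-edges, it does a single topological pass labeling each vertex $w$ with the subset $\chi(w)\subseteq S^+$ seen on paths from the source to~$w$, \emph{collapsing $\chi(w)$ to~$\emptyset$ whenever two incoming edges disagree on this subset}. The mapping-DAG invariant (no label repeats on any path) makes such a disagreement fatal for $S^+$-paths through~$w$, so the per-vertex state is a single subset rather than a family of subsets, and the pass runs in $O(\Widthfull\times\Absize)$. Incidentally, a correct variant closer to your picture does exist: rather than one marker per phase, let each phase take \emph{all} outgoing $S^+$-labeled edges from the current reachable set (regardless of which marker) and then close under undecided edges, and return the set reached after exactly $|S^+|$ phases; the no-repeat invariant forces those $|S^+|$ crossings to carry $|S^+|$ distinct markers, hence all of~$S^+$. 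That is not what you wrote and not what the paper does, but it would close the gap with the same bound.
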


With this runtime, the delay of Theorem~\ref{thm:dag} becomes $O((r+1) \times
(\Width^2 + \Widthfull \times \Absize^2))$, and we know that
$\Widthfull
\leq \card{\calA}$, that $\Width \leq \card{Q}$, that $r \leq \card{\calV}$, and
that $\Absize \leq 2 \card{\calV}$; so this leads to the overall
delay of $O(\card{\calV} \times (\card{Q}^2 + \card{\calA} \times \card{\calV}^2))$
in Theorem~\ref{thm:main}.

The idea to prove Theorem~\ref{thm:generalVA} is to use a general
approach called \emph{flashlight
  search}~\cite{mary2016efficient,read1975bounds}:
  we will use a search tree on the possible sets of labels on~$\calV$ to iteratively
construct the set~$S^+$ that can be assigned at the current position, and we will avoid useless parts of the
search tree by using a lemma to efficiently check if a
partial set of labels can be extended to a solution. To formalize the notion of
extending a partial set, we will need the notion of \emph{$S^+/S^-$-paths}:

\begin{definition}
  For $S^-$ and $S^+$ two disjoint sets of labels,
  an \emph{$S^+/S^-$-path} in the mapping DAG $G$ is a path of edges that
  includes no $\epsilon$-edges, that includes no edges with a label in~$S^-$,
  and where every label of~$S^+$ is seen exactly once along the path.
\end{definition}

Note that, when $S^+
\cup S^-$ contains all labels used in~$G$,
then the notions of $S^+/S^-$-path and $S^+$-path coincide, but if $G$ contains
some labels not in $S^+ \cup S^-$ then an $S^+/S^-$-path is free to use them
or not, whereas an $S^+$-path cannot use them. The key to prove Theorem~\ref{thm:generalVA}
is to efficiently determine if
$S^+/S^-$-paths exist:
we formalize this as a lemma
which we will apply to the mapping DAG $G$
restricted to the current level (in particular removing $\epsilon$-edges):

\begin{lemma}\label{lem:extendToPath}
  Let $G$ be a mapping DAG with no $\epsilon$-edges
  and let $V$ be its vertex set. Given
  a non-empty set $\Lambda' \subseteq V$ of vertices of~$G$ and 
  given two disjoint sets of labels $S^+$ and $S^-$, 
  we can compute in time $O(\card{G} \times (\card{S^+} + \card{S^-}))$ the set
  $\Lambda'_2 \subseteq V$ of vertices $v$ such that there is an $S^+/S^-$-path
  from one vertex
  of~$\Lambda'$ to~$v$.
\end{lemma}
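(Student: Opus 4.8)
The plan is to reduce the existence of $S^+/S^-$-paths to a single dynamic-programming pass over~$G$, thereby replacing the naive subset-indexed state (which would be exponential in~$\card{S^+}$) by a plain maximum-count recurrence. For every vertex $v$ of~$G$, define $f(v)$ to be the maximum of $\card{S^+ \cap \lambda(\pi)}$ over all paths~$\pi$ in~$G$ that start at some vertex of~$\Lambda'$, end at~$v$, and use no edge whose label meets~$S^-$, where $\lambda(\pi)$ denotes the union of the labels along~$\pi$; we set $f(v) \colonequals -\infty$ if there is no such path. The first observation is that $f$ already encodes the answer: we always have $\card{S^+ \cap \lambda(\pi)} \leq \card{S^+}$, and since the mapping-DAG axiom forces the labels along any path to be pairwise disjoint, a path realizing the value $\card{S^+}$ satisfies $S^+ \subseteq \lambda(\pi)$ and uses each label of~$S^+$ exactly once, so (together with the absence of $S^-$-edges) it is an $S^+/S^-$-path; conversely an $S^+/S^-$-path realizes the value~$\card{S^+}$. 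Hence $\Lambda'_2 = \{v : f(v) = \card{S^+}\}$ and it suffices to compute~$f$.

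To compute~$f$ I would first obtain a topological order of~$G$ in time $O(\card{G})$, then process the vertices in that order while maintaining the invariant that $f$ is finalized on already-processed vertices: vertices of~$\Lambda'$ are initialized with $f \colonequals 0$ (the empty path) and all others with $f \colonequals -\infty$, and when processing a vertex~$u$ we relax each outgoing edge $(u, M, u')$ with $M \cap S^- = \emptyset$ by setting $f(u') \leftarrow \max(f(u'), f(u) + \card{M \cap S^+})$. The one point that needs care is the correctness of this recurrence, and it rests on the fact that in the \emph{acyclic} graph~$G$ every walk is in fact a path: appending $(u, M, u')$ to a path witnessing~$f(u)$ again yields a path, so by the disjointness of labels along paths in a mapping DAG the set~$M$ is disjoint from the labels already collected, and the $S^+$-count of the extended path is therefore \emph{exactly} $f(u) + \card{M \cap S^+}$, with no double counting; conversely every path to~$u'$ avoiding $S^-$-edges splits as such a path to a predecessor~$u$ followed by one edge. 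Thus the recurrence is exactly a standard DAG dynamic program once this observation is in place.

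For the running time, the topological sort and the single pass over vertices and edges are $O(\card{G})$; the only super-constant work per edge $(u, M, u')$ is testing $M \cap S^- = \emptyset$ and computing $\card{M \cap S^+}$, which costs $O(\card{M} \times (\card{S^+} + \card{S^-}))$ with naive membership tests, and since $\sum_M \card{M} = O(\card{G})$ this totals $O(\card{G} \times (\card{S^+} + \card{S^-}))$; reading off $\Lambda'_2$ from~$f$ is $O(\card{V}) \subseteq O(\card{G})$. This matches the claimed bound (and in the application all marker edges carry singleton labels, so these intersection tests degenerate to single membership queries). The main obstacle, such as it is, is purely conceptual: recognizing that the ``use each label of~$S^+$ exactly once'' constraint collapses to a max-count recurrence precisely because mapping DAGs forbid repeated labels on a path; once that is seen, the algorithm and its analysis are routine.
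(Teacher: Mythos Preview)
Your argument is correct and takes a genuinely simpler route than the paper. Both hinge on the same structural fact---no nonempty label can repeat along any path of a mapping DAG---but the paper maintains, for each vertex~$w$, the full subset $\chi(w) \subseteq S^+$ of labels seen so far, together with a special rule that collapses $\chi(w)$ to~$\emptyset$ whenever different incoming paths witness incomparable subsets; proving that $\chi(v) = S^+$ characterises the existence of an $S^+/S^-$-path then requires a somewhat delicate inductive case analysis. You instead track only the integer~$f(v)$, reducing the question to a standard longest-path DP on a DAG; the one nontrivial step (that appending an $S^+$-edge to an optimal path to~$u$ genuinely increments the count, with no double counting) is exactly the no-repeated-labels property, and you isolate it cleanly. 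Your proof is shorter and more transparent; the paper's invariant carries more information (the actual subset rather than its cardinality), but that extra information is never used downstream.

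One minor notational slip worth flagging: in the lemma as stated, $S^+$ and $S^-$ are sets of edge \emph{labels}, each label being itself a set of pairs, so the per-edge tests should read $M \notin S^-$ and the increment should be the indicator $[M \in S^+]$, not $M \cap S^- = \emptyset$ and $\card{M \cap S^+}$. Your phrasing implicitly identifies a label with its underlying set of pairs, which coincides with the intended semantics only when all marker edges carry singleton labels. This is exactly the situation in the application (the product DAG of a non-extended VA in Section~\ref{sec:flashlight}), so the slip is harmless there, but the argument as written does not quite cover the general statement of the lemma; replacing the intersection counts by the indicator $[M \in S^+]$ fixes this with no change to the rest of your reasoning.
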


\begin{proof}
  In a first step, we delete from~$G$ all edges with a label which is in~$S^-$.
  This can be done in time $O(\card{G} \times \card{S^-})$, and ensures that no path that we consider contains any label from $S^-$.
  Hence, we can completely ignore~$S^-$ in what follows.

  In a second step, we add a fresh source vertex~$s_0$ and edges with a fresh
  label $l_0$ from~$s_0$ to each vertex in~$\Lambda'$, we add~$l_0$ to~$S^+$, and we set
  $\Lambda' \colonequals \{s\}$. This allows us to assume that the set~$\Lambda'$ is a
  singleton $\{s\}$.

  In a third step, we traverse~$G$ in linear time from $s_0$ with a
  breadth-first search to remove all vertices that are not reachable
  from~$s_0$. Hence, we can now assume that every vertex in~$G$ is
  reachable from~$s_0$; in particular every vertex except $s_0$ has at least one
  predecessor.

  Now, we follow a topological order on~$G$ to give
  a label $\chi(w) \subseteq S^+$ to each vertex $w \in V$ 
  with predecessors $w_1, \ldots, w_p$
  and to give
  a label $\chi(w_i, w) \subseteq S^+$ to each edge $(w_i, w)$ of~$G$,
  as follows:
  \begin{align*}
    \chi(s)\quad&\colonequals\quad \emptyset\\
    \chi(w_i,w)\quad&\colonequals\quad \large(\chi(w_i) \cup \{\mu(w_i,w)\}\large) \cap S^+ \\
    \chi(w)\quad&\colonequals\quad\begin{cases}
      \chi(w_i,w) & \text{if $w$ has a predecessor $w_i$ with } \chi(w_i,w) =
      \bigcup_{1 \leq j \leq p} \chi(w_j,w) \\
      \emptyset & \text{otherwise} \\
    \end{cases}
  \end{align*}

  The topological order can be computed in time $O(\card{G})$ by~\cite[Section
  22.4]{CormenLRS09}, and computing~$\chi$  along this order
  takes time $O(\card{G} \times \card{S^+})$.

  Intuitively, the labels assigned to a vertex $w$ or an edge $(w_i,w)$
  correspond to the subset of labels from $S^+$ that are read on a
  path starting at $s_0$ and using $w$ as the last vertex (resp., $(w_i,w)$ as last
  edge). However, we explicitly label a vertex $w$ with
  $\emptyset$ if there are two paths starting at $s_0$ that have seen a
  different subset of $S^+$ to reach~$w$.
  Indeed, as we know that any label can occur at most once on
  each path, such vertices and edges can never be part of a path that
  contains all labels from $S^+$. We will formalize this intuition below.
  
  The key claim is that, for every vertex~$v$, there is an $S^+/S^-$-path from~$s_0$ to~$v$
  if and only if $\chi(v)= S^+$.
  We first show the forward direction of the key claim. Assume that $\chi(v)= S^+$. We construct a path $P$ by going backwards starting from $v$. We initialize the current vertex $w$ to be~$v$. Now, as long as $\chi(w)$ is non-empty, we pick a predecessor $w_i$ with $\chi(w_i,w) = \chi(w)$, and we know that either $\chi(w_i,w) = \chi(w_i)$ or $\chi(w_i,w) = \chi(w_i) \cup \{\mu(w_i, w)\}$ with $\mu(w_i,w) \in S^+$, and then we assign $w_i$ as our current vertex~$w$. We repeat this process until we reach a current vertex~$w_0$ with $\chi(w) = \emptyset$, which must eventually happen: the DAG is acyclic, and all vertices except~$s_0$ must have a predecessor, and we know by definition that $\chi(s) = \emptyset$. As all elements of~$S^+$ were in~$\chi(w)$, they were all witnessed on $P$, so we know that $P$ is an $S^+/S^-$-path from~$w_0$ to~$v$. Now, we know that there is a path $P'$ from~$s_0$ to~$w_0$ thanks to our third preprocessing step, and we know that $P'$ uses no elements from~$S^-$ by our assumption on the DAG; so the concatenation of $P'$ and $P$ is an $S^+/S^-$-path from~$s_0$ to~$v$.

  We now show the converse direction of the key claim. Assume that there is an $S^+/S^-$-path
  $P= v_1, \ldots, v_r$ from~$v_1 = s_0$ to a vertex~$v_r = v$.
  We show by induction that
  $\chi(v_i)$ is exactly the set of the labels of~$S^+$ that have been seen so far on the path from $s_0$
  to $v_i$. For $v_1=s_0$ this is true by definition. For $i>1$, we
  claim that $\chi(v_i)=\chi(v_{i-1}, v_i)$. By way of contradiction,
  assume this were not the case. 
  This means that $\chi(v_{i-1},v_i)$ is not the union of the
  $\chi(v_{i-1}',v_i)$ where $v_{i-1}'$ ranges over the predecessors of~$v_i$.
  Hence, there is a specific choice of a predecessor $v_{i-1}' \neq v_{i-1}$ such that $\chi(v_{i-1}',v_i)$ contains 
  an $x\in S^+$ that does not appear in~$\chi(v_{i-1}, v_i)$.
  By induction hypothesis, $\chi(v_{i-1})$ contains exactly the labels
  of~$S^+$ that were seen on the path from~$s_0$ to~$v_{i-1}$, and as $x$ is not
  in $\chi(v_{i-1}, v_i)$, we know that 
  $x$ does not appear on the path $v_1\ldots v_i$. 

  Now, we know that $x$ cannot appear on the path $v_i\ldots v_r$ either.
  Indeed, by the definition of~$\chi$, the fact that $x \in \chi(v_{i-1}', v_i)$
  must mean that there is a path from~$s_0$ to~$v_i$ (via~$v_{i-1}'$) where the
  label~$x$ appears. Now, the definition of a mapping DAG ensures that
  $x$ can occur only once on every path of~$G$. Thus, it cannot also appear on the path $v_i\ldots v_r$ that starts at~$v_i$.
  Hence, $x$ does not appear
  in the path 
  $P$ at all, and this contradicts the fact that $P$ is an~$S^+/S^-$-path.

  Thus, we have shown by contradiction that we have indeed
  $\chi(v_i)=\chi(v_{i-1}, v_i)$. This means that $\chi(v_i)$ is exactly the set
  of labels of~$S^+$ that have been seen so far on the path from~$s_0$
  to~$v_i$, so we have established the claim made in the inductive step. But then, since all elements of
  $S^+$ appear on edges in $P$ and are thus added iteratively in the
  construction of the $\chi(v_i)$, we have $S^+=\chi(v_r)= \chi(v)$ as
  desired. This establishes the converse direction of the key claim, and so the
  key claim is proven.

  Hence, thanks to the key claim, once we have computed the labeling~$\chi$, we can compute in
  time $O(\card{G} \times \card{S^+})$ the set~$\Lambda'_2$ by simply
  finding all vertices~$v$ with $\chi(v) = S^+$. This concludes the
  proof.
\end{proof}

We can now use Lemma~\ref{lem:extendToPath}
to prove Theorem~\ref{thm:generalVA}:

\begin{proof}
  Clearly if $\Lambda'$ is the singleton level set consisting only of the final
  vertex, then the set to enumerate is empty and there is nothing to do. Hence,
  in the sequel we assume that this is not the case.

  Let $p$ be the level of $\Lambda'$.  We call $\calK$ the set of possible
  labels at level~$p$, with $\card{\calK}$ being no greater than the alphabet
  size $\Absize$ of~$G$. We fix an arbitrary order
  $m_1< m_2 < \cdots < m_r$ on the elements of~$\calK$. Remember that we want to enumerate 
  $\nextlevel(\Lambda')$, i.e., 
  all pairs $(S^+, \Lambda'')$ of a subset
  $S^+$ of~$\calK$ such that there is an $S^+$-path in~$G$ from a vertex in
  $\Lambda'$ to
  a vertex $v'$ (which will be at level~$p$) with an outgoing $\epsilon$-edge;
  and the set $\Lambda''$ of the targets of these $\epsilon$-edges (at level~$p+1$).
  Let us consider the complete
  decision tree $T_\calK$ on~$m_1, \ldots, m_r$: it is a complete binary
  tree of height~$r + 1$, where, for all $1 \leq i \leq r$, every edge
  at height~$i$ is labeled with $+m_i$ if it is a right child edge and
  with $-m_i$ otherwise. For every node $n$ in the tree, we consider
  the path from the root of~$T_\calK$ to~$n$, and call the \emph{positive
    set} $P_n$ of~$n$ the labels $m$ such that $+m$ appears
  in the path, and the \emph{negative set}~$N_n$ of~$n$ the labels
  $m$ such that $-m$ appears in the path: it is immediate
  that for every node~$n$ of~$T_\calK$ the sets $P_n$ and $N_n$ are a partition
  of $\{m_1, \ldots, m_j\}$ where
  $0 \leq j \leq r$ is the depth of~$n$ in~$T_\calK$.

  We say that a node $n$ of~$T_\calK$ is \emph{good} if there is some $P_n/N_n$-path
  in~$G$ starting at a vertex of $\Lambda'$ and 
  leading to a vertex which has an outgoing $\epsilon$-edge.
  Our goal of determining $\nextlevel(\Lambda')$ can then be rephrased as finding the set of all
  positive sets $P_n$ for all good leaves~$n$ of~$T_\calK$ (and the
  corresponding level set~$\Lambda''$), because 
  there is a clear one-to-one correspondence that sends each subset
  $S\subseteq\calK$ to a leaf $n$ of~$T_\calK$ such that $P_n = S$.

  Observe now that we can use Lemma~\ref{lem:extendToPath}
  to determine in time $O(\card{\Widthfull}\times \card{\calK})$, given a node~$n$
  of~$T_\calK$, whether it is good or bad: call the procedure
  on the subgraph of $G$ that is induced by level $p$ (it has size $\leq
  \Widthfull$) and with the sets
  $S^+ \colonequals P_n$
  and 
  $S^- \colonequals N_n$ (their union has cardinality $\leq \card{\calK}$),
  then check in~$G$ whether one of the vertices returned by the procedure 
  has an outgoing $\epsilon$-edge.
  A naive solution to find the good leaves would then be to test them one by one using
  Lemma~\ref{lem:extendToPath}; but a more efficient idea is to use 
  the structure of~$T_\calK$ and the following facts:
  \begin{itemize}
    \item \emph{The root of~$T_\calK$ is always good.}
      Indeed, $G$ is trimmed, so we know that any $v \in \Lambda'$ has a path to
      some
      $\epsilon$-edge.
\item \emph{If a node is good then all its ancestors are good.}
  Indeed, if $n'$ is
  an ancestor of~$n$, and there is a $P_n/N_n$-path in~$G$ starting at a vertex
  of~$\Lambda'$, then this path is also a $P_{n'}/N_{n'}$ path, because
  $P_{n'}\subseteq P_n$ and $N_{n'} \subseteq N_n$. 
\item \emph{If a node $n'$ is good, then it must have at least one good descendant
  leaf $n$.} Indeed, taking any $P_{n'}/N_{n'}$-path 
  that witnesses that~$n'$ is good, we can take the leaf $n$ to be such that
      $P_n \supseteq P_{n'}$ is exactly the set of labels that occur on the
      path, so that the same path witnesses that~$n$ is indeed good.
  \end{itemize}
  Our flashlight search algorithm will rely on these facts.
  We explore~$T_\calK$ depth-first, constructing it on-the-fly as we visit it,
  and we use Lemma~\ref{lem:extendToPath}
  to guide our search: 
  at a node~$n$ of~$T_\calK$ (inductively assumed
  to be good), we call Lemma~\ref{lem:extendToPath} on
  its two children to determine which of them are good (from the facts above, at
  least one of them must be), and we explore recursively the first
  good child, and then the second good child if there is one.
  When the two children are good, we first explore the child labeled $+m$ before
  exploring the child labeled~$-m$: this ensures that if the empty set is produced
  as a label set in 
  $\nextlevel(\Lambda')$ then we always enumerate it last, as we should.
  Once we reach a leaf~$n$ (inductively assumed to be good) then we output its
  positive set of labels $P_n$.

  It is clear that the algorithm only enumerates label sets which occur in
  $\nextlevel(\Lambda')$.
  What is
  more, as the set of good nodes is upwards-closed in~$T_\calK$, the
  depth-first exploration visits all good nodes of~$T_\calK$, so it visits
  all good leaves and produces all label sets that should occur in
  $\nextlevel(\Lambda')$.
  Now, the delay is
  bounded by $O(\card{\Widthfull} \times \card{\calK}^2)$: indeed, whenever we
  are exploring at any node~$n$, we know that the next good leaf will
  be reached in at most $2 \card{\calK}$ calls to the procedure of
  Lemma~\ref{lem:extendToPath}, and we know that the subgraph of~$G$ induced by
  level~$p$ has size bounded by the complete width $\Widthfull$ of~$G$ so each call
  takes time $O(\card{\Widthfull} \times
  \card{\calK})$, including the time needed to verify
  if any of the reachable vertices~$v'$ has an outgoing $\epsilon$-edge: this
  establishes the delay bound of $O(\card{\Widthfull} \times B^2)$ that we
  claimed. Last, while doing
  this verification, we can produce the set $\Lambda''$ of the targets of these
  edges in the same time bound. This
  set~$\Lambda''$ is correct because any such vertex $v'$ has an outgoing
  $\epsilon$-edge and there is a $P_n/N_n$-path from some vertex $v \in \Lambda'$
  to~$v'$. Now, as $P_n \cup N_n = \calK$ and the path cannot traverse an $\epsilon$-edge, then these paths are
  actually $P_n$-paths (i.e., they exactly use the labels in~$P_n$),
  so~$\Lambda''$ is indeed the set that we wanted
  to produce according to Definition~\ref{def:spath}.
  This concludes the proof.
\end{proof}

\myparagraph{Memory usage}
The recursion depth of Algorithm~\ref{alg:main} 
on general sequential VAs is unchanged, and we can still
eliminate tail recursion for the case $\lvar=\emptyset$ as we did in Section~\ref{sec:enum}.

The local space must now include the local space used by the enumeration scheme
of \nextlevel, of which there is an instance running at every level
on the stack. We need to remember our current position in the binary search
tree: assuming that the order of labels is fixed, it suffices to remember the
current positive set $P_n$ plus the last label in the order on~$\calK$ that we
use, with all other labels being implicitly in~$N_n$. This means that we store one label per level (the last
label), plus the positive labels, so their total number in the stack is at
most the total number of markers, i.e., $O(\card{\calV})$. Hence 
the structure of Theorem~\ref{thm:generalVA} has no effect on the memory usage.

The space usage must also include the space used for one call to the
construction of Lemma~\ref{lem:extendToPath}, only one instance of which is
running at every given time. This space usage is clearly in $O(\card{Q}
\times \card{V})$, so this additive term has again no impact on the memory
usage. Hence, the memory usage of our enumeration algorithm is the same as in
Section~\ref{sec:enum}, i.e.,
$O((r+1) \times \Width)$,
or $O((\card{\calV}+1) \times \card{Q})$ in terms of the VA.



\section{Experimental Validation}
\label{sec:experiments}
Having concluded the proof of our main result, we move on in this section to an
experimental study of a prototype implementation
of our method. A first direct implementation of  our algorithm was developed by
Rémi Dupré during his master thesis, which we further optimized to achieve
better results, in particular to improve the handling of the reachability
matrices and the space usage.
In this section, we present this optimized implementation and show how it performs on several benchmarks. 
Our software is written in Rust and is freely available online\footnote{\url{https://github.com/PoDMR/enum-spanner-rs}} under the BSD
3-clause license.

\myparagraph{Overall design}
Our implementation enumerates the solutions of the evaluation of a nondeterministic  sequential VA over a word.
The nondeterministic sequential VA is given in the input as a regex-formula.
This regex-formula is translated into a nondeterministic sequential VA using  a
variant of Glushkov's algorithm. Note that our implementation uses variable-set
automata so the underlying algorithm could work with any regular spanner, and not only with
hierarchical regular spanners \cite[Theorem~4.6]{FaginKRV15}. As for the input
document, it is provided as a text file.

Our implementation follows the different parts of the algorithm presented in the paper.
The preprocessing phase comprises (i) the construction of the mapping DAG as described in Section~\ref{sec:extended} and modified for non-extended
VAs in Section~\ref{sec:flashlight}; and (ii) the construction of the jump function described in Section~\ref{sec:jump} and all necessary matrices. The enumeration phase  explores the DAG as described in
Section~\ref{sec:enum} and modified for non-extended
VAs in Section~\ref{sec:flashlight}. In particular, we use the flashlight search
approach described in Section~\ref{sec:flashlight}.

\subsection{Optimizations}
Our optimizations focus on three main problems: efficiently managing the mapping
DAG during the preprocessing phase, managing the 
reachability matrices that we build at the end of the preprocessing phase, and
optimizing the enumeration phase.

\myparagraph{Efficient representation of the mapping DAG and efficient exploration}
The first stage of the preprocessing phase is to compute the mapping DAG.  This
DAG is efficiently represented as a bitmap\footnote{The bitmap contains a single
bit for each pair $(q,i) \in Q \times \{0,\dots,|d|\}$ that says whether the
node is part of the trimmed mapping DAG or not. Padding is applied to ensure
that each level starts at a machine word boundary.} in which we store which states are
reachable at each position of the input document. To save space, the
implementation does not actually store any edges of the DAG, as the edges can be
reconstructed on the fly from the automaton and input string.

The second stage is to make this DAG trimmed by exploring it to remove the
vertices that are not co-accessible, i.e., those that have no path to the final
vertex.

\myparagraph{Implementation of the matrices}
The third stage of the preprocessing is to compute the reachability matrices
that are necessary for the jump function, which requires many Boolean matrix
multiplications. We considered using optimized implementations of matrix
multiplication, but these are generally designed for floating-point numbers
rather than Boolean values, so using them would significantly increase the
memory usage. As memory space tends out to be an important bottleneck in our
implementation, we instead implemented our own matrix multiplication code: it
uses the naive matrix multiplication algorithm with three nested loops, but we
optimized it for Boolean matrices as follows.
We store matrices as bitvectors and
pad their width to 8, 16, 32, or a multiple of 64, which reduces their memory
usage. Further, we use
fast bitwise operations in the inner loop of the matrix multiplication
algorithm, which speeds up the multiplication of large matrices by a factor of
up to~64. With this vectorized implementation, the multiplication time grows roughly like $n^2$ for
matrices with width up to~64.	

\myparagraph{Enumeration phase}
After these optimizations to the three stages of the preprocessing phase, our
implementation performs the enumeration phase by traversing the mapping DAG
\emph{in
reverse}, i.e., we explore it backwards from the final vertex to the initial
vertices. Following this reverse order, we then enumerate the mappings seen along the traversed paths as we
previously described in the paper. One advantage of doing enumeration backwards
is that it allows us to skip the trimming step (second stage of the
preprocessing phase): if some vertices of the mapping DAG are not co-accessible,
the enumeration phase will never reach them and the delay bounds are not
affected. However, as we will
later show, in practice the time spent on trimming (second preprocessing stage)
is often recouped during the third preprocessing stage (because it runs faster
when the mapping DAG is smaller).

A more distant benefit of processing the DAG backwards is to later extend our
implementation to support \emph{updates}, i.e., modifications to the underlying
document. A common case of updates is appending characters at the end, which we
believe would be easier to handle when enumeration starts at the end.
Nevertheless, the question of extending the algorithm and implementation to
handle updates is left for future work (see also the discussion in the
conclusion).

\subsection{Experiments}

\myparagraph{Experimental setup and delay measurement}
The tests were run in a virtual machine that had exclusive access to two Xeon
E5-2630 CPU cores. The algorithm is single-threaded, but the additional core was
added to minimize the effects of background activity of the operating system.

Measuring the delays of the algorithm is challenging, because the timescale for
the delays is so tiny that unavoidable hardware interrupts can make a big
difference. To eliminate outliers resulting from such
interrupts, we exploited the fact that our enumeration algorithm is fully deterministic.
We ran the algorithm ten times and recorded all delays. Afterwards, for each
produced result, we took the median of the ten delays we collected.
All measurements related to delays use this approach, e.g., if we compute the
maximum delay for a query, it is actually the maximum over these medians.

We benchmarked our implementation on two data sets: one based on genetic data
and another one based on blog posts using the corpus from~\cite{Schler2006} and
comparing against~\cite{Morciano17}.  We first describe the experiments on DNA
data, and then the experiments on blog posts.

\myparagraph{DNA data}
For our experiments on DNA data, the input document is
the first chromosome of the human genome reference
sequence GRCh38. It contains roughly 250 million base
pairs\footnote{\url{https://www.ncbi.nlm.nih.gov/genome/guide/human/}},
where each base pair is encoded as a single character.
We also use prefixes of this data in the experiments, when we need to
benchmark against input documents of various sizes.

In most queries, there
are no named capture variables, but there is an implicit capture variable which
captures each possible match of the regex as a subword of the input document.
Formally, when we write a query in the sequel as a regular expression $e$
without capture variables, the corresponding spanner is the one described by the
regex-formula $\Sigma^* x \{ e \} \Sigma^*$, where $\Sigma$ is the alphabet and
$x$ is the implicit capture variable.

\pgfplotsset{compat=1.16}
\pgfplotsset{every axis legend/.append style={font=\small,draw=none,fill=none,nodes=right}}
\tikzset{every mark/.append style={scale=.4}}

\begin{figure*}[p]
  \begin{subfigure}[t]{.45\linewidth}
  \begin{tikzpicture}
    \begin{semilogxaxis}[
      tiny,
      width=6.5cm,height=4.3cm,
      xlabel=input length,
      ylabel=$\si{\micro\second}$,
      scaled ticks=false,
      ymin=0,ymax=30,
      xmin=100000,xmax=250000000,
      legend pos = north west,
      legend style={draw=none, fill=none},
      ]
      \addlegendentry{maximal delay}
      \addplot[mark=none, blue] coordinates{(100000,14.205) (200000,14.215) (300000,11.956) (400000,13.152) (500000,12.341) (600000,12.079) (700000,12.45) (800000,13.373) (900000,12.78) (1000000,13.035) (2000000,18.377) (3000000,15.72) (4000000,20.844) (5000000,20.676) (6000000,19.938) (7000000,20.244) (8000000,20.505) (9000000,20.277) (10000000,20.517) (20000000,20.174) (30000000,20.77) (40000000,19.538) (50000000,20.248) (60000000,19.424) (70000000,19.317) (80000000,19.553) (90000000,19.273) (100000000,20.036) (200000000,19.436) (249000000,19.968)
      };
      \addlegendentry{average delay}
      \addplot[mark=none, blue, densely dotted] coordinates{
        (100000,5.2585974842767245) (200000,5.133739423942398) (300000,4.306490843806108) (400000,4.364852454255026) (500000,4.3684565072302545) (600000,4.374252155585368) (700000,4.301187613843331) (800000,4.278030757689433) (900000,4.30310245452555) (1000000,4.370766633064505) (2000000,4.75685004855042) (3000000,4.994418439954344) (4000000,4.967711801896725) (5000000,4.9768138184448925) (6000000,4.882607349920099) (7000000,4.897372697847159) (8000000,4.76368461100297) (9000000,4.7663551524185035) (10000000,4.795502834167714) (20000000,4.573664795399942) (30000000,4.6332530545150176) (40000000,4.522774874101263) (50000000,4.4860051237506634) (60000000,4.481740945665182) (70000000,4.510035881552998) (80000000,4.431496364579849) (90000000,4.417184657049245) (100000000,4.387524697302319) (200000000,4.388901807289517) (249000000,4.391999421885629)
      };
    \end{semilogxaxis}
  \end{tikzpicture}
  \vspace{-2mm}
  \caption{Enumeration delay}\label{fig:constant-delay}
  \end{subfigure}
  \begin{subfigure}[t]{.45\linewidth}
    \begin{tikzpicture}
      \begin{semilogxaxis}[
        axis y line*=left,
      tiny,
      width=6.5cm,height=4.3cm,
      xlabel=input length,
      ylabel=$\si{\kilo\byte/\second}$,
      scaled ticks=false,
      ymin=0,ymax=600,
      xmin=100000,xmax=250000000,
      legend pos = north west,
      legend style={draw=none, fill=none},
      ]
      \addplot[mark=none, blue] coordinates{
        (    100000, 294.02405660727555 )
        (    200000, 299.8904734015233  )
        (    300000, 326.10735099189856 )
        (    400000, 361.19091365996593 )
        (    500000, 358.17280248005494 )
        (    600000, 358.05063292941577 )
        (    700000, 362.60587484828244 )
        (    800000, 357.44919365502807 )
        (    900000, 355.3657803312636  )
        (   1000000, 358.41861389322844 )
        (   2000000, 360.23684066483713 )
        (   3000000, 367.5522539896635  )
        (   4000000, 367.9386955557364  )
        (   5000000, 367.10615756577295 )
        (   6000000, 364.58858520974945 )
        (   7000000, 363.6882249348809  )
        (   8000000, 363.06901501177816 )
        (   9000000, 359.85073490834804 )
        (  10000000, 359.65974629810313 )
        (  20000000, 361.108775050799   )
        (  30000000, 354.94909287987025 )
        (  40000000, 353.5086972589034  )
        (  50000000, 353.27732734701317 )
        (  60000000, 351.8034908434948  )
        (  70000000, 351.62411132817766 )
        (  80000000, 349.6305658766005  )
        (  90000000, 348.4832164028355  )
        ( 100000000, 359.5436209776605  )
        ( 200000000, 362.7138239624778  )
        ( 248956422, 357.1387525237311  )
      };\label{plot:speed}
        \addlegendentry{preprocessing speed (left axis)}
        \addlegendimage{blue,densely dotted}\addlegendentry{index size (right
        axis)}
    \end{semilogxaxis}
      \begin{semilogxaxis}[
        axis y line*=right,
        axis x line=none,
      tiny,
      width=6.5cm,height=4.3cm,
      ylabel=byte/character,
      scaled ticks=false,
      ymin=0,ymax=5,
      xmin=100000,xmax=250000000,
      legend pos = north west,
      legend style={draw=none, fill=none},
      ]
      \addplot[mark=none, blue, densely dotted] coordinates{
        (100000,1.82057) (200000,2.00138) (300000,1.6777633333333333) (400000,1.54518) (500000,1.626954) (600000,1.5544366666666667) (700000,1.6501828571428572) (800000,1.73374625) (900000,1.7720566666666666) (1000000,1.755085) (2000000,1.8740385) (3000000,1.7617693333333333) (4000000,1.9811355) (5000000,2.008437) (6000000,2.051779) (7000000,2.0850814285714288) (8000000,2.100461375) (9000000,2.166551777777778) (10000000,2.2500784) (20000000,2.3009047) (30000000,2.3401606333333334) (40000000,2.352665675) (50000000,2.33700036) (60000000,2.3425518) (70000000,2.322817742857143) (80000000,2.295564725) (90000000,2.278102188888889) (100000000,2.27141104) (200000000,1.990029655) (300000000,2.0442288714)
      };\label{plot:size}
    \end{semilogxaxis}
  \end{tikzpicture}
  \vspace{-6mm}
  \caption{Preprocessing speed and index structure size}\label{fig:constant-speed}
  \end{subfigure}
  \vspace{-2mm} 
  \caption{Enumerating the query \texttt{TTAC.\{0,1000\}CACC} on inputs of
    different lengths }\label{fig:dna-length}
\end{figure*}
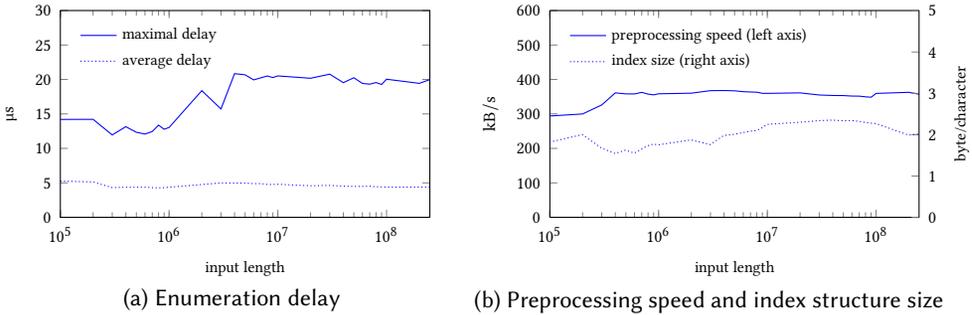

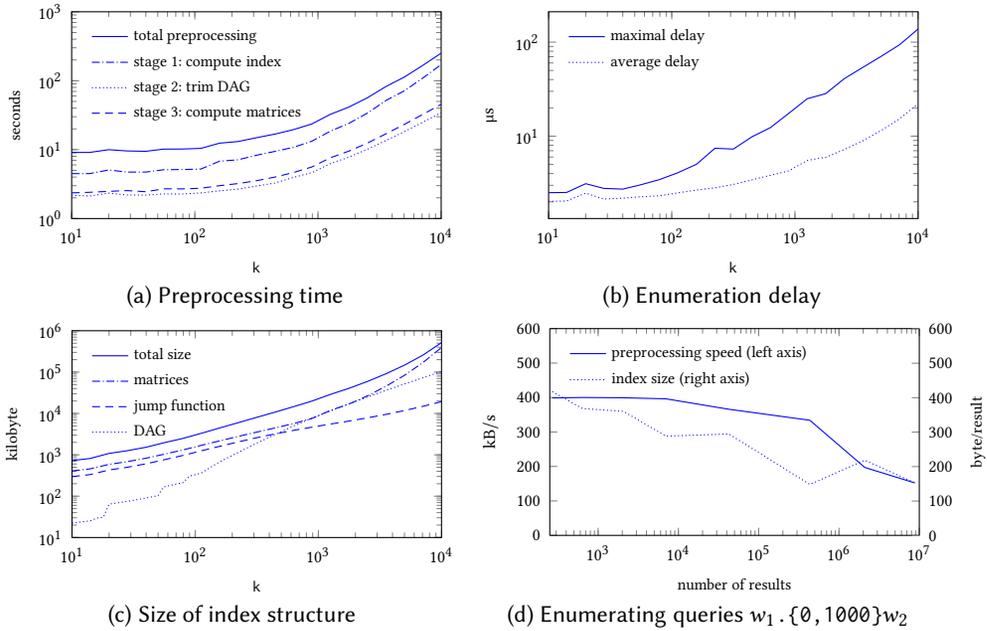
\begin{figure}
  \centering
  \begin{subfigure}[b]{0.45\textwidth}
  \begin{tikzpicture}
    \begin{loglogaxis}[
      tiny,
      width=6.5cm,height=4.3cm,
      xlabel=\texttt{k},
      ylabel=seconds,
      xmin=10,xmax=10000,
      scaled ticks=false,
      ymin=1, ymax=1000,
      legend pos=north west,
      ]
      \addlegendentry{total preprocessing}
      \addplot[mark=none, blue] coordinates{
        (10,9.110272787) (14,9.11394417) (20,9.986815322) (28,9.573104281) (40,9.46125189) (56,10.17851916) (79,10.205250331) (112,10.430345559) (159,12.421670886) (224,13.104789232) (316,14.885338795) (447,16.806374969) (631,19.498003884) (891,23.57954638) (1258,32.490583443) (1778,41.411051553) (2512,56.369008729) (3548,82.02325068) (5012,112.68594598) (7079,166.491545491) (10000,251.128966449)
      };
      \addlegendentry{stage 1: compute index}
      \addplot[mark=none, blue, densely dashdotted] coordinates{
        (10,4.484231682) (14,4.48969947) (20,5.072399691) (28,4.72441337) (40,4.726476748) (56,5.108136657) (79,5.147798361) (112,5.22817164) (159,6.80452766) (224,7.117880489) (316,8.291102945) (447,9.43584087) (631,10.802817928) (891,13.245238287) (1258,18.507286499) (1778,24.017540678) (2512,33.755937541) (3548,51.473584635) (5012,71.168111129) (7079,109.029100177) (10000,171.353101414)
      };
      \addlegendentry{stage 2: trim DAG}
      \addplot[mark=none, blue, densely dotted] coordinates{
        (10,2.172986154) (14,2.125452744) (20,2.337413802) (28,2.197027687) (40,2.189411946) (56,2.27571565) (79,2.268351872) (112,2.354538777) (159,2.523486742) (224,2.673227808) (316,2.96103697) (447,3.278446391) (631,3.951424666) (891,4.628020389) (1258,6.263724861) (1778,7.792615456) (2512,10.108809711) (3548,13.510884469) (5012,18.218609294) (7079,24.913893444) (10000,34.288131455)
      };
      \addlegendentry{stage 3: compute matrices}
      \addplot[mark=none, blue, densely dashed] coordinates{
        (10,2.356725714) (14,2.403902666) (20,2.480545416) (28,2.543750147) (40,2.450156485) (56,2.69954989) (79,2.694668277) (112,2.75238024) (159,2.998342578) (224,3.216946541) (316,3.539353623) (447,3.998787873) (631,4.650039138) (891,5.603919772) (1258,7.615328841) (1778,9.506464271) (2512,12.410163875) (3548,16.919449502) (5012,23.179427594) (7079,32.429207245) (10000,45.369755366)
      };
    \end{loglogaxis}
  \end{tikzpicture}
  \vspace{-2mm}
  \caption{Preprocessing time}\label{fig:dna-preprocess}
  \end{subfigure}
  \begin{subfigure}[b]{0.45\textwidth}
  \begin{tikzpicture}
    \begin{loglogaxis}[
      tiny,
      width=6.5cm,height=4.3cm,
      xlabel=\texttt{k},
      ylabel=$\si{\micro\second}$,
      xmin=10,xmax=10000,
      scaled ticks=false,
      ymin=0,
      legend pos=north west,
      ]
      \addlegendentry{maximal delay}
      \addplot[mark=none, blue] coordinates{
        (10,2.503) (14,2.518) (20,3.121) (28,2.781) (40,2.7319999999999998) (56,3.039) (79,3.4419999999999997) (112,4.081) (159,5.027) (224,7.431) (316,7.288) (447,9.834999999999999) (631,12.308) (891,17.481) (1258,25.002) (1778,28.34) (2512,40.839) (3548,53.742000000000004) (5012,70.628) (7079,93.96000000000001) (10000,138.029)
      };
      \addlegendentry{average delay}
      \addplot[mark=none, blue, densely dotted] coordinates{
        (10,2.027412357286768) (14,2.039152881457152) (20,2.4762730923694742) (28,2.1424429145151977) (40,2.1865869025450175) (56,2.2633380129589553) (79,2.3191014053458177) (112,2.4935029032903953) (159,2.669832982791575) (224,2.8167960214388645) (316,3.0620241438355946) (447,3.427849561365398) (631,3.829068585493755) (891,4.271352616104112) (1258,5.515920401442805) (1778,5.968576492803254) (2512,7.1977282821173105) (3548,8.907971838331683) (5012,11.495170841956417) (7079,15.328623218663125) (10000,22.15585897370659)
      };
    \end{loglogaxis}
  \end{tikzpicture}
  \vspace{-2mm}
  \caption{Enumeration delay}\label{fig:dna-delay}
\end{subfigure}

\begin{subfigure}[b]{.45\textwidth}
  \begin{tikzpicture}
      \begin{loglogaxis}[
        tiny,
        legend pos=north west,
      width=6.5cm,height=4.3cm,
      xlabel=\texttt{k},
      ylabel=kilobyte,
      xmin=10,xmax=10000,
      ymin=10,ymax=1000000,
      ]
      \addlegendentry{total size}
      \addplot[mark=none, blue] coordinates{
        (10,726.5) (14,814.728) (20,1082.288) (28,1254.952) (40,1527.304) (56,1963.236) (79,2479.804) (112,3306.717) (159,4478.893) (224,5989.57) (316,8047.716) (447,10795.384) (631,14628.456) (891,19995.181) (1258,28984.996) (1778,40598.944) (2512,59592.603) (3548,91481.078) (5012,147326.673) (7079,258455.444) (10000,516569.194)
      };
      \addlegendentry{matrices}
      \addplot[mark=none, blue, densely dashdotted] coordinates{
        (10,407.552) (14,457.008) (20,587.648) (28,681.52) (40,829.392) (56,1034.256) (79,1306.504) (112,1693.829) (159,2180.677) (224,2790.922) (316,3539.432) (447,4474.896) (631,5707.244) (891,7618.001) (1258,11885.588) (1778,16954.264) (2512,26729.139) (3548,45678.478) (5012,83799.961) (7079,170439.444) (10000,393791.734)
      };
      \addlegendentry{jump function}
      \addplot[mark=none, blue, densely dashed] coordinates{
        (10,296.448) (14,332.504) (20,430.136) (28,498.696) (40,607.128) (56,760.128) (79,961.368) (112,1250.728) (159,1611.816) (224,2056.696) (316,2591.16) (447,3212.488) (631,3906.832) (891,4663.76) (1258,5548.848) (1778,6558.856) (2512,7839.424) (3548,9507.432) (5012,11761.488) (7079,14892.752) (10000,19300.912)
      };
      \addlegendentry{DAG}
      \addplot[mark=none, blue, densely dotted] coordinates{
        (10,22.5) (12,23.9) (14,25.216) (16,29.592) (18,30.9) (20,64.504) (22,66.896) (24,69.744) (26,72.272) (30,77.432) (34,82.944) (38,88.088) (42,93.192) (46,98.392) (50,103) (55,166.176) (60,176.352) (70,195.6) (80,213.432) (90,306.672) (100,329.632)
        (112,362.16) (159,686.4) (224,1141.952) (316,1917.124) (447,3108) (631,5014.38) (891,7713.42) (1258,11550.56) (1778,17085.824) (2512,25024.04) (3548,36295.168) (5012,51765.224) (7079,73123.248) (10000,103476.548)
      };
    \end{loglogaxis}
  \end{tikzpicture}
  \vspace{-2mm}
  \caption{Size of index structure}\label{fig:memory}
  \end{subfigure}
  \begin{subfigure}[b]{.45\linewidth}
  \begin{tikzpicture}
    \begin{semilogxaxis}[
      axis y line*=left,
      tiny,
      width=6.5cm,height=4.3cm,
      xlabel=number of results,
      ylabel=$\si{\kilo\byte/\second}$,
      xmin=250,xmax=10000000,
      scaled ticks=false,
      ymin=0, ymax=600,
      legend pos=north west,
      ]
      \addlegendentry{preprocessing speed (left axis)}
      \addplot[mark=none, blue] coordinates{
        ( 268,     398.7297817 )
        ( 636,     400.1078780 )
        ( 2073,    399.2407085 )
        ( 6987,    396.2370341 )
        ( 42398,   365.9942727 )
        ( 433733,  334.0039951 )
        ( 2094772, 197.0726706 )
        ( 8818775, 152.1695295 ) 
};
      \addlegendimage{blue,densely dotted}\addlegendentry{index size (right
      axis)}
    \end{semilogxaxis}
      \begin{semilogxaxis}[
        axis y line*=right,
        axis x line=none,
      tiny,
      width=6.5cm,height=4.3cm,
      ylabel=byte/result,
      scaled ticks=false,
      ymin=0, ymax=600,
      xmin=250,xmax=10000000,
      ]
\addplot[mark=none, blue, densely dotted] coordinates{
  (     268, 417.5820895522388  )
  (     636, 368.55345911949684 )
  (    2073, 360.1736613603473  )
  (    6987, 288.3211678832117  )
  (   42398, 294.7068965517241  )
  (  433733, 148.37284919524222 )
  ( 2094772, 218.51496869349026 )
  ( 8818775, 153.08601977031958 )
  };
    \end{semilogxaxis}
  \end{tikzpicture}
  \vspace{-6mm}
  \caption{Enumerating queries $w_1$\texttt{.\{0,1000\}}$w_2$}\label{fig:num-results}
  \end{subfigure}
  \vspace{-2mm}
  \caption{Enumerating the query \texttt{TTAC.\{0,k\}CACC} on an input document
      of 10MB}\label{fig:dna-distance}
\end{figure}

\begin{figure}[p]
  \begin{subfigure}[b]{.45\linewidth}
  \begin{tikzpicture}
    \begin{loglogaxis}[
      tiny,
      width=6.5cm,height=4.3cm,
      xlabel=\texttt{k},
      ylabel=seconds,
      xmin=10,xmax=10000,
      scaled ticks=false,
      ymin=1, ymax=10000,
      legend pos=north west,
      legend cell align={left},
      legend style={draw=none, fill=none},
      ]
      \addlegendentry{naive algorithm}
      \addplot[mark=none, blue] coordinates{
        (10,2.111336581) (20,2.114578843) (30,2.323804534) (40,2.610237678) (50,2.790367583) (60,3.102939141) (70,3.352245458) (80,3.611569778) (90,3.904550122) (100,4.170293109) (200,7.70134593) (300,11.514228319) (400,15.266128676) (500,20.114967267) (600,25.451332556) (700,31.502680604) (800,37.916042989) (900,45.004462904) (1000,52.727963048) (2000,169.428613699) (3000,325.745411653) (4000,550.915585341) (5000,839.744590342) (6000,1193.201670671) (7000,1563.952436663) (8000,2005.437785589) (9000,2522.641751396) (10000,3110.156971909)
      };
      \addlegendentry{our algorithm without trimming}
      \addplot[mark=none, blue, densely dashdotted] coordinates{
        (10,13.369131013999999) (20,13.724621534) (30,13.697517406) (40,13.331870646999999) (50,13.789591616) (60,14.151428544) (70,13.955264785) (80,14.093470114) (90,14.363842229) (100,15.356595313) (200,18.289862856) (300,19.361168124) (400,21.562855470000002) (500,24.116169613) (600,26.349404003) (700,28.687860610999998) (800,31.088948868) (900,33.805459742000004) (1000,39.638983505) (2000,74.52796435799999) (3000,117.982162418) (4000,173.200581097) (5000,254.133383549) (6000,332.244640053) (7000,437.682186679) (8000,558.608510251) (9000,705.878807714) (10000,832.0605903469999)
      };
      \addlegendentry{our algorithm with trimming}
      \addplot[mark=none, blue, densely dotted] coordinates{
        (10,9.598399431999999) (20,9.818714752) (30,9.813461740000001) (40,9.803184748) (50,9.791827184) (60,10.436683974) (70,10.428814282) (80,10.463470186999999) (90,10.987564195000001) (100,10.868275023) (200,13.548014779) (300,15.090098999) (400,16.803753802000003) (500,18.791944606) (600,20.416749876) (700,22.160927941) (800,23.707814239) (900,25.409050993) (1000,28.642667004) (2000,49.599397665000005) (3000,73.94737815900001) (4000,98.30453709) (5000,125.135616965) (6000,151.748394363) (7000,184.092064247) (8000,224.113537023) (9000,263.407462167) (10000,301.760923796)
      };
    \end{loglogaxis}
  \end{tikzpicture}
  \vspace{-2mm}
  \caption{\texttt{TTAC.\{0,k\}CACC} over 10 MB}\label{fig:dna-total}
  \end{subfigure}
  \begin{subfigure}[b]{.45\linewidth}
  \begin{tikzpicture}
    \begin{axis}[
      tiny,
      width=6.5cm,height=4.3cm,
      xlabel=input length in kb,
      ylabel=seconds,
      xmin=0,xmax=1000,
      scaled ticks=false,
      ymin=0, ymax=100,
      legend pos=north west,
      ]
      \addlegendentry{total time of our algorithm with trimming}
      \addplot[mark=none, blue, densely dotted] coordinates{
         (0,0)
         (100, 0.201924426) (200, 0.568422551) (300, 0.755995132) (400, 1.122537109) (500, 1.7418618129999999) (600, 2.268686664)
         (700, 3.177607523) (800, 4.3196733179999995) (    900,    5.830675178  ) (    1000,    8.276031066  ) (    2000,    39.213741157  )
         (    3000,    93.213850034  )  (    4000,    150.699503505  )  (    5000,    201.568608283  )  (    6000,    290.97076789600004  )
         (    7000,    388.069316641  )  (    8000,    488.798788285  )  (    9000,    617.397825965  )  (    10000,    799.394542837  )
       };
      \addlegendentry{total time of naive algorithm}
      \addplot[mark=none, blue] coordinates{
        (0,0)
        (    100,    1.081247625  )
  (    200,    4.448355866  )
  (    300,    9.32394767  )
  (    400,    16.672724621  )
  (    500,    24.492625727  )
  (    600,    35.194997046  )
  (    700,    46.510236923  )
  (    800,    61.346301948  )
  (    900,    77.992458584  )
  (    1000,    96.086964074  )
  (    2000,    355.239119184  )
  (    3000,    744.653131639  )
  (    4000,    1204.96510792  )
  (    5000,    1837.00032146  )
  (    6000,    2677.668440245  )
  (    7000,    3699.045163619  )
  (    8000,    4909.572855352  )
  (    9000,    6362.681426045  )
  (    10000,    8101.346042505  )
       };
    \end{axis}
  \end{tikzpicture}
  \vspace{-6mm}
  \caption{\texttt{TTAC.*CACC}}\label{fig:dna-total2}
  \end{subfigure}
  \vspace{-2mm}
  \caption{Comparing the total enumeration time with a simple algorithm}\label{fig:naive}
\end{figure}
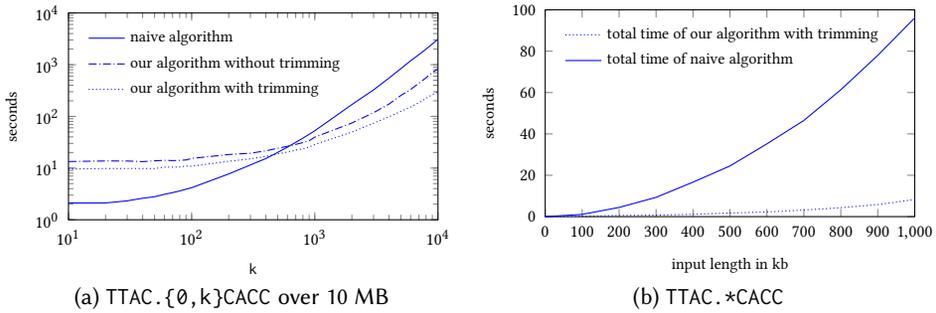


\myparagraph{Close-fragments queries}
Our experiments on DNA data use so-called \emph{close-fragment queries}, where
we search for two DNA fragments $w_1$ and $w_2$ that occur
close to each other. Specifically, we used the query
\texttt{TTAC.\{0,k\}CACC}, with various values of~$k$, for several
different tests which we list below and then present in more detail.
\begin{enumerate}
\item We first verified that the delay is independent from the document length, while the
  preprocessing time and memory usage depends linearly on the document length.
    This is presented in Figure~\ref{fig:dna-length}.
\item We then tested how the preprocessing time, the index structure size and the delay depends on the automaton size. This is
  depicted in Figure~\ref{fig:dna-distance}, where we used a 10 MB prefix of the DNA string and
  used values of $k$ between 10 and 10\,000.
\item Last, we compared the total
  enumeration time with the naive approach that starts one run of the NFA at
  every position of the document.\footnote{Note that this naive approach only works for
    the special case where there is exactly one capture variable that surrounds the
    whole expression. Our implementation has the added advantage of handling
    regular spanners with arbitrarily many capture variables.} We also investigated the effect of skipping the second
    stage of the preprocessing. The results are depicted in Figure~\ref{fig:naive}.
\end{enumerate}

For~(1), we fixed  $k = 1\,000$ and used prefixes of different length of
the DNA string. The results are depicted in Figure~\ref{fig:dna-length}, where
in Figure~\ref{fig:constant-delay}, we depicted the maximal and average delay
encountered during enumeration, while in Figure~\ref{fig:constant-speed}, we
depicted the preprocessing time and size of the index structure divided by the
input length. One can see that the average delay is constant (around five
microseconds allowing to enumerate 200\,000 results per second), while
the maximum delay is roughly four times larger. The preprocessing speed is
roughly 300 to 350 kilobytes per second and the index structure twice as large as the
input document.

\medskip

Towards~(2), we fixed the length of the input to 10 MB and made $k$ vary between
$10$ and $10\,000$. The results are shown in
Figures~\ref{fig:dna-preprocess} and~\ref{fig:dna-delay}.
The most interesting outcome is that the preprocessing is
much faster than the worst case bound of $O(k^4)$.
Analyzing the numbers from Figure~\ref{fig:dna-preprocess} shows
that the preprocessing time grows roughly like $\Theta(k^2)$.
A closer look into the index structure used in the algorithm suggests an
explanation: the width of the mapping DAG seems to grow sublinearly as
a function of~$k$ for this query.

As for the delay in Figure~\ref{fig:dna-delay}, remember that our experiment is about changing the
query, so the delay bound is not supposed to be constant with respect~$k$.
The theoretical bounds suggest that the delay should be $O(k^2)$, which matches
what we obtain experimentally for the maximum delay. The average delay is much
lower.

We also measured the size of our index structure
for the queries \texttt{TTAC.\{0,k\}CACC} after completing the preprocessing and
depicted the results in Figure~\ref{fig:memory}.
The index structure consists of three parts, with the total size being the sum
of these three parts:
\begin{itemize}
  \item DAG: The bitmap storing the states that exist in the (trimmed) DAG. We
    remove the levels to which the algorithm will never jump.
\item Jump function: The jump function, as explained in Section~\ref{sec:jump}.
\item Matrices: All necessary reachability matrices, as explained in the same section.
\end{itemize}
For small automata, the size is dominated by the administrative overhead of the vectors
used to store the jump functions and matrices, while the DAG is
represented in a very compact way as a bitmap. For larger automata, one can see
that the DAG representation uses more space, but the memory
footprint is still dominated by the matrices. Notice that the size of each
level of the DAG is padded to a multiple of~32, hence the bumps of the DAG curve
around the sizes 32 and 64.

A question related to the close-fragment queries \texttt{TTAC.\{0,k\}CACC} is 
to understand if the change in performance across different values of~$k$
is only caused by the change in the number of results.
To experiment with this, we fixed $k=1000$ and benchmarked queries
$w_1$\texttt{.\{0,1\,000\}}$w_2$, where $w_1$ was a prefix of \texttt{TTACGG} and $w_2$ was a
prefix of \texttt{CACCTG}, so as to make the number of results vary 
without changing the size of the automaton too much. The results are depicted in Figure~\ref{fig:num-results}.
The resulting index structure size for these queries indeed depends a lot on the
number of results. This is expected as the index structure only contains
information for levels
that are used as the boundary of at least one span in the results. Specifically,
the size grows slightly sublinearly.
The preprocessing speed (and thus the preprocessing time) is almost constant until the number of results
becomes sufficiently large to be comparable to the input size. This is because,
before that point, the dominating term in the preprocessing time is the processing of
the input and not the computations performed on the DAG.

\medskip

For~(3), we implemented a naive enumeration algorithm that works without any
preprocessing, to serve as a baseline. It evaluates the NFA starting from each position $i$ in the input
document and outputs a pair $(i,j)$ for each position $j$ where the NFA reaches
an accepting state using the standard algorithm that computes for each position
the set of possible states. We do the easy optimization of stopping the run
for a starting position~$i$ if we reach an ending position~$j$ with no more
reachable states. We depicted the total time used for enumeration of our
approach and the naive algorithm in Figure~\ref{fig:naive}, where we ran the
query \texttt{TTAC.\{0,k\}CACC} for various sizes of $k$ on the 10MB prefix of
the DNA sequence (Figure~\ref{fig:dna-total}) and
additionally the query \texttt{TTAC.*CACC} for various prefixes of the input DNA
sequence
(Figure~\ref{fig:dna-total2}). For small
$k$
in Figure~\ref{fig:dna-total},
the naive algorithm has a clear advantage, as it does not need to compute
any index structure. Also, for these queries the runtime is bounded by $O(nk)$,
as all runs of the NFA have a length bounded by at most $k+8$ because we
optimized the baseline algorithms to stop the run early. For larger~$k$,
the naive algorithm is much slower than our approach.
For the query \texttt{TTAC.*CACC} in Figure~\ref{fig:dna-total2}, the naive approach exhibits its $\Theta(n^2)$ worst-case
behaviour, and is much slower than our approach, even for small input documents.

In Figure~\ref{fig:dna-total}, we also have a look on the effect of trimming the DAG
(second stage of the preprocessing). Indeed, while
skipping this trimming stage saves a small amount of time, this is usually
overcompensated by the third preprocessing stage, where we need to compute more and larger
matrices because the unpruned DAG is larger. This can be seen for the query
\texttt{TTAC.\{0,k\}CACC} even for small values for $k$.
Trimming saves more time for larger values of $k$, as more nodes
of the DAG can be pruned.
For the query
\texttt{TTAC.*CACC} in Figure~\ref{fig:dna-total2}, where trimming can only remove a few nodes from the DAG,
the runtime effect of disabling trimming was negligible, i.e., the time savings
from the second stage where almost exactly compensated by the additional work in
the third stage.

\myparagraph{Querying blog posts} We also evaluated our algorithm on roughly 800
megabytes of blog posts using the corpus from~\cite{Schler2006}. To apply our
implementation, we concatenated all blog posts to get a single file and stripped
all characters that did not have a valid UTF-8 encoding. We ran the same queries
used in the master thesis of Morciano~\cite[Chapter~6]{Morciano17}. These queries try to extract
reviews for movies from blog posts. They are built over simple dictionaries
that contain, e.g., synonyms for ``movie'', synonyms for ``actor'', or a list
of genres. These basic spanners are combined to more complex
queries using the union operator and joins of the following form: ``spanner $B$ matches at
most $k$ characters after spanner $A$ matches''.
For instance, the queries $Q_1$ to $Q_4$ are of the form: find a word in the
dictionary $d_1$, and then a word in the dictionary $d_2$ matching at most $k$
characters after the first word.

\begin{table}
  {\renewcommand{\tabcolsep}{4.5pt}
  \begin{tabularx}{\linewidth}{Xrrrrrr}
    \toprule
    {\bfseries Query} & {\bfseries \#states} & {\bfseries \#variables} &
    {\bfseries \#results} & {\bfseries preprocess (s)} & {\bfseries memory 
    (MB)} & {\bfseries time of~\cite{Morciano17} (s)}  \\
    \midrule
$Q_1$ & 40 & 2 & 4\,975 & 772 & 2.72 & $\approx$ 780 \\
$Q_2$ & 211 & 2 & 6\,099 & 1\,057 & 3.70 & $\approx$ 1\,100 \\ 
$Q_3$ & 246 & 2 &  5\,915 & 1\,090 & 3.63 & $\approx$ \,1\,200 \\
$Q_4$ &  52 & 2 &  2\,232 & 771 & 1.22 & $\approx$ 810 \\ 
$Q_5$ & 343 & 6 & 12\,020 & 1\,254 &  8.04 & $\approx$ 2\,780 \\ 
$Q_6'$ & 661 & 8 & 19\,561 & 1\,704 & 16.00 & $\approx$ 4\,330 \\
$Q_7'$ & 805 & 10 & 62\,103 & 1\,948 & 53.36 & $\approx$ 5\,100 \\
$Q_8'$ & 813 & 10 & 70\,509 & 1\,956 & 60.02 & $\approx$ 6\,000 \\
    \bottomrule
  \end{tabularx}
  }
  \caption{Querying blog data}\label{tab:blog-queries}
\end{table}

In Table~\ref{tab:blog-queries}, we give some statistical data over these
queries, and give the running time of our algorithm, its memory usage, and the
approximate times of the implementation of~\cite{Morciano17}. We only report the
time for the preprocessing phase of our algorithm, because the time taken by the enumeration
phase is always less than one second. We stress that the running times
of~\cite{Morciano17} and our running times are not comparable, because the
experimental setup is very different, the hardware in use is not the same, and
the algorithm of~\cite{Morciano17} is not an enumeration algorithm but simply
produces all results. The point of our comparison is not to claim an improvement
in running times relative to~\cite{Morciano17}, but to show that, on this
existing dataset, the total running time of our approach is comparable to that
of their implementation.

Looking into our running times, we notice that the dependency of the
preprocessing time on the automaton size is again much less than the
$O(\card{\calA}^4)$ worst-case bound. Again, this is probably because the matches are sparse, i.e., there are only very few
nodes per level and therefore the matrices are of almost constant size.
Similarly to our experiments on DNA data, the preprocessing time and index structure size show a
dependency on the number of matches, as we need to compute matrices 
for all levels where a
variable is opened or closed for some match. Of course, as our preprocessing is
linear in the input document, this dependency can only hold when the number of results is
at most linear in the document.

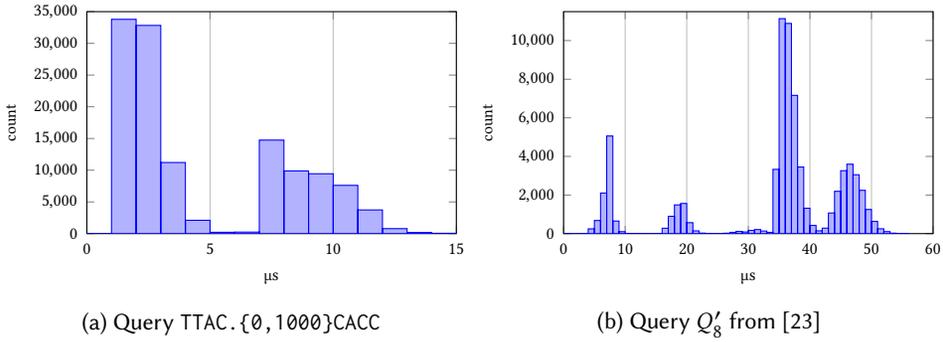
\begin{figure}
  \begin{subfigure}[b]{.45\linewidth}
  \begin{tikzpicture}
    \begin{axis}[
      tiny,
      width=6.5cm,height=4.5cm,
      xlabel=$\si{\micro\second}$,
      ylabel=count,
      ybar interval,
      x tick label as interval=false,
      xmin=0,xmax=15,
      scaled ticks=false,
      xtick={0,5,10,15},
      ymin=0,ymax=35000,
      xtick align=inside,
      ]
      \addplot coordinates{
        (0,0) (1,33787) (2,32827) (3,11211) (4,2102) (5,213) (6,244) (7,14746) (8,9882) (9,9439) (10,7609) 
        (11,3735) (12,786) (13,183) (14,52) (15,20) (16,6) (17,2)
      };
    \end{axis}
  \end{tikzpicture}
  \caption{Query \texttt{TTAC.\{0,1000\}CACC}}
  \end{subfigure} 
  \begin{subfigure}[b]{.45\linewidth}
  \begin{tikzpicture}
    \begin{axis}[
      tiny,
      width=6.5cm,height=4.5cm,
      xlabel=$\si{\micro\second}$,
      ylabel=count,
      ybar interval,
      x tick label as interval=false,
      xmin=0,xmax=60,
      scaled ticks=false,
      xtick={0,10,20,30,40,50,60},
      ymin=0,ymax=11500,
      xtick align=inside,
      ]
      \addplot coordinates{
        (0,0) (1,0) (2,0) (3,3) (4,247) (5,680) (6,2103) (7,5057) (8,651) (9,101) (10,6) 
        (11,2) (12,1) (13,0) (14,0) (15,2) (16,279) (17,897) (18,1488) (19,1562) (20,567)
        (21,142) (22,26) (23,9) (24,0) (25,0) (26,21) (27,65) (28,113) (29,85) (30,167)
        (31,211) (32,134) (33,66) (34,3329) (35,11137) (36,10889) (37,7163) (38,3452) (39,1316) (40,424)
        (41,145) (42,284) (43,1068) (44,2192) (45,3260) (46,3596) (47,3050) (48,2243) (49,1251) (50,633)
        (51,251) (52,105) (53,24) (54,9) (55,1) (56,1) 
      };
    \end{axis}
  \end{tikzpicture}
  \caption{Query $Q_8'$ from~\cite{Morciano17}}
  \end{subfigure}
  \caption{Histogram for delays between two outputs}\label{fig:histogram}
\end{figure}

\myparagraph{Detailed analysis of delay}
We did a more detailed analysis of the various delays that we
obtain while running the enumeration phase of our algorithm.
We show a histogram of the delays 
for the query \texttt{TTAC.\{0,1000\}CACC} on the first 10 megabytes of the
DNA data
in Figure~\ref{fig:histogram}~(a), and
for the query $Q_8'$ from~\cite{Morciano17} on the blog post corpus
in Figure~\ref{fig:histogram}~(b).

One can see that the delay varies, which is expected: our algorithm is constant-delay
in the sense of enforcing a constant upper bound on the delay, but the effective
delay can vary from one output to the next. Specifically, the number of jumps
that need to be performed between two outputs can be any number between one and
the maximal number of variable markers encountered in a single match. Also the time
needed for the flashlight search can vary within given limits.

In 
Figure~\ref{fig:histogram}~(a),
as the DNA query has only one implicit capture variable covering the whole match and thus
two variable markers, we have two spikes in the histogram. The first spike
corresponds to the case where the next matching is found by just changing the
end marker, while the second spike corresponds to the case where the two markers
are changed, so that the
flashlight search and jump functions have to be executed twice.
In
Figure~\ref{fig:histogram}~(b), as the query $Q_8'$ has ten variables, we notice that the maximal delay is larger and there are
more spikes in the histogram.


\section{Conclusion}
\label{sec:conc}
We have shown that we can efficiently enumerate the mappings of sequential
variable-set automata on input documents, achieving linear-time preprocessing
and constant-delay in data complexity, while ensuring that preprocessing and
delay are polynomial in the input VA even if it is not deterministic.
This result was previously considered as unlikely by~\cite{FlorenzanoRUVV18}, and
it improves on the algorithms in~\cite{FreydenbergerKP18}:
with our algorithm, the delay between outputs does not depend on the input document,
whereas it had a linear dependency on the size of the input document in~\cite{FreydenbergerKP18}.

In Section~\ref{sec:experiments}, we did a thorough practical evaluation of our
approach. The most encouraging result is, that for several classes of queries, the algorithm
runs much faster than the theoretical worst case analysis would suggest. An
interesting open question raised by the experimental validation is whether it is
possible to adapt our algorithm to NFAs with counters. We believe that queries
that use a join condition of the form pattern $A$ should be matched near pattern
$B$ are important in practice. These kind of queries intrinsically depend on the
use of counters. As the efficiency of our algorithm crucially depends on the
size of the underlying automata, a more efficient representation of counters that
does not depend on encoding the counter value in the state of the automaton could
allow for big improvements in the runtime.  

We will consider different directions for future works.
A first question 
is how to cope with changes to the input document
without recomputing our enumeration index structure from scratch.
This question has been recently studied for other enumeration algorithms, see
e.g.~\cite{amarilli2018enumeration,BerkholzKS17b,BerkholzKS17,BerkholzKS18,losemann2014mso,Niewerth18,NiewerthS18},
but for atomic update operations: insertion, deletion, and relabelings of single
nodes. However, as spanners 
operate on text, we would like 
to use bulk update operations that modify large parts of the text at once:
cut and paste operations, splitting or joining strings,
or appending at the end of a file and removing from the beginning, e.g., in the case of log files with rotation.
It may be possible to show better bounds for these operations than the ones
obtained by modifying each individual letter~\cite{NiewerthS18,losemann2014mso},
and we believe our implementation could be modified to do so, at least when
appending new content at the end of the document.

Another question is to generalize our result from words to trees, but this is
challenging: the run of a tree automaton is no longer linear in just one
direction, so it is not easy to skip parts of the input similarly to the jump
function of Section~\ref{sec:jump}, or to combine computation that occurs in
different branches.
We already explored this direction of work in our follow-up work
~\cite{amarilli2019enumeration}.

\myparagraph{Acknowledgements} We thank Rémi Dupré for coding the first
implementation of the approach.
The authors have been partially supported by the ANR project EQUUS ANR-19-CE48-0019. Funded by the Deutsche Forschungsgemeinschaft (DFG,  German Research Foundation) – 431183758.


\bibliographystyle{ACM-Reference-Format}
\bibliography{main}


\begin{thebibliography}{32}


\ifx \showCODEN    \undefined \def \showCODEN     #1{\unskip}     \fi
\ifx \showDOI      \undefined \def \showDOI       #1{#1}\fi
\ifx \showISBNx    \undefined \def \showISBNx     #1{\unskip}     \fi
\ifx \showISBNxiii \undefined \def \showISBNxiii  #1{\unskip}     \fi
\ifx \showISSN     \undefined \def \showISSN      #1{\unskip}     \fi
\ifx \showLCCN     \undefined \def \showLCCN      #1{\unskip}     \fi
\ifx \shownote     \undefined \def \shownote      #1{#1}          \fi
\ifx \showarticletitle \undefined \def \showarticletitle #1{#1}   \fi
\ifx \showURL      \undefined \def \showURL       {\relax}        \fi
\providecommand\bibfield[2]{#2}
\providecommand\bibinfo[2]{#2}
\providecommand\natexlab[1]{#1}
\providecommand\showeprint[2][]{arXiv:#2}

\bibitem[\protect\citeauthoryear{Aho, Hopcroft, and Ullman}{Aho
  et~al\mbox{.}}{1974}]%
        {AhoHU74}
\bibfield{author}{\bibinfo{person}{Alfred~V. Aho}, \bibinfo{person}{John~E.
  Hopcroft}, {and} \bibinfo{person}{Jeffrey~D. Ullman}.}
  \bibinfo{year}{1974}\natexlab{}.
\newblock \bibinfo{booktitle}{\emph{The design and analysis of computer
  algorithms}}.
\newblock \bibinfo{publisher}{Addison-Wesley}.
\newblock


\bibitem[\protect\citeauthoryear{Amarilli, Bourhis, Jachiet, and
  Mengel}{Amarilli et~al\mbox{.}}{2017}]%
        {amarilli2017circuit_extended}
\bibfield{author}{\bibinfo{person}{Antoine Amarilli}, \bibinfo{person}{Pierre
  Bourhis}, \bibinfo{person}{Louis Jachiet}, {and} \bibinfo{person}{Stefan
  Mengel}.} \bibinfo{year}{2017}\natexlab{}.
\newblock \showarticletitle{A circuit-based approach to efficient enumeration}.
  In
  \bibinfo{booktitle}{\emph{\mbox{\href{http://icalp17.mimuw.edu.pl/}{ICALP}}}}.
\newblock
\urldef\tempurl%
\url{https://arxiv.org/abs/1702.05589}
\showURL{%
\tempurl}


\bibitem[\protect\citeauthoryear{Amarilli, Bourhis, and Mengel}{Amarilli
  et~al\mbox{.}}{2018}]%
        {amarilli2018enumeration}
\bibfield{author}{\bibinfo{person}{Antoine Amarilli}, \bibinfo{person}{Pierre
  Bourhis}, {and} \bibinfo{person}{Stefan Mengel}.}
  \bibinfo{year}{2018}\natexlab{}.
\newblock \showarticletitle{Enumeration on trees under relabelings}. In
  \bibinfo{booktitle}{\emph{ICDT}}.
\newblock
\urldef\tempurl%
\url{https://arxiv.org/abs/1709.06185}
\showURL{%
\tempurl}


\bibitem[\protect\citeauthoryear{Amarilli, Bourhis, Mengel, and
  Niewerth}{Amarilli et~al\mbox{.}}{2019a}]%
        {amarilli2019constant}
\bibfield{author}{\bibinfo{person}{Antoine Amarilli}, \bibinfo{person}{Pierre
  Bourhis}, \bibinfo{person}{Stefan Mengel}, {and} \bibinfo{person}{Matthias
  Niewerth}.} \bibinfo{year}{2019}\natexlab{a}.
\newblock \showarticletitle{Constant-delay enumeration for nondeterministic
  document spanners}. In
  \bibinfo{booktitle}{\emph{\mbox{\href{http://edbticdt2019.inesc-id.pt/}{ICDT}}}}.
\newblock
\urldef\tempurl%
\url{https://drops.dagstuhl.de/opus/frontdoor.php?source_opus=10324}
\showURL{%
\tempurl}


\bibitem[\protect\citeauthoryear{Amarilli, Bourhis, Mengel, and
  Niewerth}{Amarilli et~al\mbox{.}}{2019b}]%
        {amarilli2019enumeration}
\bibfield{author}{\bibinfo{person}{Antoine Amarilli}, \bibinfo{person}{Pierre
  Bourhis}, \bibinfo{person}{Stefan Mengel}, {and} \bibinfo{person}{Matthias
  Niewerth}.} \bibinfo{year}{2019}\natexlab{b}.
\newblock \showarticletitle{Enumeration on trees with tractable combined
  complexity and efficient updates}. In
  \bibinfo{booktitle}{\emph{\mbox{\href{https://sigmod2019.org/}{PODS}}}}.
\newblock
\urldef\tempurl%
\url{https://arxiv.org/abs/1812.09519}
\showURL{%
\tempurl}


\bibitem[\protect\citeauthoryear{Bagan}{Bagan}{2006}]%
        {bagan2006mso}
\bibfield{author}{\bibinfo{person}{Guillaume Bagan}.}
  \bibinfo{year}{2006}\natexlab{}.
\newblock \showarticletitle{{MSO} queries on tree decomposable structures are
  computable with linear delay}. In \bibinfo{booktitle}{\emph{CSL}}.
\newblock


\bibitem[\protect\citeauthoryear{Berkholz, Keppeler, and Schweikardt}{Berkholz
  et~al\mbox{.}}{2017a}]%
        {BerkholzKS17b}
\bibfield{author}{\bibinfo{person}{Christoph Berkholz}, \bibinfo{person}{Jens
  Keppeler}, {and} \bibinfo{person}{Nicole Schweikardt}.}
  \bibinfo{year}{2017}\natexlab{a}.
\newblock \showarticletitle{Answering conjunctive queries under updates}. In
  \bibinfo{booktitle}{\emph{{PODS}}}.
\newblock
\urldef\tempurl%
\url{https://arxiv.org/abs/1702.06370}
\showURL{%
\tempurl}


\bibitem[\protect\citeauthoryear{Berkholz, Keppeler, and Schweikardt}{Berkholz
  et~al\mbox{.}}{2017b}]%
        {BerkholzKS17}
\bibfield{author}{\bibinfo{person}{Christoph Berkholz}, \bibinfo{person}{Jens
  Keppeler}, {and} \bibinfo{person}{Nicole Schweikardt}.}
  \bibinfo{year}{2017}\natexlab{b}.
\newblock \showarticletitle{Answering {FO+MOD} queries under updates on bounded
  degree databases}. In \bibinfo{booktitle}{\emph{{ICDT}}}.
\newblock
\urldef\tempurl%
\url{https://arxiv.org/abs/1702.08764}
\showURL{%
\tempurl}


\bibitem[\protect\citeauthoryear{Berkholz, Keppeler, and Schweikardt}{Berkholz
  et~al\mbox{.}}{2018}]%
        {BerkholzKS18}
\bibfield{author}{\bibinfo{person}{Christoph Berkholz}, \bibinfo{person}{Jens
  Keppeler}, {and} \bibinfo{person}{Nicole Schweikardt}.}
  \bibinfo{year}{2018}\natexlab{}.
\newblock \showarticletitle{Answering {UCQ}s under updates and in the presence
  of integrity constraints}. In \bibinfo{booktitle}{\emph{{ICDT}}}.
\newblock
\urldef\tempurl%
\url{https://arxiv.org/abs/1709.10039}
\showURL{%
\tempurl}


\bibitem[\protect\citeauthoryear{Cormen, Leiserson, Rivest, and Stein}{Cormen
  et~al\mbox{.}}{2009}]%
        {CormenLRS09}
\bibfield{author}{\bibinfo{person}{Thomas~H. Cormen},
  \bibinfo{person}{Charles~E. Leiserson}, \bibinfo{person}{Ronald~L. Rivest},
  {and} \bibinfo{person}{Clifford Stein}.} \bibinfo{year}{2009}\natexlab{}.
\newblock \bibinfo{booktitle}{\emph{Introduction to Algorithms}
  (\bibinfo{edition}{3rd} ed.)}.
\newblock \bibinfo{publisher}{The MIT Press}.
\newblock
\showISBNx{0262033844, 9780262033848}


\bibitem[\protect\citeauthoryear{Fagin, Kimelfeld, Reiss, and
  Vansummeren}{Fagin et~al\mbox{.}}{2015}]%
        {FaginKRV15}
\bibfield{author}{\bibinfo{person}{Ronald Fagin}, \bibinfo{person}{Benny
  Kimelfeld}, \bibinfo{person}{Frederick Reiss}, {and} \bibinfo{person}{Stijn
  Vansummeren}.} \bibinfo{year}{2015}\natexlab{}.
\newblock \showarticletitle{Document spanners: {A} formal approach to
  information extraction}.
\newblock \bibinfo{journal}{\emph{J. {ACM}}} \bibinfo{volume}{62},
  \bibinfo{number}{2} (\bibinfo{year}{2015}).
\newblock
\urldef\tempurl%
\url{https://pdfs.semanticscholar.org/8df0/ad1c6aa0df93e58071b8afe3371a16a3182f.pdf}
\showURL{%
\tempurl}


\bibitem[\protect\citeauthoryear{Florenzano, Riveros, Ugarte, Vansummeren, and
  Vrgoc}{Florenzano et~al\mbox{.}}{2018}]%
        {FlorenzanoRUVV18}
\bibfield{author}{\bibinfo{person}{Fernando Florenzano},
  \bibinfo{person}{Cristian Riveros}, \bibinfo{person}{Mart{\'{\i}}n Ugarte},
  \bibinfo{person}{Stijn Vansummeren}, {and} \bibinfo{person}{Domagoj Vrgoc}.}
  \bibinfo{year}{2018}\natexlab{}.
\newblock \showarticletitle{Constant delay algorithms for regular document
  spanners}. In \bibinfo{booktitle}{\emph{PODS}}.
\newblock
\urldef\tempurl%
\url{https://arxiv.org/abs/1803.05277}
\showURL{%
\tempurl}


\bibitem[\protect\citeauthoryear{Freydenberger}{Freydenberger}{2017}]%
        {Freydenberger17}
\bibfield{author}{\bibinfo{person}{Dominik~D. Freydenberger}.}
  \bibinfo{year}{2017}\natexlab{}.
\newblock \showarticletitle{A logic for document spanners}. In
  \bibinfo{booktitle}{\emph{ICDT}}.
\newblock
\urldef\tempurl%
\url{http://drops.dagstuhl.de/opus/volltexte/2017/7049/}
\showURL{%
\tempurl}


\bibitem[\protect\citeauthoryear{Freydenberger}{Freydenberger}{2019}]%
        {freydenberger2019logic}
\bibfield{author}{\bibinfo{person}{Dominik~D. Freydenberger}.}
  \bibinfo{year}{2019}\natexlab{}.
\newblock \showarticletitle{A logic for document spanners}.
\newblock \bibinfo{journal}{\emph{Theory of Computing Systems}}
  \bibinfo{volume}{63}, \bibinfo{number}{7} (\bibinfo{year}{2019}).
\newblock
\urldef\tempurl%
\url{https://link.springer.com/article/10.1007%2Fs00224-018-9874-1}
\showURL{%
\tempurl}


\bibitem[\protect\citeauthoryear{Freydenberger and Holldack}{Freydenberger and
  Holldack}{2018}]%
        {FreydenbergerH18}
\bibfield{author}{\bibinfo{person}{Dominik~D. Freydenberger} {and}
  \bibinfo{person}{Mario Holldack}.} \bibinfo{year}{2018}\natexlab{}.
\newblock \showarticletitle{Document spanners: {F}rom expressive power to
  decision problems}.
\newblock \bibinfo{journal}{\emph{Theory Comput. Syst.}} \bibinfo{volume}{62},
  \bibinfo{number}{4} (\bibinfo{year}{2018}).
\newblock
\urldef\tempurl%
\url{https://doi.org/10.1007/s00224-017-9770-0}
\showURL{%
\tempurl}


\bibitem[\protect\citeauthoryear{Freydenberger, Kimelfeld, and
  Peterfreund}{Freydenberger et~al\mbox{.}}{2018}]%
        {FreydenbergerKP18}
\bibfield{author}{\bibinfo{person}{Dominik~D. Freydenberger},
  \bibinfo{person}{Benny Kimelfeld}, {and} \bibinfo{person}{Liat Peterfreund}.}
  \bibinfo{year}{2018}\natexlab{}.
\newblock \showarticletitle{Joining extractions of regular expressions}. In
  \bibinfo{booktitle}{\emph{PODS}}.
\newblock
\urldef\tempurl%
\url{https://arxiv.org/abs/1703.10350}
\showURL{%
\tempurl}


\bibitem[\protect\citeauthoryear{Gall}{Gall}{2012}]%
        {Gall12}
\bibfield{author}{\bibinfo{person}{Fran{\c{c}}ois~Le Gall}.}
  \bibinfo{year}{2012}\natexlab{}.
\newblock \showarticletitle{Improved output-sensitive quantum algorithms for
  Boolean matrix multiplication}. In \bibinfo{booktitle}{\emph{{SODA}}}.
\newblock
\urldef\tempurl%
\url{https://pdfs.semanticscholar.org/91a5/dd90ed43a6e8f55f8ec18ceead7dd0a6e988.pdf}
\showURL{%
\tempurl}


\bibitem[\protect\citeauthoryear{Gall}{Gall}{2014}]%
        {Gall14a}
\bibfield{author}{\bibinfo{person}{Fran{\c{c}}ois~Le Gall}.}
  \bibinfo{year}{2014}\natexlab{}.
\newblock \showarticletitle{Powers of tensors and fast matrix multiplication}.
  In \bibinfo{booktitle}{\emph{{ISSAC}}}.
\newblock
\urldef\tempurl%
\url{https://arxiv.org/abs/1401.7714}
\showURL{%
\tempurl}


\bibitem[\protect\citeauthoryear{Grandjean}{Grandjean}{1996}]%
        {grandjean1996sorting}
\bibfield{author}{\bibinfo{person}{\'Etienne Grandjean}.}
  \bibinfo{year}{1996}\natexlab{}.
\newblock \showarticletitle{Sorting, linear time and the satisfiability
  problem}.
\newblock \bibinfo{journal}{\emph{Annals of Mathematics and Artificial
  Intelligence}} \bibinfo{volume}{16}, \bibinfo{number}{1}
  (\bibinfo{year}{1996}).
\newblock


\bibitem[\protect\citeauthoryear{Losemann and Martens}{Losemann and
  Martens}{2014}]%
        {losemann2014mso}
\bibfield{author}{\bibinfo{person}{Katja Losemann} {and} \bibinfo{person}{Wim
  Martens}.} \bibinfo{year}{2014}\natexlab{}.
\newblock \showarticletitle{{MSO} queries on trees: {E}numerating answers under
  updates}. In \bibinfo{booktitle}{\emph{CSL-LICS}}.
\newblock
\urldef\tempurl%
\url{http://www.theoinf.uni-bayreuth.de/download/lics14-preprint.pdf}
\showURL{%
\tempurl}


\bibitem[\protect\citeauthoryear{Mary and Strozecki}{Mary and
  Strozecki}{2016}]%
        {mary2016efficient}
\bibfield{author}{\bibinfo{person}{Arnaud Mary} {and} \bibinfo{person}{Yann
  Strozecki}.} \bibinfo{year}{2016}\natexlab{}.
\newblock \showarticletitle{Efficient enumeration of solutions produced by
  closure operations}. In \bibinfo{booktitle}{\emph{{STACS}}}.
\newblock
\urldef\tempurl%
\url{http://drops.dagstuhl.de/opus/volltexte/2016/5753/}
\showURL{%
\tempurl}


\bibitem[\protect\citeauthoryear{Maturana, Riveros, and Vrgoc}{Maturana
  et~al\mbox{.}}{2018}]%
        {MaturanaRV18}
\bibfield{author}{\bibinfo{person}{Francisco Maturana},
  \bibinfo{person}{Cristian Riveros}, {and} \bibinfo{person}{Domagoj Vrgoc}.}
  \bibinfo{year}{2018}\natexlab{}.
\newblock \showarticletitle{Document spanners for extracting incomplete
  information: {E}xpressiveness and complexity}. In
  \bibinfo{booktitle}{\emph{PODS}}.
\newblock
\urldef\tempurl%
\url{https://arxiv.org/abs/1707.00827}
\showURL{%
\tempurl}


\bibitem[\protect\citeauthoryear{Morciano}{Morciano}{2017}]%
        {Morciano17}
\bibfield{author}{\bibinfo{person}{Andrea Morciano}.}
  \bibinfo{year}{2017}\natexlab{}.
\newblock \emph{\bibinfo{title}{Engineering a runtime system for {AQL}}}.
\newblock \bibinfo{thesistype}{Master's\ thesis}. \bibinfo{school}{Politecnico
  di Milano}.
\newblock
\urldef\tempurl%
\url{https://www.politesi.polimi.it/bitstream/10589/135034/1/2017_07_Morciano.pdf}
\showURL{%
\tempurl}


\bibitem[\protect\citeauthoryear{Niewerth}{Niewerth}{2018}]%
        {Niewerth18}
\bibfield{author}{\bibinfo{person}{Matthias Niewerth}.}
  \bibinfo{year}{2018}\natexlab{}.
\newblock \showarticletitle{{MSO} queries on trees: {E}numerating answers under
  updates using forest algebras}. In \bibinfo{booktitle}{\emph{LICS}}.
\newblock


\bibitem[\protect\citeauthoryear{Niewerth and Segoufin}{Niewerth and
  Segoufin}{2018}]%
        {NiewerthS18}
\bibfield{author}{\bibinfo{person}{Matthias Niewerth} {and}
  \bibinfo{person}{Luc Segoufin}.} \bibinfo{year}{2018}\natexlab{}.
\newblock \showarticletitle{Enumeration of {MSO} queries on strings with
  constant delay and logarithmic updates}. In \bibinfo{booktitle}{\emph{PODS}}.
\newblock
\urldef\tempurl%
\url{http://www.di.ens.fr/~segoufin/Papers/Mypapers/enum-update-words.pdf}
\showURL{%
\tempurl}


\bibitem[\protect\citeauthoryear{Read and Tarjan}{Read and Tarjan}{1975}]%
        {read1975bounds}
\bibfield{author}{\bibinfo{person}{Ronald~C. Read} {and}
  \bibinfo{person}{Robert~E. Tarjan}.} \bibinfo{year}{1975}\natexlab{}.
\newblock \showarticletitle{Bounds on backtrack algorithms for listing cycles,
  paths, and spanning trees}.
\newblock \bibinfo{journal}{\emph{Networks}} \bibinfo{volume}{5},
  \bibinfo{number}{3} (\bibinfo{year}{1975}).
\newblock


\bibitem[\protect\citeauthoryear{Research}{Research}{2018}]%
        {systemT}
\bibfield{author}{\bibinfo{person}{IBM Research}.}
  \bibinfo{year}{2018}\natexlab{}.
\newblock \bibinfo{title}{{S}ystem{T}}.
\newblock
\newblock
\urldef\tempurl%
\url{https://researcher.watson.ibm.com/researcher/view_group.php?id=1264}
\showURL{%
\tempurl}


\bibitem[\protect\citeauthoryear{Schler, Koppel, Argamon, and
  Pennebaker}{Schler et~al\mbox{.}}{2006}]%
        {Schler2006}
\bibfield{author}{\bibinfo{person}{Jonathan Schler}, \bibinfo{person}{Moshe
  Koppel}, \bibinfo{person}{Shlomo Argamon}, {and} \bibinfo{person}{James~W
  Pennebaker}.} \bibinfo{year}{2006}\natexlab{}.
\newblock \showarticletitle{Effects of age and gender on blogging.}. In
  \bibinfo{booktitle}{\emph{AAAI spring symposium: Computational approaches to
  analyzing weblogs}}, Vol.~\bibinfo{volume}{6}. \bibinfo{pages}{199--205}.
\newblock
\urldef\tempurl%
\url{http://u.cs.biu.ac.il/~koppel/BlogCorpus.htm}
\showURL{%
\tempurl}


\bibitem[\protect\citeauthoryear{Segoufin}{Segoufin}{2014}]%
        {Segoufin14}
\bibfield{author}{\bibinfo{person}{Luc Segoufin}.}
  \bibinfo{year}{2014}\natexlab{}.
\newblock \showarticletitle{A glimpse on constant delay enumeration (Invited
  talk)}. In \bibinfo{booktitle}{\emph{{STACS}}}.
\newblock
\urldef\tempurl%
\url{https://hal.inria.fr/hal-01070893/document}
\showURL{%
\tempurl}


\bibitem[\protect\citeauthoryear{Tsukiyama, Ide, Ariyoshi, and
  Shirakawa}{Tsukiyama et~al\mbox{.}}{1977}]%
        {tsukiyama1977new}
\bibfield{author}{\bibinfo{person}{Shuji Tsukiyama}, \bibinfo{person}{Mikio
  Ide}, \bibinfo{person}{Hiromu Ariyoshi}, {and} \bibinfo{person}{I
  Shirakawa}.} \bibinfo{year}{1977}\natexlab{}.
\newblock \showarticletitle{A new algorithm for generating all the maximal
  independent sets}.
\newblock \bibinfo{journal}{\emph{SIAM J. Comput.}}  \bibinfo{volume}{6}
  (\bibinfo{date}{09} \bibinfo{year}{1977}).
\newblock
\urldef\tempurl%
\url{https://doi.org/10.1137/0206036}
\showDOI{\tempurl}


\bibitem[\protect\citeauthoryear{Valiant}{Valiant}{1979}]%
        {valiant1979complexity}
\bibfield{author}{\bibinfo{person}{L.G. Valiant}.}
  \bibinfo{year}{1979}\natexlab{}.
\newblock \showarticletitle{The complexity of computing the permanent}.
\newblock \bibinfo{journal}{\emph{Theoretical Computer Science}}
  \bibinfo{volume}{8}, \bibinfo{number}{2} (\bibinfo{year}{1979}).
\newblock
\showISSN{0304-3975}
\urldef\tempurl%
\url{https://www.sciencedirect.com/science/article/pii/0304397579900446}
\showURL{%
\tempurl}


\bibitem[\protect\citeauthoryear{Wasa}{Wasa}{2016}]%
        {Wasa16}
\bibfield{author}{\bibinfo{person}{Kunihiro Wasa}.}
  \bibinfo{year}{2016}\natexlab{}.
\newblock \showarticletitle{Enumeration of enumeration algorithms}.
\newblock \bibinfo{journal}{\emph{CoRR}} (\bibinfo{year}{2016}).
\newblock
\urldef\tempurl%
\url{https://arxiv.org/abs/1605.05102}
\showURL{%
\tempurl}


\end{thebibliography}
\end{document}